\theoremstyle{plain}\newtheorem{proposition}{Proposition}[section]
\theoremstyle{plain}\newtheorem{theorem}{Theorem}[section]
\theoremstyle{plain}\newtheorem{lemma}{Lemma}[section]
\theoremstyle{plain}
\theoremstyle{definition}
\NewDocumentCommand{\vc}{g}{
    \IfNoValueTF{#1}
    {%
        {\boldsymbol c}
    }
    {%
        {\boldsymbol #1}
    }%
}
\newcommand{\set}[1]{\{#1\}}
\newcommand{\Lset}[1]{\left\{#1\right\}}
\newcommand{\bset}[1]{[#1]}
\newcommand{\iprod}[2]{\langle #1,\,#2\rangle}
\newcommand{\bprod}[2]{[#1,\,#2]}
\newcommand{\nrm}[1]{\|#1\|}
\newcommand{\nrmop}[1]{\nrm{#1}_{\mathrm{op}}}
\newcommand{\nrmW}[1]{|#1|_W}
\newcommand{\bmtx}[1]{\begin{bmatrix} #1 \end{bmatrix}}
\def\supp{\operatorname{\bf supp}}
\def\diag{\operatorname{\bf diag}}
\def\range{\operatorname{\bf range}}
\def\imag{\operatorname{Im}}
\def\real{\operatorname{Re}}
\def\ev{\operatorname{\bf E}}
\def\sign{\operatorname{\bf sgn}}
\def\vphi{\varphi}
\def\eps{\varepsilon}
\def\lam{\lambda}
\def\th{\theta}
\def\sums{\sum\nolimits}
\def\prods{\prod\nolimits}
\def\bigcaps{\bigcap\nolimits}
\def\bigcups{\bigcup\nolimits}
\def\adj{^{\ast}}
\def\opt{^{\star}}
\def\N{\mathbb{N}}
\def\No{\N_0}
\def\Z{\mathbb{Z}}
\def\R{\mathbb{R}}
\def\Rp{\R_+}
\def\C{\mathbb{C}}
\def\Hp{\mathbb{H}^+}
\def\vPhi{\vc{\Phi}}
\def\vPhip{\vPhi^+}
\def\xio{\xi_0}
\def\vco{\vc_0}
\def\vnull{\vc{0}}
\def\vA{\vc{A}}
\def\vT{\vc{T}}
\def\vM{\vc{M}}
\def\vMp{\vM^+}
\def\vI{\vc{I}}
\def\vJ{\vc{J}}
\def\vP{\vc{P}}
\def\vPp{\vP_{R}}
\def\vR{\vc{R}}
\def\vW{\vc{W}}
\def\cMP{\vc_{\text{MP}}}
\def\cRI{\vc_{\text{RI}}}
\def\whxi{\widehat{\xi}}
\def\whc{\widehat{\vc}}
\def\vs{\vc{s}}
\def\vso{\vs_0}
\def\tvso{\tilde{\vs}_0}
\def\vu{\vc{u}}
\def\vv{\vc{v}}
\def\vx{\vc{x}}
\def\vy{\vc{y}}
\def\vz{\vc{z}}
\def\mZ{\mathcal{Z}}
\def\Di{\Delta}
\def\vDi{\vc{\Di}}
\def\fo{f_0}
\def\tne{\tau_{n_e}}
\def\tS{\tau_S}
\def\gxio{\gamma_{\xio}}
\def\sol{\mathcal{S}}
\def\muo{\mu_{\xio}}
\def\dxi{\partial_{\xi}}
\def\Lo{L_{\xio}}
\def\epso{\eps_0}
\def\phih{\phi_{h}}
\def\Cg{C_{\phi}}
\def\ngh{\mathcal{N}}
\def\Vint{V^{\circ}}
\def\epsg{\eps_g}
\def\phio{\phi_0}
\def\HWfo{H_W^{\fo}}
\def\vo{v_0}
\def\vL{\vc{L}}
\def\LCM{\operatorname{\bf lcm}}
\def\sel{\mathsf{S}}                
\title{Exact local recovery for Chemical Shift Imaging}
\date{\today}	          
\author{ \href{https://orcid.org/0000-0001-8773-3722}{\includegraphics[scale=0.06]{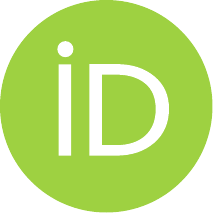}\hspace{1mm}Cristóbal Arrieta} \\
	Faculty of Engineering\\
	Universidad Alberto Hurtado\\
	Santiago, Chile \\
	\texttt{carrieta@uahurtado.cl} \\
	\And
	\href{https://orcid.org/0000-0002-2533-2509}{\includegraphics[scale=0.06]{orcid.pdf}\hspace{1mm}Carlos~A.~Sing~Long} \\
	Institute for Mathematical and \\
    Computational Engineering\\
	and Insitute for Biological \\
    and Medical Engineering\\
	Pontificia Universidad Cat\'olica de Chile\\
	Santiago, Chile \\
	\texttt{casinglo@uc.cl} \\
}
\begin{document}
\maketitle

\begin{abstract}

    Chemical Shift Imaging (CSI) or Chemical Shift Encoded Magnetic Resonance Imaging (CSE-MRI) enables the quantification of different chemical species in the human body, and it is one of the most widely used imaging modalities used to quantify fat in the human body. Although there have been substantial improvements in the design of signal acquisition protocols and the development of a variety of methods for the recovery of parameters of interest from the measured signal, it is still challenging to obtain a consistent and reliable quantification over the entire field of view. In fact, there are still discrepancies in the quantities recovered by different methods, and each exhibits a different degree of sensitivity to acquisition parameters such as the choice of echo times.

    Some of these challenges have their origin in the signal model itself. In particular, it is non-linear, and there may be different sets of parameters of interest compatible with the measured signal. For this reason, a thorough analysis of this model may help mitigate some of the remaining challenges, and yield insight into novel acquisition protocols. In this work, we perform an analysis of the signal model underlying CSI, focusing on finding suitable conditions under which recovery of the parameters of interest is possible. We determine the sources of non-identifiability of the parameters, and we propose a reconstruction method based on smooth non-convex optimization under convex constraints that achieves exact local recovery under suitable conditions. A surprising result is that the concentrations of the chemical species in the sample may be identifiable even when other parameters are not. We present numerical results illustrating how our theoretical results may help develop novel acquisition techniques, and showing how our proposed recovery method yields results comparable to the state-of-the-art.
\end{abstract}

\keywords{Magnetic Resonance Imaging \and Water/fat quantification \and Chemical Shift Imaging \and Sum of exponentials \and Wirtinger flow \and Signal recovery \and Non-convex optimization}

\newlength{\pskip}              
\setlength{\pskip}{4pt}

\linespread{1.1}

\parskip = 1pt

\parskip = \pskip

\section{Introduction}

Magnetic Resonance Imaging (MRI) is one of the most widely used biomedical imaging techniques due in part to its flexible imaging modalities. Among them, Chemical Shift Imaging (CSI)~\cite{brown_nmr_1982,shen_vivo_1999}, which is also known as Chemical-Shift-Encoded MRI (CSE-MRI)~\cite{daude_comparative_2024}, enables the quantification of the concentrations of chemical species that are known {\em a priori} to be in the human body. This modality is routinely used to quantify fat and the Proton Density Fat Fraction (MRI-PDFF) and thus most of the current understanding of this modality is derived from this application. The MRI-PDFF is a critical biomarker used to evaluate hepatic steatosis~\cite{tang_nonalcoholic_2013,byrne_time_2016,wibulpolprasert_correlation_2024} that has been validated exhaustively~\cite{park_differences_2024}.

Computing the MRI-PDFF requires an accurate estimate of the concentrations of both water and fat. The original signal model for the {\em water-fat separation problem}, which assumes a standard gradient echo acquisition, assumes that the signal at each voxel is~\cite{reeder_waterfat_2007}
\[
    s(t,\phi,c_w,c_f) = e^{2\pi i \Delta f t} c_w + e^{2\pi i \phi t}e^{2\pi i \Delta f t}  c_f
\]
where \(c_w\) and \(c_f\) are the concentrations of water and fat, \(\Delta f\) is the {\em chemical shift} of fat with respect to water, and \(\phi\) models the field inhomogeneities or {\em fieldmap}. To recover the concentrations, the signal is first sampled at different {\em echo times} using a multi-echo gradient echo sequence. Then, an iterative non-linear least-squares (NLS) method is used to solve the problem, alternating between the estimation of a linearized fieldmap for fixed concentrations, and estimating the concentrations for a fixed fieldmap. This approach is the so-called IDEAL algorithm~\cite{reeder_multicoil_2004}.

This signal model can be generalized to an arbitrary number of species, or to more complex models for the resonance signal of each species. In the literature, these generalizations have been studied using {\em in silico}, {\em in vitro} and {\em in vivo} data. These studies often focus on fatty liver disease, and highlight the challenges of ensuring an accurate and consistent fat quantification over the entire field of view. This challenge is in large part due to the signal model itself. As can be seen, the concentrations and the fieldmap cannot be uniquely identified from the signal and, as a consequence, the solution found by an iterative method depends critically on the initial condition. In practice this implies that there are {\em water-fat swap artifacts} in which the concentrations of water and fat are incorrectly assigned to the other.

Many modifications have been proposed to avoid this, from improvements to the signal model to improvements on the recovery method. For instance, Hamilton {\em et al.}~\cite{hamilton_vivo_2011} made an important contribution in characterizing the liver triglyceride spectrum {\em in vivo}, using MR spectroscopy. This enables using a {\em multi-peak} fat spectrum model~\cite{yu_multiecho_2008}
\[
    s(t,\phi,c_w,c_f) = e^{2\pi i \phi t} c_w + e^{2\pi i \phi t} \left(\sums_{p=1}^{n_p} w_p e^{2\pi i \Delta f_p t}\right) c_f 
\]
for \(w_1,\ldots, w_{n_p} \geq 0\)  with \(w_1 + \ldots + w_{n_p} = 1\), and where \(\Delta f_1,\ldots, \Delta f_{n_p}\) and \(w_1,\ldots, w_{n_p}\) are the known frequencies and relative amplitudes of the peaks of the fat spectrum. In practice, using a multi-peak fat model significantly improves the accuracy of the MRI-PDFF compared to the {\em single-peak} model~\cite{hong_mri_2018,simchick_fat_2018}. More recently, it has been concluded that it is essential to use a multi-peak fat model to obtain an accurate fat quantification, but that the benefits decrease as the number of peaks increase, leading to equivalent MRI-PDFF~\cite{wang_quantification_2017}. Nowadays the most widely accepted fat model is a 6-peak triglyceride model~\cite{hamilton_vivo_2011}. Another improvement in the signal model is the accounting of the confounder \(R_2^*\) inducing signal decay. Correcting for this effect can improve the accuracy of MRI-PDFF. The decay due to \(R_2^*\) can be modeled by replacing the fieldmap \(\phi\) by a complex variable  \(\xi\)~\cite{yu_multiecho_2007}. Finally, another confounder, the \(T_1\) signal weight, can be minimized by using a small flip angle for the acquisition~\cite{hu_ismrm_2012}. The current clinical guidelines also include the use of a 6-echo acquisition, and they have standardized requirements for the first echo and echo spacing that depend on the main field strength~\cite{henninger_practical_2020}. In contrast, an accurate and precise measurement of \(R_2^*\) to quantify iron concentration requires a first echo and echo spacing of about 1ms at 1.5T. Therefore, a 6-echo sequence is not enough to obtain
reliable \(R_2^*\) measurements, and 8 to 12-echoes have been recommended~\cite{heba_accuracy_2016}. All these guidelines have been elaborated on the basis of empirical and practical experiences.

Other approaches to avoid artifacts rely on improved recovery methods. The first such algorithm was Region-Growing IDEAL, which sorts image voxels using a spiral trajectory, starting from a reliable voxel, to ensure that the final fieldmap is smooth and consistent~\cite{yu_field_2005}. Hernando {\em et al.} introduced a formulation based on variable projection, objective discretization, and a graph-cut algorithm~\cite{hernando_robust_2010}. Tsao and Jiang introduced a formulation based on a multi-scale decomposition of the signal, solving the problem hierarchically and achieving robust results with a fast solver~\cite{tsao_hierarchical_2013}. The two last algorithms have partially inspired the use of regularization for the fieldmap, although this may lead to over-smoothed fieldmaps. The JIGSAW algorithm introduced a local smooth fieldmap, but the model only considers a single-peak fat spectrum~\cite{wenmiao_lu_jigsaw_2011}. FLAME imposes two different models for a fat or water dominant pixels, which are then combined with spatial fieldmap smoothing~\cite{yu_robust_2012}. Max-IDEAL estimates the field map using a convex relaxation of the model and a spatial filter~\cite{soliman_maxideal_2014}. B0-NICE avoids regularizing the fieldmap, using phase unwrapping instead~\cite{liu_method_2015}. In~\cite{berglund_multiscale_2017} a quadratic optimization graph-cut is combined with a multi-scale strategy, while in~\cite{andersson_water-fat_2018} Andersson {\em et al.} added Gaussian smoothing to improve robustness against noise. R-GOOSE uses a surface estimation problem to impose spatial smoothness combined with a multi-scale, non-iterative graph-cut algorithm~\cite{cui_rapid_2018}. This is similar to Stelter {\em et al.}, which introduced a hierarchical multi-resolution approach with multiple graph-cuts focused on water-fat-silicone separation for breast MRI~\cite{stelter_hierarchical_2022}. The systematic review by Daudé {\em et al.} compares many of these algorithms in fair and exhaustive tests using both {\em in silico} and {\em in vitro} phantoms~\cite{daude_comparative_2024}. These benchmarks reveal that, despite the substantial advances in algorithms for water-fat separation, there are still discrepancies between the recovered MRI-PDFF and \(R_2^*\) maps, showing high sensitivity to the choice of echo times, the fat spectrum and the strength of the inhomogeneities causing swap artifacts. This calls to take a step back from algorithm development to carefully analyze the signal model behind CSI. An understanding of this model can yield real insights on the conditions in which the recovery of the quantitative maps is actually possible and reliable.

This highlights both the challenges and the potential of CSI to quantify more than water and fat in the human body, the need to mitigate artifacts, and the importance of leveraging the spatial structure of the fieldmap to correctly recover the concentrations and \(R_2^*\). This leads us to analyze an abstraction of the water-fat separation problem in which the sample comprises many chemical species with their own radiation signal. In contrast to Magnetic Resonance Spectroscopic Imaging (MRSI)~\cite{posse_mr_2013} our model is closely related to CSI where the radiation pattern of the chemical species is known~\cite{mansfield_spatial_1984,trinh_vivo_2020}. The goal of analyzing this model is threefold. First, it allows us identify either obstructions or favorable conditions for the exact recovery of the concentrations, fieldmap, and \(R_2^*\). Second, it allows us to explain experimental results reported in the literature concerning the importance of the choice of echo times, and the smoothness of the fieldmap, allowing for the development of novel recovery techniques and efficient signal acquisition protocols. Third, our analysis yields result that may extend well beyond the water-fat separation problem, enabling the use of this technique to separate other quantities of physiological interest~\cite{shen_vivo_1999,stelter_hierarchical_2022} or other applications~\cite{poorman_multiecho_2019}. Neither our model nor its analysis assumes that the echo times are equispaced, which is a common assumption in methods based on ESPRIT~\cite{gudmundson_esprit-based_2012} and methods based on low-rank Hankel matrices~\cite{guo_fast_2017,guo_improved_2018,ying_vandermonde_2018}. Finally, the model that we analyze shares similarities with other widely used signal models, such as sum-of-exponential models~\cite{bouza_classification_2021} or modulated complex exponential models~\cite{yang_super-resolution_2016}, and thus our results may be readily applied in these models.

\subsection{Contributions}

In this work, perform a detailed analysis of a general separation problem of chemical species using CSI, focusing on finding suitable conditions under which recovery is possible. We determine the source of non-identifiability of the underlying quantities of interest and we propose a reconstruction method based on smooth non-convex optimization under convex constraints with recovery guarantees under suitable conditions. Our contributions are as follows.  

\begin{enumerate}[leftmargin=18pt, label=\roman*.]
    \item{{\bf Identifiability:} We determine the structure of the set of solutions to the inverse problem of characterizing the concentrations, the fieldmap and \(R_2^*\) from the measured signal under favorable conditions. This characterization depends only on the number of species assumed to be in the model, and on the echo times used for the acquisition.
    }
    \item{{\bf Oblique projections:} By leveraging {\em oblique} projections instead of {\em orthogonal} ones we are able to introduce a residual that is amenable both to analysis by means of the Wirtinger calculus and to an efficient computational implementation.
    }
    \item{{\bf Conditions for local convergence to the true parameters:} By leveraging the Wirtinger calculus we are able to show that a simple implementation of gradient descent with fixed stepsize converges to the true parameters in the noiseless case when the initial iterate is sufficiently close. We provide both a careful analysis and empirical evidence of how close this initial iterate should be. 
    }
    \item{{\bf Robustness to noise:} We provide a constrained variation of the method that is robust to noise. We also analyze the case of {\em model mismatch}, that is, when there are other chemical species in the sample that contribute to the measured signal.
    }
    \item{{\bf The imaging problem:} We propose a reconstruction method based on a smooth non-convex problem with convex constraints to address the imaging problem. This optimization problem can be solved with a proximal gradient descent method. We provide conditions under which our method recovers the true concentrations, the fieldmap and \(R_2^*\).
    }
\end{enumerate}

\subsection{Structure}

The manuscript is organized as follow. In Section~\ref{sec:signalModel} we introduce the signal model that constitutes the forward model for our analysis. This leads us to characterize in precise terms the solution set to the associated inverse problem under favorable  conditions. In Section~\ref{sec:theResidual} we leverage the notion of {\em oblique projections} to define a residual depending only on the fieldmap and \(R_2^*\) but not on the concentrations. This leads us to propose the minimization of the magnitude of this residual as the reconstruction procedure. In Section~\ref{sec:wirtingerCalculus} we briefly review the Wirtinger calculus and in Section~\ref{sec:localConvergenceGD} we prove that gradient descent with a fixed-step converges to the true parameter provided that the initial iterate is sufficiently close to the true fieldmap and \(R_2^*\). In Section~\ref{sec:stability} we propose a variation of our problem that is stable under noise and model misspecification. In Section~\ref{sec:imaging} we address the imaging problem, proposing to solve a {\em constrained} optimization problem instead of a {\em regularized} one. We establish connections between the constraints we propose and harmonic fieldmaps, and we prove that under suitable conditions the true parameters are the unique global minimizer to our problem. We defer the proofs of all of our main results to Section~\ref{apx:proofs}. Finally, Section~\ref{sec:experiments} presents the results of our numerical experiments.

\subsection{Preliminaries}

For \(n\in\N\) we define \(\bset{n} := \set{1,\ldots, n}\). Vectors are denoted in lowercase boldface and matrices in uppercase boldface. In \(\C^n\) the zero vector is denoted as \(\vnull_{n}\) and the standard complex inner product as \(\iprod{\cdot}{\cdot}\). The {\em support} of \(\vx\in\C^n\) is the set \(\supp(\vx)\) of indices \(k\in\bset{n}\) for which \(s_k\neq 0\). If \(a_1, \ldots, a_n \in \C\) we denote as \(\diag(a_1,\ldots, a_n)\) the \(n\times n\) matrix with entries \(a_1,\ldots, a_n\) along its diagonal. In \(\C\) we denote the upper half space as \(\Hp\).  We denote the least common multiple of \(p_1,\ldots, p_n\in \Z\) as \(\LCM(p_1,\ldots, p_n)\). If \(U\subset \R^n\) is open we say that \(f: U \to \R\) is {\em real analytic} if at every point in \(U\) there is a neighborhood on which it admits a power series expansion~\cite[Def.~2.2.1]{krantz_primer_2002}.

\section{The signal model}\label{sec:signalModel}

Consider a sample of \(n_s\) chemical species with concentrations \(c_1,\ldots, c_{n_s}\). If the \(\ell\)-th species resonates at a single frequency \(\Delta f_{\ell}\) relative to that of water, called the {\em chemical shift}, then the classical model for the signal generated by the sample is
\begin{equation}\label{eq:ideal:sp}
    s(t, \xi, c_1,\ldots, c_{n_s}) = e^{2\pi i \xi t} \sums_{\ell = 1}^{n_s} c_{\ell} e^{2\pi i \Delta f_\ell t}\quad\mbox{for}\quad \xi = \phi + i r_2^*
\end{equation}
where \(\phi\) is the {\em fieldmap} in Hz and \(r_2^* = R_2^*/2\pi\) is the normalized \(R_2^*\) in Hz. The model~\eqref{eq:ideal:sp} is called {\em single-peak} and can be improved by accounting for multiple resonance frequencies for each species. If \(\Delta f_1,\ldots, \Delta f_{n_p}\) comprise all the possible resonance frequencies for the species in the sample then the {\em multi-peak model} is
\begin{equation}\label{eq:ideal:mp}
    s(t, \xi, c_1,\ldots, c_{n_s}) = e^{2\pi i \xi t} \sums_{\ell = 1}^{n_s} c_{\ell} \vphi_{\ell}(t)
\end{equation}
where
\begin{equation}\label{eq:polyPhi}
    \vphi_{\ell}(t) = \sums_{p = 1}^{n_{p}} w_{\ell, p} e^{2\pi i \Delta f_p t}
\end{equation}
for weights \(w_{\ell, 1},\ldots, w_{\ell,n_p} \geq 0\) such that \(w_{\ell, 1} + \ldots + w_{\ell, n_p} = 1\). The weight \(w_{\ell, p}\) represents the fraction of energy that the \(\ell\)-th chemical species radiates at \(\Delta f_p\). 

Since~\eqref{eq:ideal:mp} generalizes~\eqref{eq:ideal:sp} we shall use the former as our signal model. We do not assume that \(\vphi_1,\ldots, \vphi_{n_s}\) have the form~\eqref{eq:polyPhi} but we do assume that their real and imaginary part are real analytic~\cite{krantz_primer_2002}. Furthermore, we assume that the concentrations can be complex, and that \(\xi \in \Hp\).

The signal~\eqref{eq:ideal:mp} is often sampled at \(n_e\) {\em echo times} \(t_1,\ldots, t_{n_e}\) which we assume are all positive, but may not be equispaced. This leads us to define the \(n_e\times n_s\) {\em model matrix} with entries
\begin{equation}\label{eq:modelMatrix}
    k\in\bset{n_e},\, \ell\in\bset{n_s}:\,\, \Phi_{k,\ell} = \vphi_{\ell}(t_k).
\end{equation}
and the {\em weighting matrix} \(\vW:\C\to \C^{n_e\times n_e}\)
\[
    \vW(\xi) = \diag(e^{2\pi i \xi t_1},\,\ldots,\, e^{2\pi i \xi t_{n_e}}).
\]
It will be useful to define the {\em signal matrix} \(\vM:\C\to \C^{n_e\times n_s}\) as 
\[
    \vM(\xi) = \vW(\xi)\vPhi(\xi)
\]
and the {\em signal map} \(s:\C\times \C^{n_s} \to \C^{n_e}\) as
\begin{equation}\label{eq:signalMap}
    s(\xi, \vc) = \vM(\xi)\vc. 
\end{equation}
Hence, if \(\xio\in \Hp\) represents the true parameter, and \(\vco\in \C^{n_s}\) the true concentrations, then the reconstruction problem consists on finding \(\xi\in \Hp\) and \(\vc\in \C^{n_s}\) such that
\[
    s(\xi, \vc) = \vso\quad\mbox{where}\quad \vso := s(\xio, \vco).
\]
Since the case of interest occurs when \(\vso \neq 0\) we shall assume from now on that \(\vco \neq 0\). Define the {\em solution set}
\[
    \sol(\vso) := \set{(\xi,\vc) \in \Hp\times \C^{n_s}:\,\, s(\xi,\vc) = \vso}.
\]
Ideally the pair \((\xio, \vco)\) would be the only element in this set. However, even under favorable conditions this is not the case. We first address two critical cases in which not even knowledge of either the true concentrations or parameters allows the unique determination of the other, to then identify favorable conditions under which we can characterize the solution.

\subsection{Some critical cases}\label{sec:criticalCases}

To determine how informative is the parameter when estimating the concentrations, observe that when \(n_e < n_s\) the nullspace of \(\vPhi\) is non-trivial. In this case we have that
\[
   \delta\vc\in \ker(\vPhi):\,\,  (\xi, \vc) \in \sol(\vso) \,\,\Rightarrow\,\, (\xi, \vc + \delta\vc) \in \sol(\vso).
\]
and, even if \(\xio\) is known, estimating \(\vco\) is hopeless. In contrast, when \(n_e \geq n_s\) and \(\vPhi\) is full-rank then \(\xio\) completely determines \(\vco\). Hence, from now on we assume that \(n_e \geq n_s\) and that \(\vPhi\) is full-rank. 

In contrast, if \(\vco\) is known, by letting \(\tvso = \vPhi\vco\) we observe that any \(\xi\) for which \((\xi,\vco)\) is a solution satisfies
\[
    \vW(\xi) \tvso = \vW(\xio)\tvso\,\,\Rightarrow\,\,\vW(\xi-\xio)\tvso = \vPhi\tvso.
\]
This implies that
\[
    k\in\supp(\tvso):\,\, e^{2\pi i (\xi - \xio) t_k} = 1.
\]
The number of equations is the size of the support of \(\tvso\). Intuitively, a larger number of equations imposes stronger constrains on the values of \(\xi -\xio\). However, although one expects to increase the number of equations by increasing the number of echos, there may be concentrations for which \(\tvso\) remains very sparse. For such concentrations, increasing the number of echo times does not yield additional information about the parameter.

To determine conditions under which \(\tvso\) is {\em never} too sparse, suppose that {\em every} \(n_s\times n_s\) submatrix of \(\vPhi\) is non-singular. In this case \(\supp(\tvso)^c\) can contain {\em at most} \(n_s -1\) elements, implying that \(\supp(\tvso)\) contains {\em at least} \(n_e - n_s + 1\) elements. In this case, there are always at least \(n_e - n_s + 1\) equations. Interestingly, this behavior is {\em generic} in a sense that we make precise. We defer the proof of the following result to Section~\ref{apx:modelMatrixSubmatrices}.

\begin{lemma}\label{lem:modelMatrixSubmatrices}
    Let \(T \in (0, \infty)\) and \(I = (0, T)\). Suppose that \(\vphi_1,\ldots,\vphi_{n_s}\) are linearly independent on \(I\) and that their real and imaginary parts are real-analytic on \(I\). There exists \(\mZ_{\Phi}\subset I^{n_e}\) with measure zero such that for any choice \((t_1,\ldots, t_{n_e})\notin \mZ_{\Phi}\) every \(n_s\times n_s\) submatrix \(\vPhi\) is non-singular.
\end{lemma}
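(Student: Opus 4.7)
\medskip

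\noindent\textbf{Proof plan.} The plan is to reduce the statement to a standard fact about zero sets of non-trivial real-analytic functions, by producing, for every choice of $n_s$ rows, a single evaluation point that witnesses non-singularity of the corresponding submatrix.

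First, I would fix indices $1 \le k_1 < \cdots < k_{n_s} \le n_e$ and consider the function
\[
    D_{k_1,\ldots,k_{n_s}}(\tau_1,\ldots,\tau_{n_s}) := \det \bigl[\vphi_\ell(\tau_j)\bigr]_{\ell,j=1}^{n_s}
\]
viewed as a complex-valued function of $(\tau_1,\ldots,\tau_{n_s}) \in I^{n_s}$. Since this is a polynomial in the entries $\vphi_\ell(\tau_j)$, and since the real and imaginary parts of each $\vphi_\ell$ are real analytic on $I$, both $\real D_{k_1,\ldots,k_{n_s}}$ and $\imag D_{k_1,\ldots,k_{n_s}}$ are real analytic on $I^{n_s}$.

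The main step is to show that $D_{k_1,\ldots,k_{n_s}}$ does not vanish identically on $I^{n_s}$. I would argue this by induction on $n_s$, using the hypothesis that $\vphi_1,\ldots,\vphi_{n_s}$ are linearly independent on $I$. For $n_s = 1$, linear independence just means $\vphi_1$ is not identically zero. For the inductive step, fix $\tau_1,\ldots,\tau_{n_s-1}$ so that the $(n_s-1)\times(n_s-1)$ submatrix $[\vphi_\ell(\tau_j)]_{\ell,j=1}^{n_s-1}$ is non-singular, and expand $D$ along its last column as a linear combination
\[
    \tau \,\mapsto\, \sum_{\ell=1}^{n_s} (-1)^{\ell+n_s}\, M_\ell\, \vphi_\ell(\tau),
\]
where $M_\ell$ is the minor obtained by deleting row $\ell$ and the last column. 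The inductive hypothesis guarantees $M_{n_s} \neq 0$, so the coefficients are not all zero; hence by linear independence of $\vphi_1,\ldots,\vphi_{n_s}$ on $I$, this function of $\tau$ is not identically zero, and some $\tau_{n_s} \in I$ witnesses non-vanishing of $D_{k_1,\ldots,k_{n_s}}$.

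Given this, at least one of $\real D_{k_1,\ldots,k_{n_s}}$ or $\imag D_{k_1,\ldots,k_{n_s}}$ is a real-analytic function on $I^{n_s}$ that is not identically zero. By the standard fact that the zero set of such a function has Lebesgue measure zero, the zero set of $D_{k_1,\ldots,k_{n_s}}$ itself has measure zero in $I^{n_s}$. Lifting along the remaining $n_e - n_s$ coordinates gives a set of measure zero in $I^{n_e}$. Taking the union over the finitely many choices of indices $k_1<\cdots<k_{n_s}$ yields the desired exceptional set $\mZ_{\Phi}$.

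\medskip

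\noindent\textbf{Main obstacle.} The only nontrivial step is the inductive construction of an evaluation point at which the square submatrix is non-singular; everything else is either bookkeeping or a direct appeal to analyticity. A minor care point is to remember that $D_{k_1,\ldots,k_{n_s}}$ is complex-valued, so non-vanishing of its zero set must be extracted via its real or imaginary part.
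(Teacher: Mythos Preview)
Your proposal is correct and follows essentially the same approach as the paper: show that each $n_s\times n_s$ submatrix determinant is a real-analytic function that is not identically zero (using linear independence and a cofactor/Leibniz expansion in one of the time variables), then invoke the measure-zero property of zero sets of non-trivial real-analytic functions and take a finite union over row selections. Your inductive cofactor expansion along the last column is a slightly cleaner packaging of the paper's iterative argument, which expands via the Leibniz formula and varies the first row variable, but the underlying idea is identical.
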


Consequently, if a \(n_s\times n_s\) submatrix is singular, it suffices to perturb the echo times randomly over a sufficiently small interval to obtain a model matrix with this property. As a consequence of Lemma~\ref{lem:modelMatrixSubmatrices}, additional echo times contribute additional information about the parameter when \(\vco\) is known. We defer the proof of the following result to Section~\ref{apx:periodicitySetOfW}.

\begin{lemma}\label{lem:periodicitySetOfW}
    Let \(\vco\in\C^{n_s}\) and suppose that every \(n_s\times n_s\) submatrix of \(\vPhi\) is non-singular. If there is \(k,\ell\in \supp(\tvso)\) with \(k\neq \ell\) such that the quotient \(t_k / t_\ell\) is irrational, then \(\xio\) is uniquely characterized from \(\tvso\). Otherwise, there exists positive integers \(p, q\in \Z\) depending only on the echo times \(\set{t_k:\, k\in\supp(\tvso)}\) such that any
    \begin{equation}\label{eq:periodicitySetOfW}
        \xi \in \xio + \frac{1}{t_{\max}} \frac{q}{p}\Z
    \end{equation}
    where \(t_{\max} = \max(\set{t_k:\, k\in \supp(\tvso)})\) satisfies \((\xi, \vco) \in \sol(\vso)\).
\end{lemma}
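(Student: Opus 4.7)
The plan is to rewrite the constraint $(\xi,\vco)\in\sol(\vso)$ as a diagonal phase condition on the support of $\tvso$, and then analyze the resulting arithmetic conditions on $\eta := \xi-\xio$.

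Since $s(\xi,\vco)=\vW(\xi)\vPhi\vco=\vW(\xi)\tvso$, the condition $(\xi,\vco)\in\sol(\vso)$ is equivalent to $\vW(\xi)\tvso=\vW(\xio)\tvso$. Because $\vW$ is diagonal, this reduces to $e^{2\pi i\eta t_k}=1$ for every $k\in\supp(\tvso)$. Writing $\eta=a+ib$ and decomposing $e^{2\pi i \eta t_k} = e^{-2\pi b t_k}e^{2\pi i a t_k}$, the modulus condition together with $t_k>0$ forces $b=0$; hence $\eta$ is real and one must have $\eta t_k\in\Z$ for every $k\in\supp(\tvso)$. If some pair $k,\ell\in\supp(\tvso)$ satisfies $t_k/t_\ell\notin\mathbb{Q}$, then $\eta t_k,\eta t_\ell\in\Z$ can only hold when $\eta=0$, since otherwise $t_k/t_\ell=(\eta t_k)/(\eta t_\ell)$ would be a ratio of nonzero integers; this proves the first assertion.

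For the second assertion, assume every quotient $t_k/t_\ell$ with $k,\ell\in\supp(\tvso)$ is rational. Fix any $k_0\in\supp(\tvso)$ and write $t_k=(a_k/b_k)\,t_{k_0}$ in lowest terms with $a_k,b_k\in\N$. Let $B:=\LCM\{b_k:k\in\supp(\tvso)\}$ and $\tau:=t_{k_0}/B$, so that $t_k=m_k\tau$ with $m_k:=a_k B/b_k\in\N$. Setting $\nu:=\eta\tau$, the constraints $\eta t_k\in\Z$ become $\nu m_k\in\Z$ for every $k\in\supp(\tvso)$. A short coprime numerator/denominator argument gives $\bigcap_{k\in\supp(\tvso)}(1/m_k)\Z=(1/g)\Z$, where $g:=\gcd\{m_k:k\in\supp(\tvso)\}$. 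Hence $\eta\in(1/(g\tau))\Z$, and since $t_{\max}=m_{\max}\tau$ with $m_{\max}:=\max m_k$, this set equals $(1/t_{\max})(m_{\max}/g)\Z$, yielding the claim with $q:=m_{\max}/g\in\N$ (because $g\mid m_{\max}$) and $p:=1$. Conversely, for any $\eta$ in this lattice, $\eta t_k = j\,m_k/g\in\Z$ since $g\mid m_k$, so retracing the equivalences gives $\vW(\xi)\tvso=\vW(\xio)\tvso$; moreover $\xi=\xio+\eta\in\Hp$ because $\eta$ is real and $\imag\xio\geq 0$.

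The main obstacle is the arithmetic step that collapses the simultaneous divisibility conditions $\nu m_k\in\Z$ into a single cyclic subgroup $(1/g)\Z$; this is where the rationality hypothesis is essential, and some care is required to track positive integer representatives so that the final lattice can be written in the form $(q/p)/t_{\max}$ with $p,q\in\N$. The rest of the argument is a routine translation between the diagonal identity for $\vW$ and elementary statements about $\eta$, and does not rely on the hypothesis that every $n_s\times n_s$ submatrix of $\vPhi$ be nonsingular, which enters only to guarantee that $\tvso$ carries enough nonzero entries to make the conclusion informative.
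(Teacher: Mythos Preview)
Your proof is correct and follows essentially the same route as the paper: both reduce \((\xi,\vco)\in\sol(\vso)\) to the system \(e^{2\pi i\eta t_k}=1\) on \(\supp(\tvso)\), force \(\eta\in\R\) via the modulus, dispose of the irrational case by the ratio argument, and in the rational case identify the resulting lattice by elementary number theory. The only difference is bookkeeping in that last step---the paper normalizes by \(t_{\max}\) and uses two LCMs to name \(p\) and \(q\), while you pass to a common unit \(\tau\) and a single \(\gcd\), arriving at \(p=1\), \(q=m_{\max}/g\); these describe the same subgroup of \(\R\).
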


If there are at least two {\em incommesurable} echo times with non-zero values then \(\vco\) completely determines \(\xio\). Otherwise, the proof of Lemma~\ref{lem:periodicitySetOfW} shows that we can write \(t_k/t_{\max} = p_k / q_k\) for \(k\in\supp(\tvso)\) and \(p_k,q_k\in \N\) whence
\begin{equation}\label{eq:integersForSolutionSet}
    p = \LCM(\set{p_k:\, k\in\supp(\tvso)})\,\,\mbox{and}\,\, q = \LCM(\set{p q_k / p_k:\, k\in\supp(\tvso)}).
\end{equation}
In practice, echo times have the form \(t_k = t_0 + k\Delta t\) for some \(t_0,\Delta t > 0\) and \(k\in\bset{n_e}\). In this case, if \(t_0 /\Delta t\) is irrational then there are two incommesurable echo times. In fact, if for some \(k\neq \ell\) the quotient \(t_k/t_\ell\) were rational, then there would be \(m_k,m_\ell \in \Z\) such that
\[
    m_\ell k - m_k \ell = \frac{t_0}{\Delta t}(m_k - m_\ell)
\]
contradicting the fact that \(t_0/\Delta t\) is irrational.

In practice, we do not know \(\xio\) nor \(\vco\). However, these results show the impact of the echo times even in this case, and they illustrate the best recovery guarantees that we can have. 

\subsection{Local identifiability}\label{sec:localInvertibility}

Although Lemma~\ref{lem:periodicitySetOfW} implies that we may not be able to recover \(\xio\), even when \(\vco\) is known beforehand, it does show that the set of \(\xi\) for which \((\xi,\vco)\in \sol(\vso)\) is {\em discrete} and thus \(\xio\) is the unique solution on a neighborhood around it. Hence, we say that \((\xio,\vco)\) is {\em locally identifiable} if there exists a neighborhood of \((\xio, \vco)\) containing no other element of \(\sol(\vso)\). Local identifability ensures that by restricting the possible values of \((\xi,\vc)\) we can still uniquely identify \((\xio,\vco)\) from \(\vso\). Our next result provides conditions under which {\em every} \((\xio,\vco)\) is locally identifiable. Once again the conditions are generic, and they depend on the number of echo times. We defer proof to Section~\ref{apx:localIdentifiability}.

\begin{theorem}\label{thm:localIdentifiability}
    Let \(T \in (0, \infty)\) and \(I = (0, T)\). Consider the collection \(\tilde{\vphi}_1,\ldots,\tilde{\vphi}_{2n_s}\) where \(\tilde{\vphi}_{\ell}(t) = \vphi_{\ell}(t)\) for \(\ell\in\set{1,\ldots,n_s}\) and \(\tilde{\vphi}_{\ell}(t) = t \vphi_{\ell}(t)\) for \(\ell\in \set{n_s,\ldots, 2n_s}\). Suppose that \(\tilde{\vphi}_1,\ldots, \tilde{\vphi}_{2 n_s}\) are linearly independent on \(I\) and that their real and imaginary parts are real-analytic on \(I\). If \(n_e \geq 2n_s\) then there exists \(\mZ_{\Phi} \subset (0, \infty)^{n_e}\) with empty interior and Lebesgue measure zero such that for any fixed choice of \((t_1,\ldots, t_{n_e})\notin Z_{\Phi}\) every \(n_s\times n_s\) submatrix of \(\vPhi\) is non-singular and {\em every} parameter \((\xio,\vco)\) is locally identifiable.
\end{theorem}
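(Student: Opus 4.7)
The strategy is to reduce the claim to the non-singularity of submatrices of an augmented model matrix, and then to invoke an immersion argument for the signal map. First I would apply Lemma~\ref{lem:modelMatrixSubmatrices} to the augmented collection $\tilde{\vphi}_1,\ldots,\tilde{\vphi}_{2n_s}$, which by hypothesis is linearly independent on $I$ with real-analytic real and imaginary parts. Since $n_e\geq 2n_s$, this produces a set $\mZ_\vPhi\subset I^{n_e}$ of Lebesgue measure zero (and empty interior, by the identity theorem for real-analytic functions applied to the minors involved) such that for $(t_1,\ldots,t_{n_e})\notin \mZ_\vPhi$ every $2n_s\times 2n_s$ submatrix of
\[
    \tilde{\vPhi} := \begin{bmatrix} \vPhi & \vT\vPhi \end{bmatrix},\qquad \vT := \diag(t_1,\ldots,t_{n_e}),
\]
is non-singular. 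In particular, $\tilde\vPhi$ has full column rank $2n_s$, and since every $n_s\times n_s$ submatrix of $\vPhi$ is also an $n_s\times n_s$ submatrix of $\tilde\vPhi$, it is non-singular as claimed.

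Next, for any $(\xio,\vco)$ with $\vco\neq\vnull$ I would compute the complex Jacobian of the signal map $s(\xi,\vc)=\vW(\xi)\vPhi\vc$ at $(\xio,\vco)$. Direct differentiation yields
\[
    J(\xio,\vco) := \begin{bmatrix} \partial_\xi s(\xio,\vco) & \partial_\vc s(\xio,\vco) \end{bmatrix} = \vW(\xio)\begin{bmatrix} 2\pi i\,\vT\vPhi\vco & \vPhi \end{bmatrix}.
\]
Since $\vW(\xio)$ is diagonal with non-zero entries, and hence invertible, $J(\xio,\vco)$ has the same column rank as $\begin{bmatrix} 2\pi i\,\vT\vPhi\vco & \vPhi \end{bmatrix}$. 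To show this rank is the full value $n_s+1$, suppose $\alpha\in\C$ and $\vc{d}\in\C^{n_s}$ satisfy $\alpha\cdot 2\pi i\,\vT\vPhi\vco+\vPhi\vc{d}=\vnull$. Regrouping the columns according to the block structure of $\tilde\vPhi$, this is equivalent to
\[
    \tilde\vPhi\begin{bmatrix} \vc{d} \\ 2\pi i \alpha\,\vco \end{bmatrix} = \vnull.
\]
The full column rank of $\tilde\vPhi$ forces $\vc{d}=\vnull$ and $\alpha\vco=\vnull$, and since $\vco\neq\vnull$ the latter gives $\alpha=0$, so the kernel is trivial.

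Finally, since $s$ is holomorphic in $(\xi,\vc)$, full complex column rank of $J(\xio,\vco)$ is equivalent to full real column rank $2(n_s+1)$ of the Jacobian of $s$ viewed as a real-analytic map $\R^{2(n_s+1)}\to\R^{2n_e}$. The local immersion theorem (or equivalently, applying the inverse function theorem after composing with a left inverse of $J(\xio,\vco)$) then yields a neighborhood of $(\xio,\vco)$ on which $s$ is injective, which is precisely local identifiability. The main obstacle is the rank calculation in the middle step: the critical observation is that the Jacobian repackages neatly in terms of the augmented matrix $\tilde\vPhi$, with the $\xi$-derivative contributing the $\vT\vPhi$ block. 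This repackaging is what motivates the augmented hypothesis list $\tilde\vphi_1,\ldots,\tilde\vphi_{2n_s}$ and explains why the bound is $n_e\geq 2n_s$, rather than the weaker $n_e\geq n_s+1$ that a naive column count of $J(\xio,\vco)$ alone would suggest.
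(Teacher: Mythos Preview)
Your approach is essentially identical to the paper's: compute the holomorphic Jacobian of \(s\), reduce its full rank to the full column rank of \(\tilde\vPhi=[\vPhi\ \vT\vPhi]\), obtain the latter generically via Lemma~\ref{lem:modelMatrixSubmatrices} applied to the augmented list, and conclude by a holomorphic immersion argument. The paper packages the rank reduction as a separate lemma (Lemma~\ref{lem:differentialMatrixFullRank}) and cites Corollary~2.6 of Range for the immersion step, but the substance is the same.

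There is one genuine gap. When you apply Lemma~\ref{lem:modelMatrixSubmatrices} to \(\tilde\vphi_1,\ldots,\tilde\vphi_{2n_s}\), the conclusion is that every \(2n_s\times 2n_s\) submatrix of \(\tilde\vPhi\) is non-singular. You then write ``since every \(n_s\times n_s\) submatrix of \(\vPhi\) is also an \(n_s\times n_s\) submatrix of \(\tilde\vPhi\), it is non-singular as claimed.'' This does not follow: the lemma says nothing about \(n_s\times n_s\) submatrices of \(\tilde\vPhi\), and non-singularity of all \(2n_s\times 2n_s\) minors does not in general force non-singularity of the \(n_s\times n_s\) minors supported on the first \(n_s\) columns. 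The fix is exactly what the paper does: since the linear independence of \(\tilde\vphi_1,\ldots,\tilde\vphi_{2n_s}\) implies that of \(\vphi_1,\ldots,\vphi_{n_s}\), apply Lemma~\ref{lem:modelMatrixSubmatrices} a second time to the original list to obtain a second exceptional set, and take the union of the two bad sets as your \(\mZ_\vPhi\).
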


Therefore, when \(n_e \geq 2n_s\) then every \((\xio,\vco)\) is generically locally identifiable. We call these {\em favorable conditions}. In contrast, when \(n_s \leq n_e \leq 2n_s\) then some parameters are locally identifiable while other may not be. We provide sufficient conditions to determine when there is no local identifiability. Although its use is somewhat limited, it highlights the impact of the choice of echo times. We defer its proof to Section~\ref{apx:localIdentifiability}.

\begin{proposition}\label{prop:notLocalIdentifiable}
    Suppose that \(n_s \leq n_e \leq 2n_s\). If \((\xio,\vco)\) is not locally identifiable then there exists concentrations \(\vc\) such that
    \[
        \vT \vso = s(\xio,\vc) \,\,\mbox{where}\,\, \vT = \diag(t_1,\ldots, t_{n_e}).
    \]
\end{proposition}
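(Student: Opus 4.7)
The plan is to exploit that non-local-identifiability provides a sequence of distinct solutions converging to $(\xio, \vco)$, and then pass to a limit in a normalized version of the first-order constraint they satisfy. Concretely, the hypothesis yields a sequence $(\xi_n, \vc_n) \in \sol(\vso)$ with $(\xi_n, \vc_n) \to (\xio, \vco)$ and $(\xi_n, \vc_n) \neq (\xio, \vco)$ for each $n$; write $\delta\xi_n = \xi_n - \xio$ and $\delta\vc_n = \vc_n - \vco$, not simultaneously zero.

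The first step is to Taylor expand $s(\xi, \vc) = \vW(\xi)\vPhi\vc$ around $(\xio, \vco)$. The map $s$ is holomorphic in the complex scalar $\xi$ and linear in $\vc$, and since $\vW(\xi) = \diag(e^{2\pi i \xi t_1},\ldots, e^{2\pi i \xi t_{n_e}})$, one has $\partial_{\xi} \vM(\xi) = 2\pi i\, \vT\, \vM(\xi)$. Equating the Taylor expansion of $s(\xi_n, \vc_n)$ to $\vso$ and collecting leading-order terms yields
\[
    \vM(\xio)\, \delta\vc_n + 2\pi i\, \delta\xi_n\, \vT \vso = O\bigl(|\delta\xi_n|^2 + |\delta\xi_n|\, \nrm{\delta\vc_n}\bigr).
\]
Set $r_n = \nrm{(\delta\xi_n, \delta\vc_n)} > 0$, $\alpha_n = \delta\xi_n / r_n$, and $\vu_n = \delta\vc_n / r_n$, so that $\nrm{(\alpha_n, \vu_n)} = 1$. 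Dividing the identity by $r_n$ gives
\[
    \vM(\xio)\, \vu_n + 2\pi i\, \alpha_n\, \vT \vso = O(r_n) \to 0.
\]
Compactness of the unit sphere allows passing to a subsequence with $(\alpha_n, \vu_n) \to (\alpha, \vu)$ and $\nrm{(\alpha, \vu)} = 1$; taking the limit yields $\vM(\xio)\vu + 2\pi i\, \alpha\, \vT\vso = 0$.

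The closing step — and the one requiring the most care — is ruling out $\alpha = 0$. If $\alpha = 0$ then $\vM(\xio)\vu = 0$; since $\vW(\xio)$ is an invertible diagonal matrix and $\vPhi$ has full column rank by the standing assumption, $\vM(\xio) = \vW(\xio)\vPhi$ is injective on $\C^{n_s}$, forcing $\vu = 0$ and contradicting $\nrm{(\alpha, \vu)} = 1$. Hence $\alpha \neq 0$, and rearranging gives
\[
    \vT\vso = \vM(\xio)\Bigl(-\tfrac{\vu}{2\pi i\, \alpha}\Bigr) = s\Bigl(\xio,\, -\tfrac{\vu}{2\pi i\, \alpha}\Bigr),
\]
which is the stated conclusion with $\vc = -\vu/(2\pi i\, \alpha)$. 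Note that the hypothesis $n_s \leq n_e \leq 2n_s$ is not used in the argument itself; it only delimits the regime in which the proposition is of interest, since for $n_e \geq 2n_s$ Theorem~\ref{thm:localIdentifiability} already excludes non-local-identifiability generically.
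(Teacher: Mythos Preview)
Your argument is correct and follows essentially the same route the paper takes: both reduce non-local-identifiability to the failure of injectivity of the differential $Ds(\xio,\vco) = \bmtx{2\pi i\,\vT\vM(\xio)\vco & \vM(\xio)}$, then use the standing full-rank assumption on $\vPhi$ to force the scalar component of the kernel vector to be nonzero and rearrange. The only difference is in how that first step is justified: the paper invokes a holomorphic rank theorem (Corollary~2.6 in~\cite{range_holomorphic_1986}) to pass from ``differential full-rank'' to ``locally injective'' and then takes the contrapositive, whereas you bypass that citation entirely by extracting a unit tangent direction directly from a convergent sequence of distinct solutions via Taylor expansion and compactness. Your version is more self-contained and elementary; the paper's is shorter once the external result is granted. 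Your closing remark that the hypothesis $n_s\leq n_e\leq 2n_s$ plays no role in the logic is also accurate.
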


\subsection{The structure of the solution set}\label{sec:solutionSet}

To characterize the solution set under favorable conditions, observe first that \(s(\xio + \eta, \vc) = s(\xio,\vco)\) if and only if
\[
    \vW(\xio) (\vW(\eta)\vPhi \vc - \vPhi \vco) = \vnull_{n_e}.
\]
This leads us to define \(\vDi : \C \times \C^{n_s}\times \C^{n_s} \to \C^{n_e \times 2n_s}\) as
\[
    \vDi(\eta) := \bmtx{\vW(\eta) \vPhi & \vPhi}.
\]
The nullspace of \(\vDi(\eta)\) yields substantial information about the solution set. We have that \((\xio+\eta,\vc) \in \sol(\vso)\) only if \(\ker(\vDi(\eta))\) is non-trivial. The converse is more subtle. Define 
\[
    Z_c(\eta) := \Lset{\vco \in \C:\, \exists\, \vc\in \C^{n_s}:\, \bmtx{\vc \\ -\vco}\in \ker(\vDi(\eta))}.
\]
This space represents {\em all} the \(\vco\) for which there exists some other concentration \(\vc\) for which \((\xio + \eta, \vc) \in \sol(\vso)\). Surprisingly, in special cases it is the case that \(\vc = \vco\). For example, when \(\vW(\eta) = \vI_{n_e}\) it is straightforward to see that \(\mZ_c(\eta) = \C^{n_s}\) and {\em every} \(\vco\) belongs to \(Z_c(\eta)\). However, from
\[
    \vnull_{n_e} = \bmtx{\vPhi & \vPhi}\bmtx{\vc \\ -\vco} = \vPhi(\vc - \vco)\,\,\Rightarrow\,\, \vc = \vco
\]
it follows that \(\vc = \vco\) even if \(\ker(\vDi(\eta))\) is non-trivial and \(\vco\in Z_c(\eta)\). The following proposition characterizes these cases. We defer the proof to Section~\ref{apx:solutionSet}.

\begin{proposition}\label{prop:recoverableConcentrationSet}
    Suppose that \(n_e \geq n_s\) and that every \(n_s\times n_s\) submatrix of \(\vPhi\) is non-singular. Let \(\vco\in \C^{n_s}\) be non-zero. Then 
    \[
        \bmtx{\vco \\ -\vco}\in\ker(\vDi(\eta))
    \]
    only if \(\imag(\eta) = 0\) and \(\eta\) belongs to a discrete set \(\Xi(\vco)\) depending only on the support of \(\tvso\) and the echo times. If there is \(k,\ell\in \supp(\tvso)\) with \(k\neq \ell\) such that the quotient \(t_k / t_\ell\) is irrational, then \(\Xi(\vco) = \set{0}\). Otherwise, there exists positive integers \(p, q\in \Z\) depending only on the echo times \(\set{t_k:\, k\in\supp(\tvso)}\) such that
    \[
        \Xi(\vco) = \frac{1}{t_{\max}} \frac{q}{p}\Z
    \]
    where \(t_{\max} = \max(\set{t_k:\, k\in\supp(\tvso)})\).
\end{proposition}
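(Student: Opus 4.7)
The plan is to reduce the kernel membership condition to the scalar equations already analyzed in Lemma~\ref{lem:periodicitySetOfW}, after separating the real and imaginary parts of $\eta$. First I would expand
\[
    \vDi(\eta)\bmtx{\vco \\ -\vco} = \vW(\eta)\vPhi\vco - \vPhi\vco = (\vW(\eta) - \vI_{n_e})\tvso,
\]
which vanishes iff $(e^{2\pi i \eta t_k} - 1)(\tvso)_k = 0$ for every $k \in \bset{n_e}$, equivalently $e^{2\pi i \eta t_k} = 1$ for every $k \in \supp(\tvso)$. The hypothesis that every $n_s \times n_s$ submatrix of $\vPhi$ is non-singular, combined with $\vco \neq \vnull_{n_s}$, forces $\tvso \neq \vnull_{n_e}$, so $\supp(\tvso)$ is non-empty.

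Next, I would extract the imaginary part. Writing $\eta = a + ib$, the modulus $|e^{2\pi i \eta t_k}| = e^{-2\pi b t_k}$ equals one iff $b t_k = 0$; since every $t_k > 0$ and $\supp(\tvso)$ is non-empty, this forces $b = 0$, i.e., $\imag(\eta) = 0$. Once $\eta$ is real, the condition collapses to $\eta t_k \in \Z$ for every $k \in \supp(\tvso)$, which is precisely the system treated in Lemma~\ref{lem:periodicitySetOfW}. Invoking its conclusion yields the dichotomy asserted by the proposition: either some pair of indices $k, \ell \in \supp(\tvso)$ has $t_k/t_\ell$ irrational, in which case $\eta \neq 0$ would give $t_k/t_\ell = (\eta t_k)/(\eta t_\ell) \in \mathbb{Q}$, a contradiction, so $\Xi(\vco) = \set{0}$; or all such ratios are rational, and the candidate set is $(1/t_{\max})(q/p)\Z$ with $p, q$ as in \eqref{eq:integersForSolutionSet}.

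The main obstacle is verifying this second description as an equality rather than merely a containment, since Lemma~\ref{lem:periodicitySetOfW} is stated one-sidedly. To close the reverse inclusion I would set $\beta = \eta t_{\max}$, observe that the requirement $\beta p_k/q_k \in \Z$ for every $k \in \supp(\tvso)$ describes the intersection $\bigcap_k (q_k/p_k)\Z$ of cyclic subgroups of $\R$, and identify this intersection with $(q/p)\Z$ by using that each $p_k/q_k$ is in lowest terms together with the LCM definitions in \eqref{eq:integersForSolutionSet}. This reduces to a clean number-theoretic computation rather than a conceptual difficulty, and along the way it delivers the discreteness of $\Xi(\vco)$ claimed in the statement.
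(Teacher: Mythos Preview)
Your proposal is correct and follows essentially the same route as the paper: reduce the kernel condition to \(\vW(\eta)\tvso=\tvso\), use the submatrix hypothesis to get \(\supp(\tvso)\neq\emptyset\), read off \(\imag(\eta)=0\) from the modulus, and then feed the resulting scalar system into the auxiliary lemma on modular equations (the paper invokes Lemma~\ref{lem:aux:modularSystemOfEquations} directly, whereas you route through Lemma~\ref{lem:periodicitySetOfW}, which is itself just a wrapper around that lemma). Your extra paragraph closing the reverse inclusion for \(\Xi(\vco)=\tfrac{1}{t_{\max}}\tfrac{q}{p}\Z\) is a point the paper glosses over, so if anything you are slightly more careful there.
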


The integers \(p, q\) can be found in practice from~\eqref{eq:integersForSolutionSet}. This result implies that
\[
    \set{(\xio + \eta,\vco):\, \eta\in \Xi(\vco)} \subset \sol(\vso).
\]
Since Lemma~\ref{lem:periodicitySetOfW} establishes that {\em even when \(\vco\) is known} we may determine \(\xio\) up to a discrete set depending on the support of \(\tvso\) the best possible is equality in the above for all concentrations \(\vco\). This would imply that all possible solutions have {\em the same concentrations and the same \(r_2^*\)} and that the fieldmap \(\phi\) belongs to a discrete set. The following theorem establishes that this is generically the case when \(n_e \geq 2n_s\). We defer the proof to Section~\ref{apx:solutionSet}.

\begin{theorem}\label{thm:solutionSetStructure}
    Suppose that \(n_e \geq 2n_s\) and that every \(n_s\times n_s\) submatrix of \(\vPhi\) is non-singular. Then there exists a set \(\mZ_c\) with measure zero such that for any \(\vco\notin\mZ_c\) we have that
    \[
        \sol(\vso) = \set{(\xio + \eta, \vco):\,\, \eta \in \Xi(\vco)}.
    \]
\end{theorem}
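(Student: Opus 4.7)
By Proposition~\ref{prop:recoverableConcentrationSet}, the inclusion $\set{(\xio+\eta,\vco):\eta\in\Xi(\vco)}\subseteq\sol(\vso)$ is immediate, so the substance of the theorem is the reverse inclusion for $\vco$ outside a measure-zero set. Writing $\xi=\xio+\eta$, the condition $(\xi,\vc)\in\sol(\vso)$ is equivalent to $\bmtx{\vc\\-\vco}\in\ker\vDi(\eta)$; when $\vc=\vco$, Proposition~\ref{prop:recoverableConcentrationSet} already forces $\eta\in\Xi(\vco)$, so what remains is to rule out kernel vectors with $\vc\neq\vco$ for generic~$\vco$.

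The first step is a global rank analysis of the analytic family $\eta\mapsto\vDi(\eta)$. Picking any $2n_s$-subset $J\subset\bset{n_e}$, a block Laplace expansion yields
\[
    \det\vDi(\eta)_J \;=\; \sums_{\substack{S\subset J\\|S|=n_s}}\epsilon(S)\,\exp\!\left(2\pi i\eta\sums_{k\in S}t_k\right)\det(\vPhi_S)\,\det(\vPhi_{J\setminus S}),
\]
which under the non-singularity hypothesis on every $n_s\times n_s$ submatrix of $\vPhi$ has only non-vanishing coefficients, and under the echo-time genericity carried over from Lemma~\ref{lem:modelMatrixSubmatrices} has pairwise distinct exponents $\sum_{k\in S}t_k$. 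Hence it is a non-trivial exponential polynomial, its zero set $E\subset\C$ is discrete, and $\vDi(\eta)$ has trivial kernel off~$E$.

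For each $\eta\in E$ I would set $B_\eta:=\set{\vco:\vPhi\vco\in\vW(\eta)\range(\vPhi)}$, the linear subspace of concentrations admitting a companion $\vc$; full column rank of $\vPhi$ makes this companion unique. When $B_\eta$ is proper it contributes a measure-zero set of bad $\vco$, so the crux is the degenerate case $B_\eta=\C^{n_s}$, equivalent to $\vW(\eta)\range(\vPhi)=\range(\vPhi)$. The main obstacle sits here: writing $\vW(\eta)\vPhi=\vPhi\vc{Q}(\eta)$ and exploiting the non-singularity of every $n_s\times n_s$ submatrix of $\vPhi$ forces the conjugates $\vPhi_J^{-1}\vW_J(\eta)\vPhi_J$ to coincide across all $n_s$-row subsets~$J$; varying~$J$ by a single row then pins $\vW(\eta)$ to a scalar $\alpha\vI$, and the echo-time genericity narrows $\alpha$ to~$1$, placing the configuration back into the case $\vc=\vco$ already handled by Proposition~\ref{prop:recoverableConcentrationSet}. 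Setting
\[
    \mZ_c := \bigcups_{\eta\in E,\,B_\eta\neq\C^{n_s}}B_\eta,
\]
a countable union of proper linear subspaces, gives the desired Lebesgue-null set outside which every element of $\sol(\vso)$ has $\vc=\vco$ and $\eta\in\Xi(\vco)$.
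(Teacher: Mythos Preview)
Your strategy mirrors the paper's: reduce to $\ker\vDi(\eta)$, prove the bad set of $\eta$'s is discrete, then split according to whether the induced set $B_\eta$ of bad concentrations (the paper's $Z_c(\eta)$) is a proper subspace or all of $\C^{n_s}$, handling the latter by showing $\vW(\eta)$ acts as a scalar on $\range(\vPhi)$. Your Laplace expansion for discreteness and your conjugation argument $Q=\vPhi_J^{-1}\vW_J(\eta)\vPhi_J$ are alternatives to the paper's holomorphic-determinant and eigenvector-support arguments (packaged there as the auxiliary Theorem~\ref{thm:solutionSetRank}), but both routes land at the same intermediate conclusion $\vW(\eta)=\alpha\vI$ in the degenerate case $B_\eta=\C^{n_s}$.

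The gap is your repeated appeal to ``echo-time genericity carried over from Lemma~\ref{lem:modelMatrixSubmatrices}.'' That lemma's null set concerns non-singularity of the submatrices of $\vPhi$; it says nothing about distinctness of subset sums $\sum_{k\in S}t_k$ or about ratios $t_k/t_\ell$, and neither condition is among the theorem's hypotheses. The first use is harmless, since $\det\vDi(\eta)_J$ is non-trivial anyway (the unique extremal exponent carries a single nonzero coefficient). The second use is not: from $\vW(\eta)=\alpha\vI$ one cannot force $\alpha=1$ without an arithmetic condition on the $t_k$. With $n_s=1$, $n_e=2$, $t_1=1$, $t_2=3$ one has $\vW(1/2)=-\vI$, whence $(\xio+\tfrac12,-\vco)\in\sol(\vso)$ for \emph{every} $\vco$, and no measure-zero exceptional set rescues the conclusion $\vc=\vco$. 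The paper's own argument glosses over the same point---its Theorem~\ref{thm:solutionSetRank}(iii) asserts the common eigenvalue equals $1$ but the proof only shows the eigenvalues coincide---so this is a shared lacuna rather than one peculiar to your write-up; nonetheless you should recognize that what the stated hypotheses actually deliver in the full-kernel case is $\vc=\alpha^{-1}\vco$ for some unimodular $\alpha$, and pinning $\alpha=1$ genuinely requires an additional assumption on the echo times that neither Lemma~\ref{lem:modelMatrixSubmatrices} nor the submatrix hypothesis supplies.
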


\subsection{Swaps}

Our results in the previous section show that when \(n_e \geq 2n_s\) then there are no swaps and the concentrations are uniquely determined from \(\vso\). Therefore, swaps occur in the regime \(n_s < n_e < 2n_s\). Our next result shows that, depending on the structure of the model matrix, a form of generalized swaps will happen {\em generically}. We defer the proof to Section~\ref{apx:solutionSet}.

\begin{proposition}\label{prop:swaps}
    Suppose that \(n_s + 1\leq n_e \leq 2n_s\) and that every \(n_s\times n_s\) submatrix of \(\vPhi\) is non-singular. Then there exists a set \(\mZ_{sw}\) with empty interior and Lebesgue measure zero and an orthonormal basis \(\vu_1,\ldots, \vu_{n_s}\) of \(\C^{n_s}\) such that for any \(\vco\notin\mZ_{sw}\) there exists a discrete set of values for \(\eta\) and complex numbers \(e^{i\th_1},\ldots, e^{i\th_{n_s}}\) depending on \(\eta\) such that
    \[
        \vc = \sums_{\ell=1}^{n_s} e^{i\th_\ell}\iprod{\vu_\ell}{\vco}\vu_\ell
    \]
    satisfies \((\xio+\eta, \vc) \in \sol(\vso)\).
\end{proposition}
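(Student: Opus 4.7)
The swap condition $(\xio + \eta, \vc) \in \sol(\vso)$ is equivalent, via \eqref{eq:signalMap}, to $\vW(\eta)\vPhi\vc = \vPhi\vco$. Since $\vPhi$ has full column rank by hypothesis, this system in $\vc$ is solvable precisely when $\vPhi\vco \in \vW(\eta)V$ with $V := \range(\vPhi)$, which rewrites as $\vPp\vW(-\eta)\vPhi\vco = \vnull_{n_e}$ where $\vPp$ is the orthogonal projection onto $V^{\perp}$; in that case the unique solution is $\vc = \vPhip\vW(-\eta)\vPhi\vco$.

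I would then define $\mZ_{sw}$ as the set of $\vco \in \C^{n_s}$ for which either no $\eta$ satisfies this solvability condition, or the corresponding set of $\eta$'s fails to be discrete. The map $F(\eta, \vco) := \vPp\vW(-\eta)\vPhi\vco$ is real-analytic in $\eta$ (its components are almost-periodic exponential sums) and linear in $\vco$. Mirroring the strategies of Lemma~\ref{lem:modelMatrixSubmatrices} and Theorem~\ref{thm:solutionSetStructure}, the zero set of $F$ in $\C \times \C^{n_s}$ is a real-analytic subvariety; one shows that its projection onto $\C^{n_s}$ omits only a real-analytic subset of empty interior and Lebesgue measure zero, which is identified with $\mZ_{sw}$. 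For $\vco \notin \mZ_{sw}$, discreteness of the swap $\eta$-fiber follows from the fact that each component of $F(\cdot, \vco)$ is a non-identically-vanishing almost-periodic function and therefore has an isolated zero set.

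For the structural part, fix real $\eta$ at a swap. Since $\vW(-\eta)$ is then unitary, $\nrm{\vPhi\vc}^2 = \nrm{\vW(-\eta)\vPhi\vco}^2 = \nrm{\vPhi\vco}^2$, so the operator $\vB(\eta) := \vPhip\vW(-\eta)\vPhi$ is unitary with respect to the Hermitian form induced by $\vc{G} := \vPhi\adj\vPhi$. The spectral theorem for $\vc{G}$-unitary operators yields a $\vc{G}$-orthonormal eigenbasis with unit-modulus eigenvalues $e^{i\th_1},\ldots, e^{i\th_{n_s}}$. Transporting this basis via $\vc{G}^{1/2}$ produces a standard-orthonormal basis $\vu_1,\ldots, \vu_{n_s}$ of $\C^{n_s}$, and a continuity/analyticity argument, anchoring the basis at a reference swap and extending analytically in $\eta$, permits $\vu_1,\ldots,\vu_{n_s}$ to be chosen independently of the particular $\eta$ in the discrete swap set (at the cost of letting the phases $e^{i\th_\ell}$ vary with $\eta$). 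Reading $\vc = \vB(\eta)\vco$ in this basis yields the claimed form.

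The main obstacle is the projection argument in the middle step: while analyticity secures discreteness of the swap $\eta$-fiber, showing that the projection of the zero variety of $F$ to $\C^{n_s}$ has a complement of empty interior and measure zero requires controlling the rank of $\vPp\vW(-\eta)\vPhi$ uniformly in $\eta$, which is exactly where the hypothesis that every $n_s\times n_s$ submatrix of $\vPhi$ is non-singular must be exploited. A subtler point is reconciling the natural $\vc{G}$-orthonormality of the eigenbasis of $\vB(\eta)$ with the standard inner product appearing in the stated decomposition; this may require that $\vu_\ell$ be chosen $\vc{G}$-compatibly or that the phases absorb an auxiliary change of basis. Finally, the boundary case $n_e = 2n_s$, where Theorem~\ref{thm:solutionSetStructure} prohibits non-trivial swaps, must be handled as a consistency check, with the decomposition degenerating to $\vc = \vco$ and $\th_\ell = 0$.
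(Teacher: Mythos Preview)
Your proposal has two genuine gaps that prevent the argument from going through.

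First, the spectral step: you fix a real $\eta$ at which a swap exists for the given $\vco$ and then assert that $\vB(\eta) = \vPhip\vW(-\eta)\vPhi$ is $\vc{G}$-unitary. But the equality $\nrm{\vPhi\vc}_2 = \nrm{\vPhi\vco}_2$ you derive uses $\vPhi\vc = \vW(-\eta)\vPhi\vco$, which only holds for those $\vco$ satisfying the solvability condition $\vPp\vW(-\eta)\vPhi\vco = 0$; for other vectors $\vB(\eta)$ is merely a contraction. To apply the spectral theorem you need $\vB(\eta)$ to be $\vc{G}$-unitary on {\em all} of $\C^{n_s}$, i.e., $\vW(-\eta)$ must map $\range(\vPhi)$ into itself. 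This is exactly the structural fact the paper isolates: it is equivalent to $\dim\ker(\vDi(\eta)) = n_s$, and is proved as assertion~(ii) of the auxiliary Theorem~\ref{thm:solutionSetRank}. Once $\vW(-\eta)$ preserves $\range(\vPhi)$, its restriction is a {\em normal} operator (it is diagonal in $\C^{n_e}$), and the spectral theorem gives the orthonormal eigenbasis of $\range(\vPhi)$ directly; there is no need to go through $\vc{G}$-unitarity of $\vB(\eta)$.

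Second, the independence of the eigenbasis from $\eta$: your ``continuity/analyticity argument, anchoring the basis at a reference swap and extending analytically in $\eta$'' cannot work, because the relevant $\eta$'s form a discrete set with no continuum to continue along. The paper's argument is algebraic: any two diagonal matrices $\vW(-\eta_1)$ and $\vW(-\eta_2)$ commute, so if both preserve $\range(\vPhi)$ they are simultaneously diagonalizable there, yielding a single eigenbasis valid for every admissible $\eta$.

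Your projection-of-varieties plan for $\mZ_{sw}$ is also unnecessary once you work with the set $Z_\Delta^* = \{\eta:\dim\ker(\vDi(\eta)) = n_s\}$: for every $\eta$ in this set the swap exists for {\em all} $\vco$ simultaneously, so the measure-zero exceptional set arises only from the remaining $\eta \in Z_\Delta\setminus Z_\Delta^*$, as in the proof of Theorem~\ref{thm:solutionSetStructure}. Finally, you correctly flag the tension between $\vc{G}$-orthonormality and the standard inner product in the stated formula; the paper's proof is terse on this point as well.
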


\section{The residual}\label{sec:theResidual}

The structure of the solution set suggests that under favorable conditions any solution to
\begin{equation}\label{eq:recoveryProblem}
    \vW(\xi)\vPhi \vc = \vso
\end{equation}
matches the true concentrations \(\vco\) while the parameter \(\xi\) may differ from the true parameter \(\xio\) only by a discrete amount. A popular approach to solve~\eqref{eq:recoveryProblem} is to use {\em variable projection (VarPro)}~\cite{golub_separable_2003,golub_differentiation_1973}. Since \(\vM\) is full-rank when \(\vPhi\) is, we can compute its Moore-Penrose pseudoinverse \(\vMp\) to estimate the concentrations from \(\vso\) as
\begin{equation}\label{eq:leastSquaresConcentrations}
    \cMP(\xi) = \vMp(\xi) \vso.
\end{equation}
Using this in~\eqref{eq:recoveryProblem} leads to
\begin{equation}\label{eq:leastSquaresSystem}
    \vso = \vM(\xi) \vMp(\xi) \vso
\end{equation}
where the variable is now the unknown parameter \(\xi\). The matrix \(\vM(\xi) \vMp(\xi)\) is the {\em orthogonal projector} onto \(\range(\vM(\xi))\). This {\em overdetermined} system of nonlinear equations can be solved to find an estimate \(\whxi\) of \(\xio\). In turn, this yields the estimate \(\whc = \cMP(\whxi)\vso\) for the concentrations. 

The system~\eqref{eq:leastSquaresSystem} is seldom solved directly in practice. Instead, the {\em residual norm squared}
\[
    \fo(\xi) = \frac{1}{2}\nrm{\vso - \vM(\xi) \vMp(\xi)\vso}^2_2
\]
is minimized. Unfortunately, this is computationally challenging due to the complex dependence that \(\vMp\) has on \(\xi\) and its conjugate \(\xi\adj\). For this reason, we propose an alternative approach based on {\em oblique projections} instead of orthogonal ones. From~\eqref{eq:recoveryProblem} we have that
\[
    \vW(-\xi)\vso = \vW(-\xi)\vM\vc = \vPhi\vc.
\]
Since \(\vPhi\) is full-rank, we can use its Moore-Penrose pseudoinverse \(\vPhip\) to obtain
\[
    \cRI(\xi) = \vPhip \vW(-\xi)\vso.
\]
Using this in~\eqref{eq:recoveryProblem} yields
\[
    \vso = \vW(\xi) \vPhi\vPhip\vW(-\xi) \vso.
\]
By solving this system of nonlinear equations we can obtain an estimate \(\whxi\) of \(\xio\) and the estimate \(\whc = \cRI(\whxi)\) for the concentrations. Instead, we proceed along the same lines as before, and observe that for every \(\xi\) the matrix
\[
    \vP(\xi) :=  \vW(\xi) \vPhi\vPhip\vW(-\xi)
\]
is an {\em oblique projector} as \(\vP(\xi)^2 = \vP(\xi)\) but \(\vP(\xi)\adj \neq \vP(\xi)\). This leads us to define the {\em residual matrix}
\begin{equation}\label{eq:residualMap}
    \vR(\xi) = \vW(\xi) \vPp \vW(-\xi)
\end{equation}
where \(\vPp = \vI_{n_e} - \vPhi\vPhip\) is the orthogonal projector onto \(\range(\vPhi)^{\perp}\). The residual matrix is also oblique projector. If \(\vR(\xi)\vso = 0\) then 
\[
    \vW(\xio - \xi) \vPhi\vco \in \range(\vPhi)\,\,\Rightarrow\,\, \vW(\xio - \xi) \vPhi\vco = \vPhi\vc
\]
for some \(\vc\) and thus \((\xi,\vc)\in\sol(\vso)\) as desired. The simple dependence of \(\vR\) on \(\xi\) not only allows us to show that it is complex differentiable, but also to compute its derivatives. Thus, we propose to minimize the {\em residual}
\[
    \fo(\xi) = \frac{1}{2} \nrm{\vR(\xi)\vso}_2^2.
\]
The simple dependence of the residual matrix on \(\xi\) makes the residual amenable to analysis using the Wirtinger calculus. 

\subsection{The Wirtinger calculus}\label{sec:wirtingerCalculus}

The Wirtinger calculus~\cite[Ch.~1, Sec.~4]{remmert_theory_1991} was introduced to analyze functions of a complex variable that are not complex differentiable. Instead of representing the {\em variable} \(\xi\) in terms of its real and imaginary part, the approach taken in Wirtinger calculus is to represent the {\em function} as depending on both the variable \(\xi\) and its complex conjugate \(\xi\adj\) while treating both as independent. From now on, we let \((\xi,\xi\adj)\) denote the Wirtinger variables and we let \(\nrmW{(\xi,\xi\adj)} = \sqrt{\xi\xi\adj}\). Lemma~\ref{lem:residualMatrixDerivatives} yields the representation
\[
    \fo(\xi, \xi\adj) = \frac{1}{2}\iprod{\vso}{\vR(\xi\adj)\vR(\xi)\vso}
\]
for the residual. It is apparent that \(\fo\) has derivatives of all orders on both \(\xi\) and \(\xi\adj\). Since \(\fo\) is real-valued, the identity
\begin{equation}\label{eq:wirtingerDerivativeRelations}
    \partial_{\xi\adj} \fo(\xi,\xi\adj) = \partial_{\xi} \fo(\xi,\xi\adj)\adj
\end{equation}
holds. Similar identities hold for higher order derivatives. The first-order Wirtinger derivatives are thus determined by
\begin{equation}\label{eq:residualDerivatives}
    \partial_{\xi} \fo(\xi,\xi\adj) = \frac{1}{2}\iprod{\vso}{\vR(\xi\adj)\vR'(\xi)\vso}.
\end{equation}
We denote as \(\nabla_W \fo\) the Wirtinger gradient and we represent its action on the variable \((\eta,\eta\adj)\) as
\[
    \nabla_W \fo(\xi,\xi\adj)(\eta,\eta\adj) = \real(\eta \iprod{\vso}{\vR(\xi\adj)\vR'(\xi)\vso}).
\]
It follows that the direction of maximum descent at \((\xi,\xi\adj)\) is \(-\nabla_W \fo(\xi,\xi\adj)\adj\). 

The second-order derivatives determine the curvature of the residual near \((\xi, \xi\adj)\). We denote the Wirtinger Hessian as \(\nabla_W^2 \fo(\xi,\xi\adj)\). The identity~\eqref{eq:wirtingerDerivativeRelations} implies that the Wirtinger Hessian is determined by
\begin{align*}
    \partial_{\xi\xi} \fo(\xi,\xi\adj)      &= \frac{1}{2}\iprod{\vso}{\vR'(\xi\adj)\vR'(\xi)\vso},\\
    \partial_{\xi\xi\adj} \fo(\xi,\xi\adj)  &= \frac{1}{2}\nrm{\vR(\xi)\vso}_2^2.
\end{align*}

\subsection{Gradient descent and exact local recovery}\label{sec:localConvergenceGD}

A simple strategy to solve
\begin{equation}\label{opt:minimizeResidual}
   \begin{aligned}
   & \underset{\xi}{\text{minimize}}
   & & \fo(\xi)
   \end{aligned}
\end{equation}
is to leverage Wirtinger calculus to use gradient descent with constant step size. This is also called {\em Wirtinger flow} in this context~\cite{candes_phase_2015,yuan_phase_2019}. A single iteration of gradient descent at \((\xi, \xi\adj)\) is represented as
\[
    (\xi,\xi\adj)^+ = (\xi,\xi\adj) - \alpha \nabla \fo(\xi, \xi\adj)\adj
\]
where \(\alpha > 0\) is the step size. Since \(\fo\) is non-convex, it is not clear whether the iterates will converge to a global minimum, or at all. Thus we provide conditions on the initial iterate and the step size than ensure that gradient descent converges to a global minimizer. 

The main idea behind our argument is as follows. Since \(\vR(\xio)\vso = 0\) it holds that
\begin{equation}\label{eq:hessianAtTrueParameter}
    (\eta,\eta\adj)\nabla^2_W \fo(\xio,\xio\adj)(\eta,\eta\adj) = |\eta|^2 \nrm{\vR'(\xio)\vso}_2^2.
\end{equation}
It follows from Lemma~\ref{lem:hessianAtTrueParameterPositive} that this is positive under general conditions. Therefore, the true parameter is a global strong minimizer for the residual. Since \(\fo\) has continuous second-order derivatives, its Hessian is positive definite in a neighborhood of \(\xio\) and, in fact, it is convex in a neighborhood of \(\xio\). Therefore, if the initial iterate is sufficiently close to the true parameter, gradient descent converges to the true parameter. The proof of the following theorem is deferred to Section~\ref{apx:localExactRecovery}. Recall that the Lambert \(W\) function is defined by the equation \(W(z) e^{W(z)} = z\) for \(z\in \C\). Here we only consider its main branch~\cite{mezo_lambert_2022}.

\begin{theorem}\label{thm:localExactRecovery}
    Let \(\tS = 4\pi (t_{n_e} - t_1)\) and \(\tne = 4\pi t_{n_e}\) and for \(\rho \in (0, 1)\) define
    \[
        \gxio^+(\rho) = \frac{1}{4}\frac{\nrm{\vR''(\xio)\vso}_2}{\tne^{5/2}}\left(-1 + \sqrt{1 + 8(1-\rho)\tne^2 \frac{\nrm{\vR'(\xio)\vso}_2^2}{\nrm{\vR''(\xio)\vso}_2^2}}\right)
    \]
    Let \(r_0 > 0\) be such that
    \begin{equation}\label{eq:radiusOfMonotonicityBound}
        r_0 \leq \frac{1}{\tS} W\left(\frac{\tS e^{-\tS \imag(\xio)}}{2\nrm{\vso}_2^2}\gxio^+(\rho)\right)
    \end{equation}
    where \(W\) is the Lambert \(W\) function. If \(\xi\in \Hp\) is such that \(|\xi - \xio| < r_0\) then gradient descent with step
    \[
        \alpha < \frac{\rho}{2 + \rho}
    \]
    converges to \(\xio\).
\end{theorem}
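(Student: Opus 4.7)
The plan is to exploit that \(\vR(\xio)\vso = \vnull_{n_e}\): by~\eqref{eq:hessianAtTrueParameter}, the Wirtinger Hessian at \(\xio\) reduces to \(\nrm{\vR'(\xio)\vso}_2^2\) times the identity quadratic form in \((\eta,\eta\adj)\), and this coefficient is strictly positive under Lemma~\ref{lem:hessianAtTrueParameterPositive}. Thus \(\xio\) is a strict local minimizer, and by continuity of the Hessian it remains one on a neighborhood. I would quantify this neighborhood explicitly and then apply the standard descent-lemma argument for Wirtinger flow on a smooth, strongly convex function restricted to it.

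First I would Taylor-expand \(\vR(\xi)\vso\) around \(\xio\) and derive explicit bounds on \(\nrm{\vR^{(k)}(\xi)\vso - \vR^{(k)}(\xio)\vso}_2\) for \(k\in \set{0,1,2}\). Using \(\vR(\xi) = \vW(\xi)\vPp\vW(-\xi)\) together with \(\vW'(\xi) = 2\pi i\,\vT\vW(\xi)\), each derivative decomposes into finitely many terms of the form \(\vT^j\vW(\pm\xi)\vPp\vW(\mp\xi)\), whose norms pick up polynomial factors of order \(\tne^j\) along with exponential factors \(e^{2\pi t_{n_e}\imag(\xi)}\) coming from \(\vW(-\xi)\). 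Plugging these into the Wirtinger Hessian entries \(\partial_{\xi\xi}\fo\) and \(\partial_{\xi\xi\adj}\fo\), and exploiting \(\partial_{\xi\xi\adj}\fo(\xio) = 0\), I would obtain, for \(|\xi - \xio| = r\), a lower bound
\[
    (\eta,\eta\adj)\nabla_W^2 \fo(\xi,\xi\adj)(\eta,\eta\adj) \geq |\eta|^2 \bigl(\nrm{\vR'(\xio)\vso}_2^2 - A(r)\bigr),
\]
with \(A(r)\) linear in \(r\) through \(\nrm{\vR''(\xio)\vso}_2\) up to higher-order corrections and weighted by \(\nrm{\vso}_2^2\, e^{\tS(\imag(\xio)+r)}\). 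Imposing \(A(r) \leq \rho\,\nrm{\vR'(\xio)\vso}_2^2\) reduces to an inequality of the form \(r\,e^{\tS r} \leq K\) once the quadratic part in \(r\) is resolved; the positive root of that quadratic is exactly \(\gxio^+(\rho)\), and inverting the resulting transcendental inequality via the Lambert \(W\) function delivers~\eqref{eq:radiusOfMonotonicityBound}. A parallel upper bound on the Hessian yields a Lipschitz gradient constant, and substituting it into the descent inequality \(\alpha < 2/L\) produces the step-size condition \(\alpha < \rho/(2+\rho)\).

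Once local strong convexity and smoothness are in hand on the ball \(\set{\xi:\,|\xi-\xio|\leq r_0}\), the remainder of the proof is a standard bootstrap: the sublevel sets of \(\fo\) within that ball are convex and nested, so gradient descent from any \(\xi\) in the ball stays in the ball; the descent lemma then gives \(\fo(\xi^+) \leq (1 - c\alpha)\fo(\xi)\) for some \(c>0\) depending on \(\rho\), and the quadratic lower bound \(\fo(\xi)\geq \tfrac{1}{2}(1-\rho)\nrm{\vR'(\xio)\vso}_2^2\,|\xi - \xio|^2\) converts geometric decay of function values into geometric convergence of \(\xi_k\) to \(\xio\). The main obstacle is the first step: obtaining estimates for the Taylor remainder of \(\vR(\xi)\vso\) tight enough that the Lambert \(W\) form of \(r_0\) arises cleanly rather than as a cruder exponential bound. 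One must track the exponential growth coming from \(\vW(-\xi)\), the polynomial growth from the diagonal \(\vT\), and the mixed appearances of \(\nrm{\vR'(\xio)\vso}_2\) and \(\nrm{\vR''(\xio)\vso}_2\) carefully enough to preserve the exact quadratic structure encoded in \(\gxio^+(\rho)\); everything beyond this estimate is routine gradient-descent bookkeeping in the Wirtinger calculus.
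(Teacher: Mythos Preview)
Your derivation of the Hessian bounds and the Lambert \(W\) radius is essentially the paper's argument: bound \(\nrm{\vR^{(k)}(\xi)\vso}_2\) via operator-norm estimates on \(\vR^{(k)}\), feed these into the Hessian quadratic form, solve the resulting quadratic inequality to obtain \(\gxio^+(\rho)\), and invert \(r\,e^{\tS r}\leq K\) via \(W\). One minor slip: to land on the factor \(8(1-\rho)\) inside the square root defining \(\gxio^+(\rho)\) you must impose \(A(r)\leq (1-\rho)\nrm{\vR'(\xio)\vso}_2^2\), not \(A(r)\leq \rho\nrm{\vR'(\xio)\vso}_2^2\); the paper's two-sided Hessian estimate is \(\rho\,\nrm{\vR'(\xio)\vso}_2^2\) from below and \((2+\rho)\,\nrm{\vR'(\xio)\vso}_2^2\) from above, and the ratio of these constants is what produces the step-size bound.

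The genuine gap is in your convergence step. You claim that because the sublevel sets of \(\fo\) inside the \(r_0\)-ball are convex and nested, gradient descent stays in the ball, and then invoke the descent lemma for geometric decay of function values. But the descent lemma already requires the segment from \(\xi\) to \(\xi^+\) to lie in a region where the smoothness bound holds; nothing you have shown prevents a single step \(\xi^+=\xi-\alpha\nabla_W\fo(\xi)\adj\) from landing outside the \(r_0\)-ball, after which neither the smoothness nor the strong-convexity estimate is available. Converting function-value decay back to distance via the strong-convexity lower bound costs a factor \(\sqrt{\Lo/\muo}>1\) and therefore does not by itself trap the iterate. The paper sidesteps this by contracting the distance directly: because the segment from \(\xio\) to \(\xi\) lies entirely in the ball, integrating the Hessian along it yields both the Lipschitz bound \(\nrmW{\nabla_W\fo(\xi)}\leq \Lo|\xi-\xio|\) and the strong-monotonicity bound \(\nabla_W\fo(\xi)\adj(\xi-\xio,\xi\adj-\xio\adj)\geq \muo|\xi-\xio|^2\), and expanding \(\nrmW{\xi^+-\xio}^2\) then gives a one-step contraction of \(|\xi-\xio|\) for the stated range of \(\alpha\). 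Replacing the last paragraph of your plan with this monotonicity argument closes the proof without any bootstrap.
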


The bound in~\eqref{eq:radiusOfMonotonicityBound} is quite restrictive and in practice the radius can be much larger. We have provided it in this form for its simplicity. A more precise, but still restrictive, bound is given in~\eqref{eq:radiusOfMonotonicityBound:loose}. A better bound can be obtained by solving the implicit inequality in~\eqref{eq:radiusOfMonotonicityBound:tighter} in the supplementary material. In general, the radius can be much larger than these two bounds suggest, as evidenced in our numerical experiments. The main advantage of~\eqref{eq:radiusOfMonotonicityBound} is thus its conciseness and that it allows us to identify a figure of merit, namely, the quotient
\[
    \frac{\nrm{\vR'(\xio)\vso}_2}{\nrm{\vR''(\xio)\vso}_2} 
\]
which determines the local curvature of the residual near the true parameter.

\subsection{Stability and robustness}\label{sec:stability}

In practice, the signal \(\vso\) is typically corrupted by additive Gaussian noise. However, another source of signal corruption is the misspecification of the model, that is, when there are more species than the ones accounted for in the model. In this case, we may write
\[
    \vy = \vso + \vW(\xio) \vPhi_M \vc_M + \sigma \vz \quad\mbox{where}\quad \vz \sim N(\vnull_{n_e}, \vI_{n_e}).
\]
where \(\vPhi_M\) and \(\vc_M\) are the model matrix and concentrations for the unaccounted species. Instead of trying to solve~\eqref{opt:minimizeResidual} with \(\vy\) replacing \(\vso\) we propose the {\em full residual}
\begin{equation}\label{eq:fullResidual}
    f(\xi, \vs) = \frac{1}{2}\nrm{\vR(\xi)\vs}_2^2
\end{equation}
to then solve the constrained problem
\begin{equation}\label{opt:minimizeResidualNoisy}
   \begin{aligned}
   & \underset{\xi,\vs}{\text{minimize}}
   & & f(\xi,\vs)
   & \text{s.t.}
   & & \nrm{\vy - \vs}_2 \leq \delta
   \end{aligned}
\end{equation}
for some factor \(\delta > 0\). It can be selected by using the fact that
\[
    \ev\bset{\nrm{\vy - \vso}_2} \leq \nrm{\vW(\xio)\vPhi_M\vc_M}_2 + \sigma\sqrt{n_e}.
\]
To solve~\eqref{opt:minimizeResidualNoisy} we can leverage the fact that the full residual has Wirtinger derivatives of all orders, and that the constraint is convex to use projected gradient descent. If \(\vso\) is feasible for the constraint, then is is apparent that \((\xio, \vso)\) is an optimal solution to~\eqref{opt:minimizeResidualNoisy}. 

One of the advantages of using a constraint is that this does not change the curvature of the residual. In particular, the curvature on the variable \(\xi\) near the true parameters \((\xio,\vso)\) remains unaffected. A disadvantage of this approach is that it allows for multiple global minimizers. On one hand, this is due to the fact that the continuity of the signal map implies there exists \(r_\delta > 0\) such that \(\nrm{\vy - s(\xi,\vc)} \leq \delta\) whenever \(|\xi - \xio| \leq r_\delta\) and \(\nrm{\vc - \vco}_2 \leq r_0\). On the other, this is due to the structure of the full residual, as \(f(\xi\opt,\vs\opt) = 0\) implies that \(f(\xi\opt, \alpha\vs\opt) = 0\) for any \(\alpha\in\C\). Although we cannot mitigate the former issue, we can mitigate the second by solving the regularized problem    
\begin{equation}\label{opt:minimizeResidualRegularized}
   \begin{aligned}
   & \underset{\xi,\vs}{\text{minimize}}
   & & f(\xi,\vs) + \eps\nrm{\vs}_2^2
   & \text{s.t.}
   & & \nrm{\vy - \vs}_2 \leq \delta
   \end{aligned}
\end{equation}
for some small \(\eps > 0\). The effect of the regularization term is that for a global minimizer \((\xi\opt,\vs\opt)\) we must have that either \(\vs\opt = 0\) of \(\nrm{\vy - \vs\opt}_2 = \delta\). This has practical advantages when the noise dominates as the estimated signal will be zero.

\section{The imaging problem}\label{sec:imaging}

The {\em imaging problem} consists on recovering the the values of the parameter \(\xio\) and the concentrations \(\vco\) over a spatial region. Let \(V\) be the 2D or 3D region to be imaged. We let \(\xio:V\to \C\) be the true {\em parameter map} and \(\vco: V\to \C^{n_e}\) be the true {\em concentration map}. The signal map~\eqref{eq:signalMap} now becomes
\[
    s(\xio, \vco, v) = \vW(\xio(v)) \vPhi \vco(v)
\]
and the reconstruction problem becomes finding \((\xio,\vco)\) such that
\begin{equation}\label{eq:imagingProblem}
    v\in V:\,\, s(\xi,\vco, v) = \vso(v)
\end{equation}
where \(\vso(v) = s(\xio,\vco, v)\) for \(v\in V\). As formulated, the reconstruction problem decouples in the positions \(v\in V\). In fact, we may use the residual 
\begin{equation}\label{opt:imaging:nls}
   \begin{aligned}
   & \underset{\xi}{\text{minimize}}
   & & \sums_{v\in V}\, f(\xi(v), \vso(v)).
   \end{aligned}
\end{equation}
However, there is {\em a priori} information about the spatial structure of the true parameter \(\xio\) that we can leverage to improve the performance of this approach. For instance, the fieldmap \(\phio\) is often known to be smooth over large regions except where there are discontinuities or susceptibility effects. We now propose a concrete assumption on the prior structure of \(\phio\) that provably improves the reconstruction performance. Although our methods are readily applicable to problems in 3D, we present our arguments in 2D to simplify the exposition.

\subsection{Gradient bounds and exact local recovery}\label{sec:gradientBounds}

A typical assumption is that the fieldmap \(\phio\) is smooth. The smoothness is often determined by the magnitude of its gradient \(\nabla \phio\) of the fieldmap. For simplicity, we use forward differences
\[
    v\in V:\,\, \nabla \phio(v) = \bmtx{\phio(v + e_1) - \phio(v) \\ \phio(v + e_2) - \phio(v)}.
\]
A popular approach to promote a gradient of small magnitude is to regularize~\eqref{opt:imaging:nls} with a term of the form
\[
    h_p(\xi) = \frac{1}{2}\sums_{v\in V}\nrm{\nabla \real(\xi)}_2^p.
\]
When \(p = 2\) we obtain Tikhonov regularization, whereas when \(p = 1\) we recover TV regularization. Although this approach has been successful in practice, it has the disadvantage that it changes the curvature of the full residual. For this reason, we take a different approach. Observe that the choice of stencil to compute the gradient determines how its magnitude constrains the variations of the field at \(v\). It implicitly defines the neighborhood
\[
    \ngh(v) := \set{v}\cup \set{v'\in V:\, \mbox{\(v = v' \pm e_1\) or \(v = v' \pm e_2\)}},
\]
where \(e_1,e_2\) are the canonical basis vectors in \(\R^2\), at every point. Hence, the local variations of the fieldmap at any \(v\in V\) con be bounded as
\[
    |\phio(v') - \phio(v)| \leq \max\set{\nrm{\nabla\phio(v')}:\,\, v'\in\ngh(v)}.
\]
for \(v'\in \ngh(v)\). Therefore, instead of regularizing the gradient, given a bound \(\epsg:V\to \Rp\) we define the convex set
\begin{equation}\label{eq:gradientBoundConstraints}
    \Cg := \set{\xi:V\to \C:\,\, \nrm{\nabla \real(\xi)(v)}_2 \leq \epsg(v),\,\, v\in V}
\end{equation}
and we propose to solve 
\begin{equation}\label{opt:imaging:gradientBounds}
   \begin{aligned}
   & \underset{\xi}{\text{minimize}}
   & & \sums_{v\in V}\, f(\xi(v), \vso(v))
   & \text{s.t.}
   & & \xi \in \Cg.
   \end{aligned}
\end{equation}
In general, we assume that the truel fieldmap \(\phio\) is feasible for this problem. In this case, it is a global minimizer. Under suitable conditions, we can show that any other global minimizer cannot coincide at any point with \(\phio\). We defer the proof of the following result to Section~\ref{apx:localRecoveryImaging}.

\begin{theorem}\label{thm:localRecoveryImaging}
    Suppose that the conditions of Theorem~\ref{thm:solutionSetStructure} hold and that \(\vco\notin \mZ_c\). Let \(r_Z > 0\) be the distance between \(0\) and \(\Xi(\vco)\setminus\set{0}\). If \(\phi\opt\) is any other global minimizer for~\eqref{opt:imaging:gradientBounds} and there is \(v\in V\) such that \(\phi\opt(v) = \phio(v)\) then \(\phi\opt = \phio\). Therefore, global minimizers of~\eqref{opt:imaging:gradientBounds} cannot have the same value at at any point of \(V\).
\end{theorem}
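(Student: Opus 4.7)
The plan is to combine the pointwise characterization of the solution set from Theorem~\ref{thm:solutionSetStructure} with a discrete continuation argument driven by the gradient constraint~\eqref{eq:gradientBoundConstraints}. Since \(\phio\) is feasible by assumption and the full residual~\eqref{eq:fullResidual} vanishes at \((\xio(v),\vso(v))\) for every \(v\in V\), the optimal value of~\eqref{opt:imaging:gradientBounds} is zero. Hence any global minimizer \(\xi\opt\), with \(\phi\opt = \real(\xi\opt)\), must satisfy \(f(\xi\opt(v),\vso(v)) = 0\) for every \(v\), equivalently \(\vR(\xi\opt(v))\vso(v) = \vnull_{n_e}\). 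As observed in Section~\ref{sec:theResidual}, this means \((\xi\opt(v),\vc(v))\in \sol(\vso(v))\) for some concentrations \(\vc(v)\).

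Next I would apply Theorem~\ref{thm:solutionSetStructure} pointwise. Since \(\vco\notin \mZ_c\), the solution set at \(v\) is \(\set{(\xio(v)+\eta,\vco(v)): \eta\in \Xi(\vco(v))}\), so \(\vc(v) = \vco(v)\) and \(\xi\opt(v) - \xio(v) \in \Xi(\vco(v))\). By Proposition~\ref{prop:recoverableConcentrationSet}, \(\Xi(\vco(v))\subset \R\), so the imaginary parts of \(\xi\opt\) and \(\xio\) already agree everywhere and the scalar map \(\psi(v) := \phi\opt(v) - \phio(v)\) takes values in \(\Xi(\vco(v))\). Under the genericity assumption \(\vco\notin\mZ_c\), the support of \(\vPhi\vco(v)\) is full at each \(v\), so by~\eqref{eq:integersForSolutionSet} the set \(\Xi(\vco(v))\) reduces to a single set \(\Xi\) depending only on the echo times, with distance from \(0\) to \(\Xi\setminus\set{0}\) equal to \(r_Z\).

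The remaining step is a discrete propagation argument. Both \(\phio\) and \(\phi\opt\) lie in \(\Cg\), so for every \(v\in V\) and every \(v'\in \ngh(v)\) the forward-difference estimate gives
\[
    |\psi(v') - \psi(v)| \leq |\phi\opt(v') - \phi\opt(v)| + |\phio(v') - \phio(v)| \leq 2\epsg(v),
\]
since each term on the right is dominated by the corresponding gradient norm. Let \(U = \set{v\in V:\psi(v) = 0}\). By hypothesis \(U\) is non-empty. For any \(v\in U\) and any \(v'\in\ngh(v)\) we have \(\psi(v')\in \Xi\) and \(|\psi(v')|\leq 2\epsg(v)\); provided \(2\epsg(v) < r_Z\) the only compatible value is \(\psi(v') = 0\), so \(v'\in U\). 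Thus \(U\) absorbs its graph-neighbors, and connectedness of \(V\) under \(\ngh\) forces \(U = V\), giving \(\phi\opt = \phio\) and proving the theorem.

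The main obstacle I anticipate lies in the interplay between the gradient bound \(\epsg\) and the discreteness gap \(r_Z\). The propagation argument quietly requires an inequality of the form \(2\epsg(v) < r_Z\) along every edge of \(V\) traversed when walking out of the coincidence point, a sharpness condition on \(\Cg\) that the theorem statement leaves implicit and that should either be made a standing hypothesis of the reconstruction scheme or be verified a posteriori. A secondary technical point is confirming that \(\Xi(\vco(v))\) is genuinely \(v\)-independent, which relies on \(\mZ_c\) absorbing those concentration vectors for which \(\vPhi\vc\) has deficient support, so that away from \(\mZ_c\) the integers \(p,q\) in~\eqref{eq:integersForSolutionSet} depend only on the echo times.
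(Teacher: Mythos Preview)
Your proposal is correct and follows essentially the same route as the paper: use Theorem~\ref{thm:solutionSetStructure} to force \(\phi\opt(v)-\phio(v)\in\Xi(\vco)\) at every voxel, then combine the gradient constraint with the gap \(r_Z\) to propagate the coincidence \(\psi(v)=0\) from one voxel to its neighbors (the paper phrases this as a contradiction at a boundary voxel of \(V_0\), you as a direct absorption argument, but the content is identical). You are also right to flag the implicit hypotheses \(2\epsg<r_Z\) and connectedness of \(V\); the paper's own proof uses them without stating them in the theorem.
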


This result supports our approach. As the constraints do not change the objective function, and thus the global minimizers, we can leverage the favorable structure of the residual to ensure a unique solution exists.

\subsection{Connection to harmonic fieldmaps}\label{sec:harmonicFieldmaps}

A common assumption of the fieldmap is that it is the superposition of a harmonic component and a perturbation due, for example, to susceptibility effects~\cite{zhou_background_2014,bao_whole_2019,bredies_perfect_2019}. Our constraints implicitly enforce this structure. In fact, if \(v\in \Vint\) then
\[
    \Delta \phio(v) = \nabla\adj \nabla \phio(v) = \phio(v + e_1) + \phio(v + e_2) - 4\phio(v) + \phio(v - e_1) + \phio(v - e_2). 
\]
For simplicity, assume that \(\epsg \equiv \epso\). If \(\phio\in \Cg\) then
\[
    |\Delta\phio(v)| \leq 4\epso.
\]
The constraints implicitly imply that an element of \(\Cg\) admits the decomposition
\[
    \phio = \phih + \psi
\]
where \(\Delta \phih = 0\) and \(\Delta\psi = \chi\) with \(|\chi| \leq 4\epso\). Therefore, our assumption is robust in the sense that it adapts to fieldmaps that may be small perturbations of harmonic maps. When \(\eps\) is not constant, then the same argument as before holds, but the magnitude of the source is location-dependent. 

\subsection{Stability and robustness}
\label{eq:noiseImaging}

When the signal \(\vso\) is corrupted by additive noise or by model misspecification of the model as discussed in Section~\ref{sec:stability} we propose to solve 
\[
   \begin{aligned}
   & \underset{\xi,\vs}{\text{minimize}}
   & & \sums_{v\in V}\, f(\xi(v), \vs(v))
   & \text{s.t.}
   & & \xi \in \Cg,\,\, v\in V:\, \nrm{\vs(v) - \vy(v)}_2 \leq \delta(v)
   \end{aligned}
\]
with projected gradient descent. The same arguments used in Section~\ref{sec:stability} show that if \(\vso\) if feasible for the constraint, then \((\xio,\vco)\) is a global minimizer for the above. 

\section{Experiments}\label{sec:experiments}

We present numerical experiments highlighting the practical consequences of our theoretical developments. We consider water, fat, and silicone as the chemical species potentially in the sample. For fat, we use the 6-peak model in~\cite{hamilton_vivo_2011} whereas for silicone we use the 1-peak model in~\cite{stelter_hierarchical_2022}. Our results can be reproduced with the code found in the GitHub repository~\href{https://github.com/csl-lab/CSITools}{{\tt csl-lab/CSITools}}.

\subsection{Structure of the solution set}\label{sec:experiments:solutionSet}

Both Lemma~\ref{lem:periodicitySetOfW} and Theorem~\ref{thm:solutionSetStructure} characterize the generic structure of the solution set when \(n_e \geq 2n_s\) and show its dependence on the choice of echo times. In this experiment we consider echo times of the form \(t_k = t_0 + \Delta t k\) where \(t_0 = 1.3\)ms and \(\Delta t = 1.05\)ms. 

When the concentrations \(\vco\) are known, Lemma~\ref{lem:periodicitySetOfW} relates the support of \(\tvso\) to the possible values that \(\xi\) can take so that \((\xi,\vco)\in \sol(\vso)\). Since all solutions differ only on their real part, we can evaluate the error \(\phi \mapsto \nrm{\vI_{n_e} - \vW(\phi)}_F\) to identify this set. To simplify the experiments, we assume that no component of \(\tvso\) is zero. Fig.~\ref{fig:wf:wset} shows the error when the sample contains only water and fat for \(4, 6, 7\) and \(8\) echo times. Observe that the {\em only} impact of increasing the number of echo times is in slightly increasing the magnitude of the error. However, the values at which the error is zero, and the oscillations of the error overall, do not change. This behavior persists when the sample contains water, fat and silicone. Fig.~\ref{fig:wfs:wset} shows the error for \(6, 7\) and \(8\) echo times. 

In contrast, Theorem~\ref{thm:solutionSetStructure} shows the structure of the solution set for \(n_e \geq 2n_s\) and generic concentrations \(\vco\). However, we can provide a complete analysis using the arguments in Section~\ref{apx:solutionSet} to approximate the set \(Z_{\Delta}\) of \(\eta\in\C\) such that \(\vDi(\eta)\) has non-trivial nullspace when \(n_s, n_e\) are not too large, and when the echo times are commensurable. In this case, we may write \(t_k/t_{n_e} = p_k/q_k\) for some \(p_k,q_k\in \N\) and make a change of variables \(z = e^{2\pi i \eta t_{n_e}/q}\) where \(q = \LCM(\set{q_1,\ldots, q_{n_e}})\). If we then define \(\tilde{p}_k = p_k q / q_k\) we can write
\[
  \vW(\eta) = \vW(z) = \diag(z^{\tilde{p}_1},\ldots, z^{\bar{p}_{n_e}}).
\]
As a consequence, if \(S\subset \set{1,\ldots, n_e}\) is a subset of \(2n_s\) elements and \(\vDi_S\) is the \(2n_s\times 2n_s\) submatrix formed by the rows with index in \(S\), the function \(\det(\vDi_{S})\) is a polynomial on \(z\). By finding the roots for all such polynomials, which is tractable in our case, we can determine the set \(Z_{\Delta}\). Remark that under the hypotheses of Theorem~\ref{thm:solutionSetRank} the roots lie in the unit circle or, equivalently, \(Z_{\Delta}\) is a subset of the reals. Fig.~\ref{fig:wf:sset} shows the smallest singular value of \(\vDi\) for a sample of water and fat for \(4, 6, 7\) and \(8\) echo times. The zeros of this function are precisely the elements in \(Z_{\Delta}\), and thus the values at which there may be either swaps, or where the concentrations may be recovered exactly. The green points are those at which the concentrations may be recovered exactly, and thus they correspond to \(\Xi(\vco)\) when \(\vco\notin \mZ_c\); the red points indicate that there may be swaps when \(\vco\in \mZ_c\). Increasing the number of echo times does not change the structure of this set. Fig.~\ref{fig:wfs:sset} shows similar results when water, fat and silicone are in the sample. The structure of the set is more complex and, in fact, swaps may be generated for different values of the fieldmap.

\subsection{Local recovery}\label{sec:experiments:localRecovery}

The fact that there is a neighborhood of \(\xio\) around which the residual \(\fo\) has positive curvature is critical to have local recovery conditions. We can compute this radius explicitly when the true parameters are known. We do this for a water, fat and silicone {\em in silico} phantom (see Section~\ref{sec:experiments:insilico} below) and we evaluate the bounds that we provided for this radius in Theorem~\ref{thm:localExactRecovery}. Fig.~\ref{fig:rmnt:voxels} shows
\[
    Q_{\xio,\vso}(r) = \min_{\xi\in \Hp:\, |\xi - \xio| = r}\,\frac{\nrm{\vR'(\xi)\vso}_2^2 - |\iprod{\vso}{\vR'(\xi)\vR''(\xi)\vso}|}{\nrm{\vR'(\xio)\vso}_2^2}
\]
for every voxel in the image. This quotient decreases by 50\% in a range between \(\approx 20\)Hz to \(\approx 55\)Hz and reaches zero by \(\approx 45Hz\) to more than 60Hz. This shows that the curvature of the residual can vary significantly between voxels. Fig.~\ref{fig:rmnt:image} shows the radius at which the value attains 50\% of its value, illustrating the spatial dependence on the signal at each voxel. Fig.~\ref{fig:rmnt:bnd} shows the estimates found by the bound in Theorem~\ref{thm:localExactRecovery}. As discussed earlier, it severely underestimates the radius, attaining a maximum value of \(\approx 0.35\)Hz. In contrast, in Fig.~\ref{fig:rmnt:bnd+} we show the estimate found using the improved bound in~\eqref{eq:radiusOfMonotonicityBound:tighter}. As can be seen, although it still underestimates the radius, it provides a bound of \(\approx 13\)Hz.

\begin{figure*}[!t]
    \centering
    \begin{subfigure}[t]{0.245\textwidth}
        \includegraphics[width=0.95\textwidth]{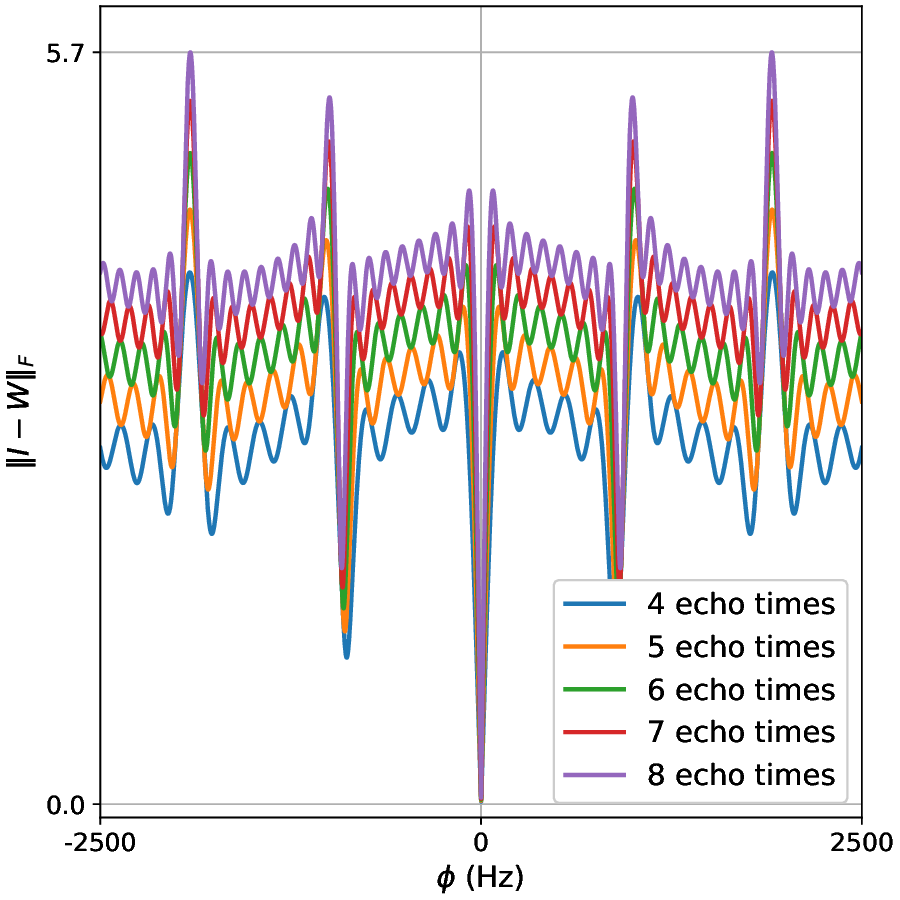}
        \caption{}
        \label{fig:wf:wset}
    \end{subfigure}%
    \begin{subfigure}[t]{0.245\textwidth}
        \includegraphics[width=0.95\textwidth]{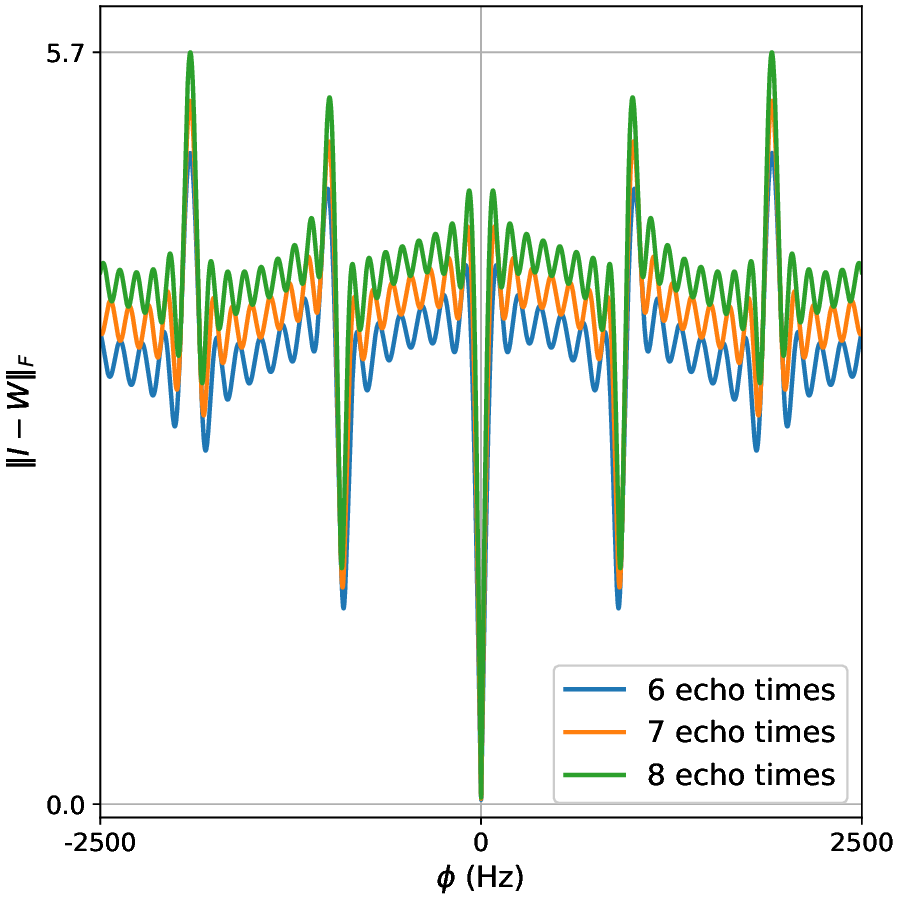}
        \caption{}
        \label{fig:wfs:wset}
    \end{subfigure}%
    \begin{subfigure}[t]{0.245\textwidth}
        \includegraphics[width=0.95\textwidth]{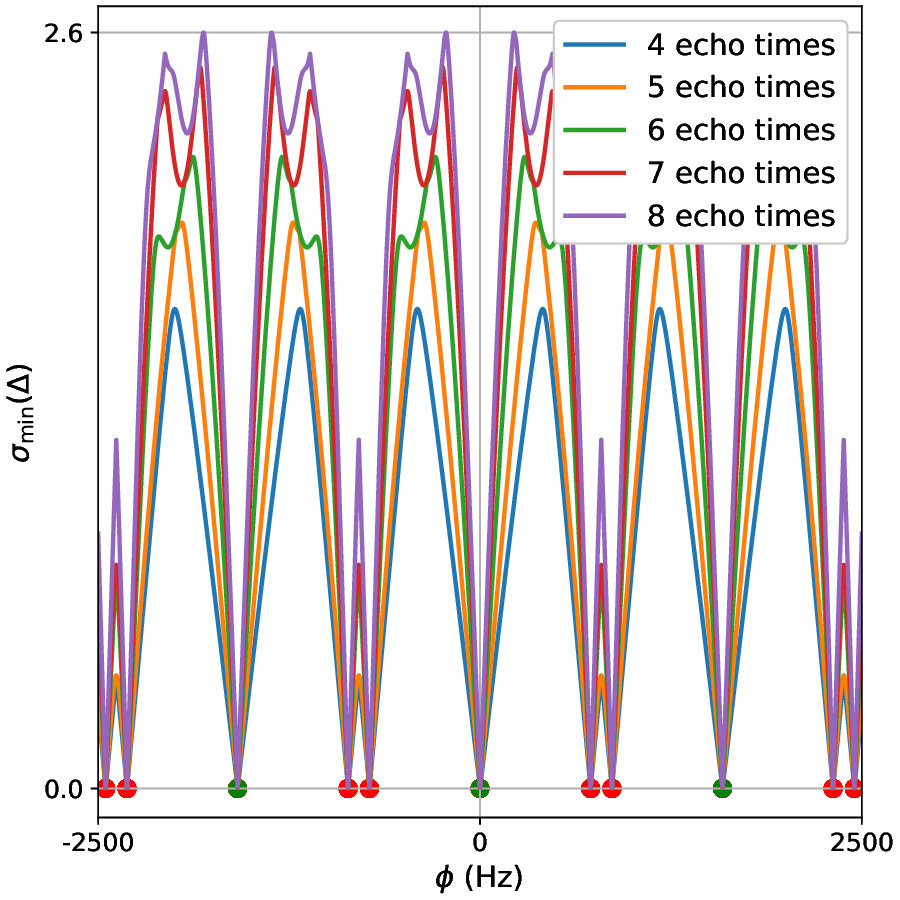}
        \caption{}
        \label{fig:wf:sset}
    \end{subfigure}%
    \begin{subfigure}[t]{0.245\textwidth}
        \includegraphics[width=0.95\textwidth]{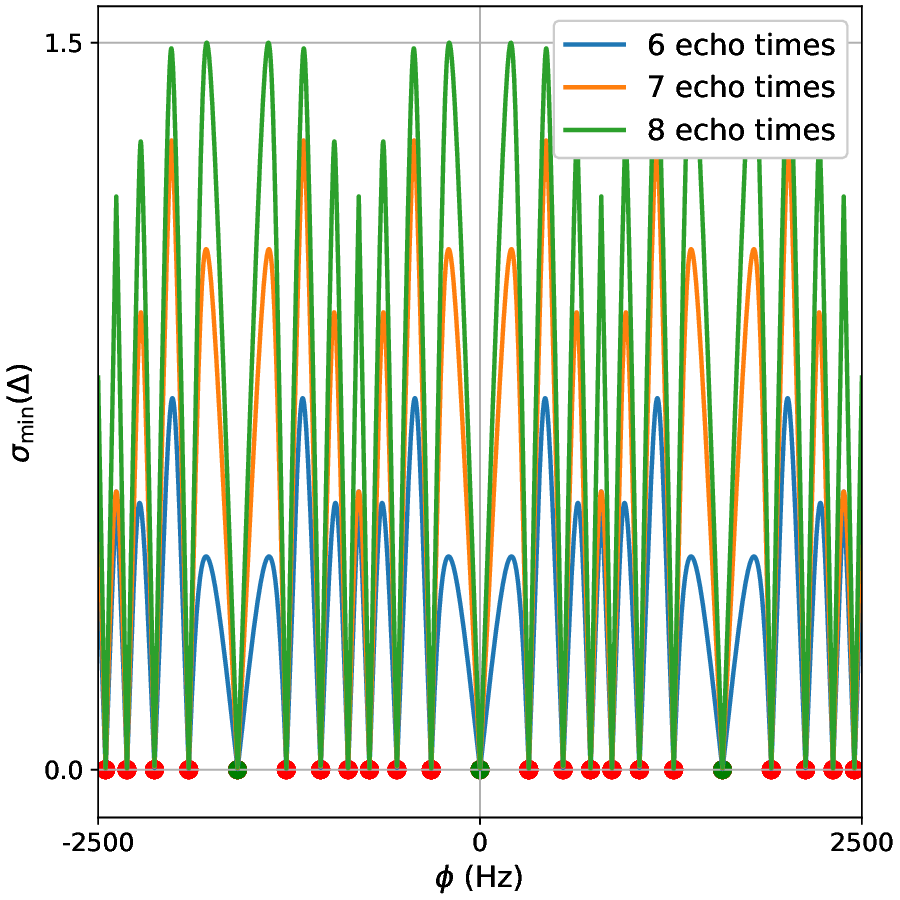}
        \caption{}
        \label{fig:wfs:sset}
    \end{subfigure}\\
    \vspace{3pt}
    \begin{subfigure}[t]{0.245\textwidth}
        \includegraphics[width=0.875\textwidth]{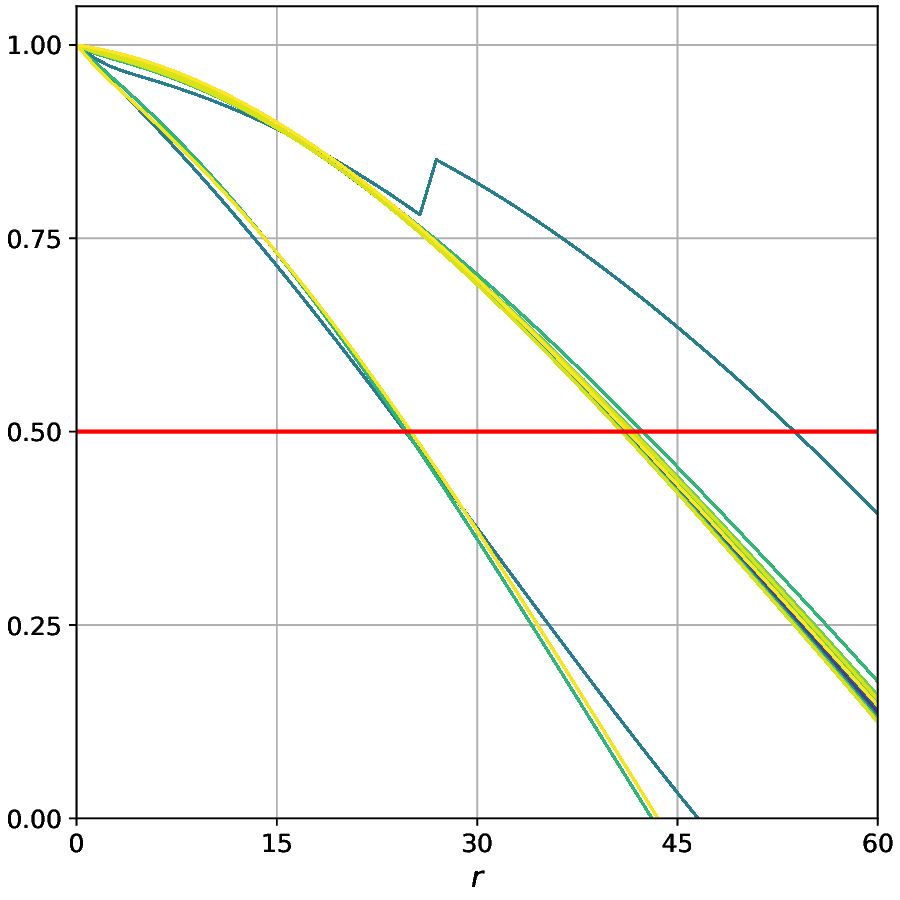}
        \caption{}
        \label{fig:rmnt:voxels}
    \end{subfigure}%
    \begin{subfigure}[t]{0.245\textwidth}
        \includegraphics[width=0.95\textwidth]{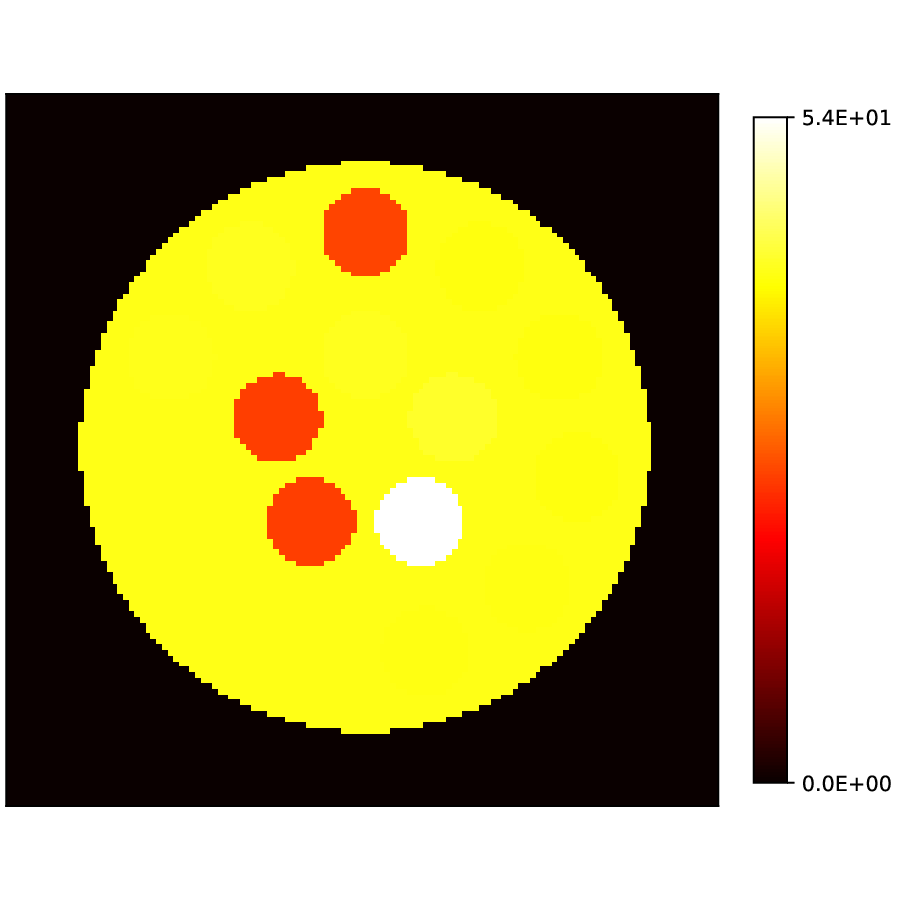}
        \caption{}
        \label{fig:rmnt:image}
    \end{subfigure}%
    \begin{subfigure}[t]{0.245\textwidth}
        \includegraphics[width=0.95\textwidth]{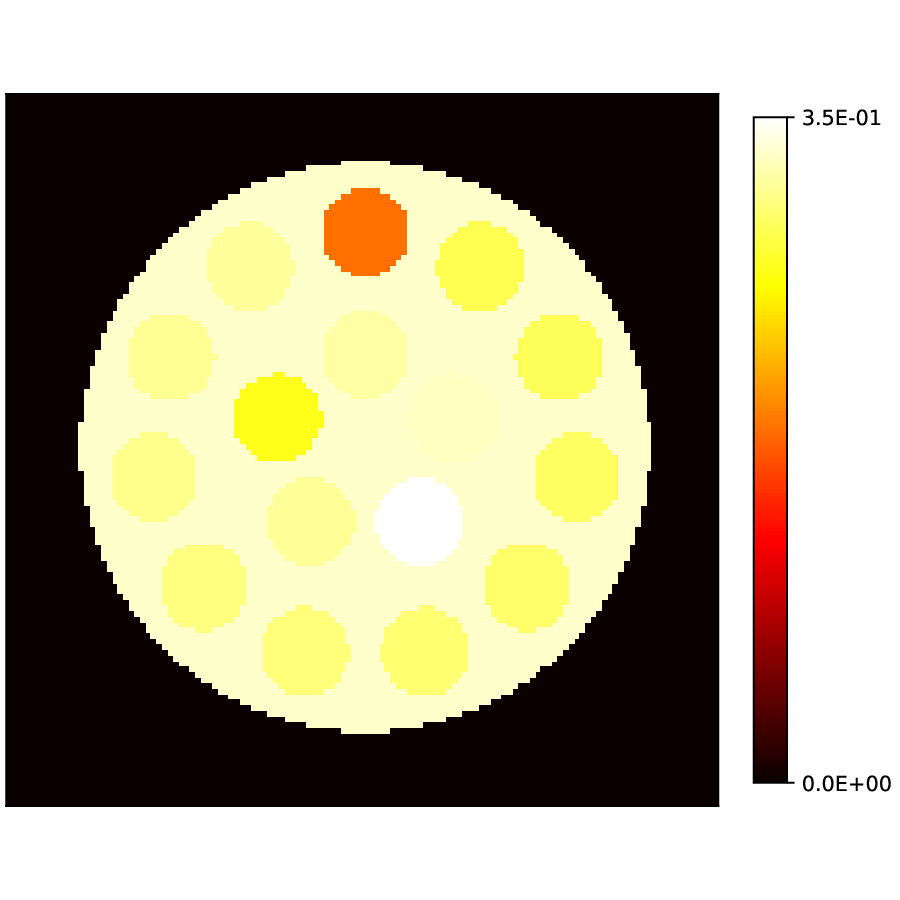}
        \caption{}
        \label{fig:rmnt:bnd}
    \end{subfigure}%
    \begin{subfigure}[t]{0.245\textwidth}
        \includegraphics[width=0.95\textwidth]{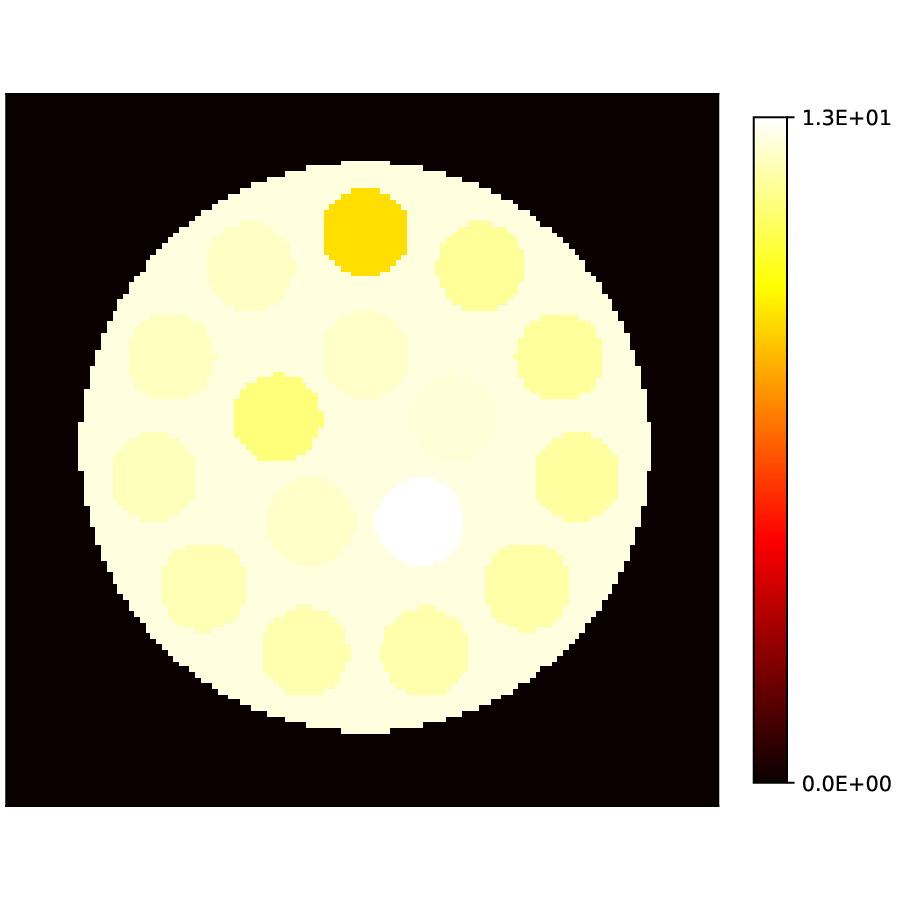}
        \caption{}
        \label{fig:rmnt:bnd+}
    \end{subfigure}%

    \caption{Structure of the solution set for a water and fat, and water, fat and silicone model. (a, b) Residual \(\nrm{\vI_{n_e} - \vW(\phi)}_F\) as a function of the fieldmap \(\phi\) for (a) water and fat and (b) water, fat and silicone. (c, d) Minimum singular value of \(\vDi(\phi)\) as a function of the fieldmap \(\phi\) for (c) water and fat and (d) water, fat and silicone. The green dots represents values for which the true concentrations may be recovered exactly, whereas the red dots represented the values for which some concentrations may be recovered with swaps. (e-h) Evaluation of the radius of positive curvature. (e) Fraction of reduction of the minimum eigenvalue of the Hessian at each voxel as a function of the distance for an {\em in silico} phantom of water, fat and silicone. (f) Radius at which the reduction is 50\%. (g) Estimate of the radius from the bound in Theorem~\ref{thm:localExactRecovery}. (h) Estimate of the radius from~\eqref{eq:radiusOfMonotonicityBound:tighter}.
    }
    \label{fig:exp:model}
\end{figure*}

\subsection{Experiments in an {\em in silico} phantom}\label{sec:experiments:insilico}

Our theoretical results show that generic concentrations \(\vco\) and \(r_2^*\) maps can be recovered exactly even when the fieldmap is not identifiable. To illustrate the impact of this fact, we perform a recovery experiment on a water (Fig.~\ref{fig:silico:wTrue}), fat (Fig.~\ref{fig:silico:fTrue}) and silicone (Fig.~\ref{fig:silico:sTrue}) {\em in silico} phantom. The concentrations are all real. The values for the fieldmap and \(r_2^*\) used to generate the signal are shown in Figs.~\ref{fig:silico:phTrue} and~\ref{fig:silico:rTrue}. The echo times have the form \(t_k = t_0 + \Delta t k\) where \(t_0 = 1.238\)ms and \(\Delta t = 0.986\)ms with \(k\in \bset{0,\ldots, 5}\).

We solve~\eqref{opt:imaging:gradientBounds} using projected gradient descent as initial iterate a vector \(\xi^{(0)}\) with all components equal to one. Forward finite-differences were used to compute the gradient. The bound on the norm of the gradient is \(30\)Hz at voxels with non-zero signal magnitude, and \(1\)kHz at voxels with zero signal magnitude. This avoids imposing artificial constraints at voxels with no signal. The step size used is \(\alpha = 10^3\) and the termination conditions 

In Figs.~\ref{fig:silico:wRec},~\ref{fig:silico:fRec} and~\ref{fig:silico:sRec} show the recovered concentrations of water, fat and silicone, and Fig.~\ref{fig:silico:rRec} shows the recovered \(r_2^*\). These recovered quantities are all qualitatively similar to their true values. In contrast, Fig.~\ref{fig:silico:phRec} shows the recovered fieldmap, which differs from its true value. By comparing the errors in the recovered concentrations, we see that they are within a reasonable accuracy except in regions with a large magnitude for the fieldmap gradient, indicating a bound that is too small (Figs.~\ref{fig:silico:wErr},~\ref{fig:silico:fErr} and~\ref{fig:silico:sErr}). A similar behavior is seen in the recovered \(r_2^*\) (Fig.~\ref{fig:silico:rErr}). The error for the recovered fieldmap tends to be larger outside the area of the phantom (Fig.~\ref{fig:silico:phErr}).

\begin{table}[h]
    \centering
    \small
    \begin{tabular}{|c||c|c|c|}\hline
                    & MSE & SNR (dB) & PSNR (dB) \\ \hline\hline
       water        & 2.00\(\times\)10\(^{-4}\) & 33.37 & 36.98 \\
       fat          & 2.06\(\times\)10\(^{-4}\) & 20.42 & 36.85 \\
       silicone     & 1.16\(\times\)10\(^{-3}\) & 10.13 & 29.35 \\
       fieldmap     & 4.71\(\times\)10\(^{+4}\) &  0.21 &  6.60 \\
       \(r_2^*\)    & 1.55\(\times\)10\(^{-2}\) & 26.73 & 38.59 \\\hline
    \end{tabular}\\
    \vspace{3pt}
    \caption{Error metrics for an {\em in silico} phantom.}
    \label{table:silico}
\end{table}

\begin{figure*}[!t]
    \centering
    \begin{subfigure}[t]{0.19\textwidth}
        \includegraphics[width=0.99\textwidth]{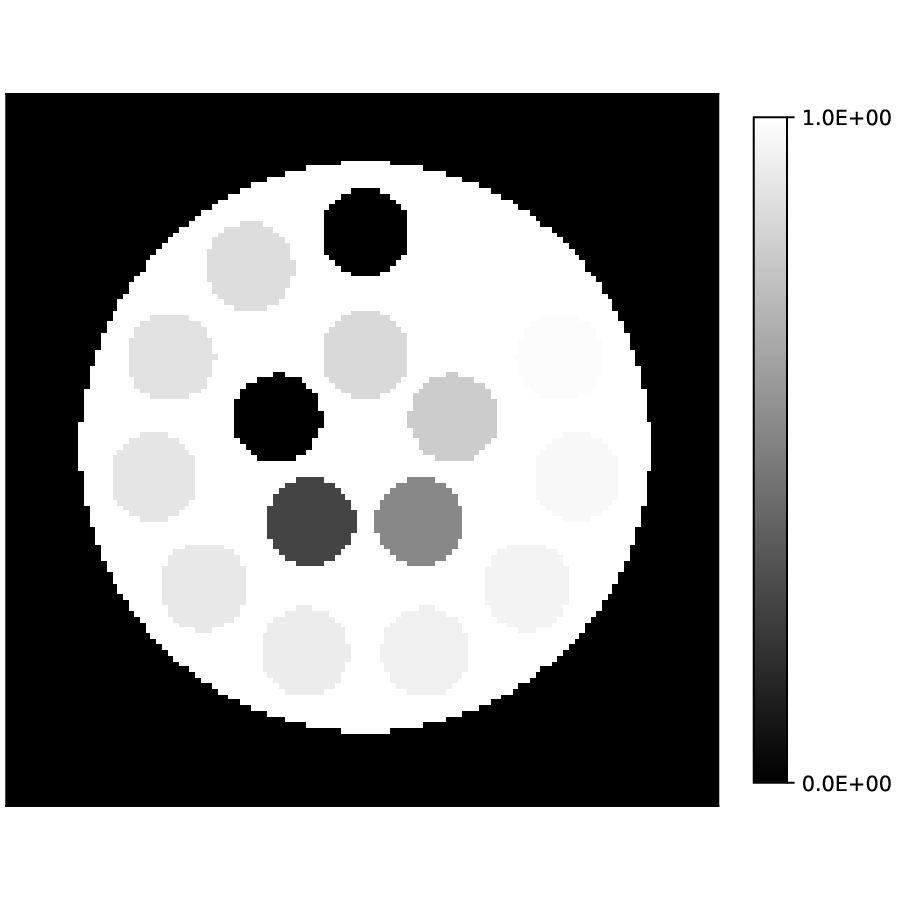}
        \caption{}
        \label{fig:silico:wTrue}
    \end{subfigure}%
    \begin{subfigure}[t]{0.19\textwidth}
        \includegraphics[width=0.99\textwidth]{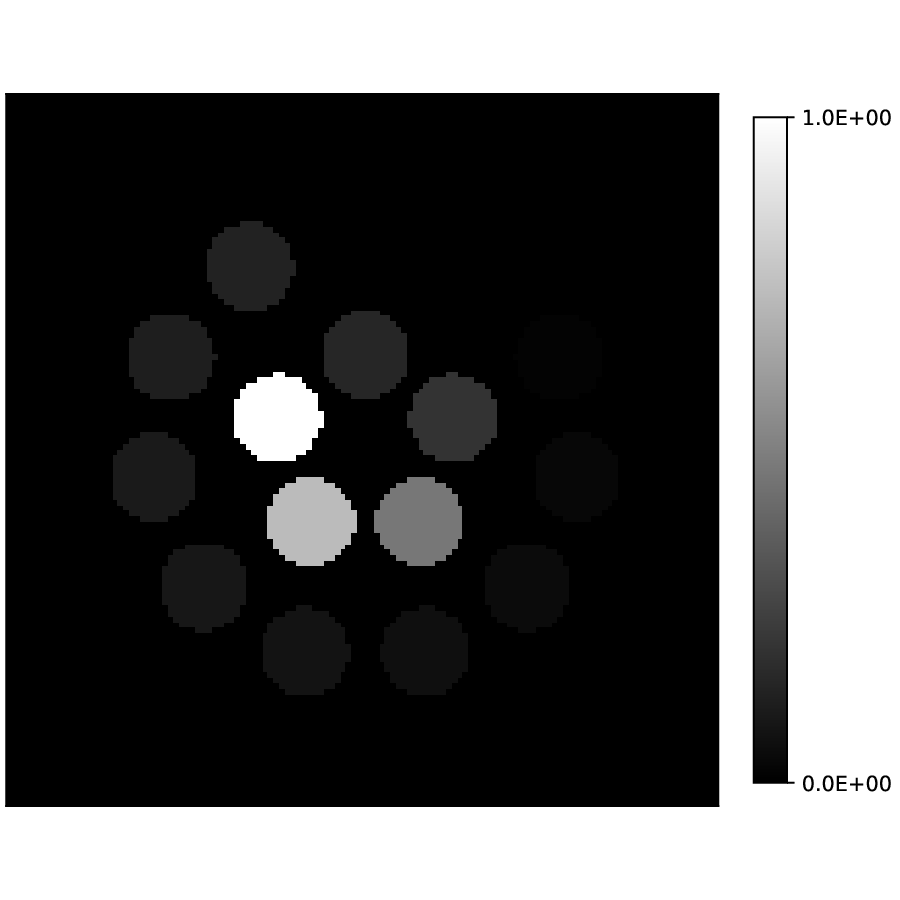}
        \caption{}
        \label{fig:silico:fTrue}
    \end{subfigure}%
    \begin{subfigure}[t]{0.19\textwidth}
        \includegraphics[width=0.99\textwidth]{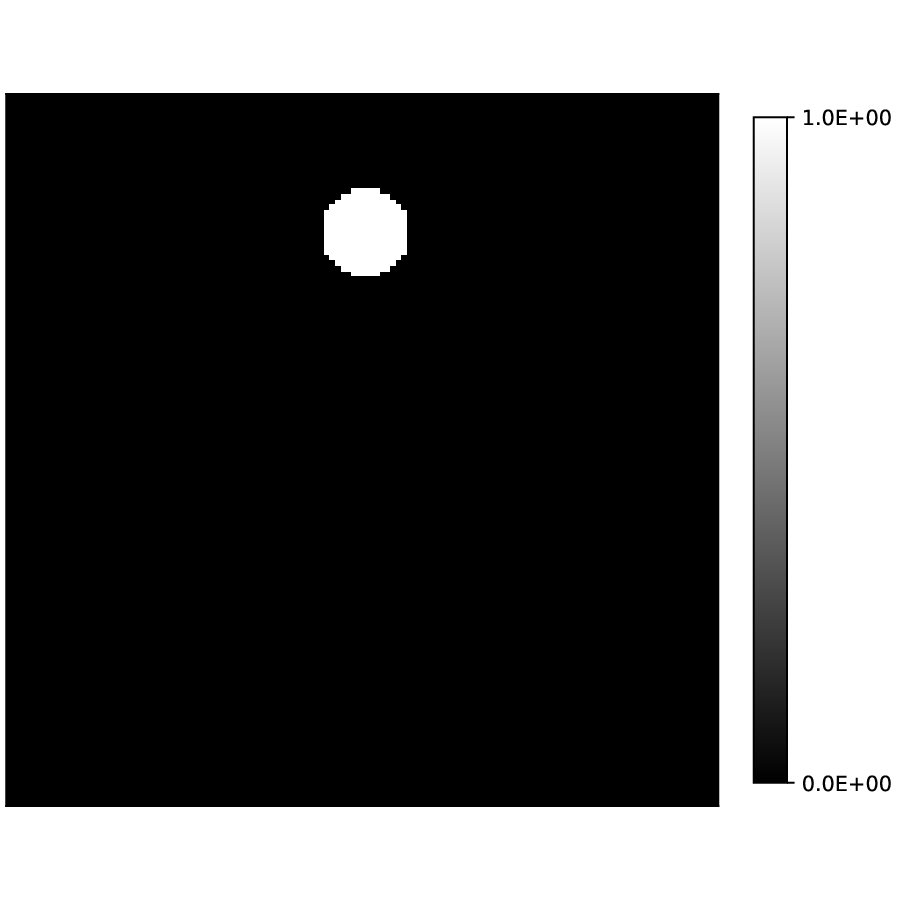}
        \caption{}
        \label{fig:silico:sTrue}
    \end{subfigure}%
    \begin{subfigure}[t]{0.19\textwidth}
        \includegraphics[width=0.99\textwidth]{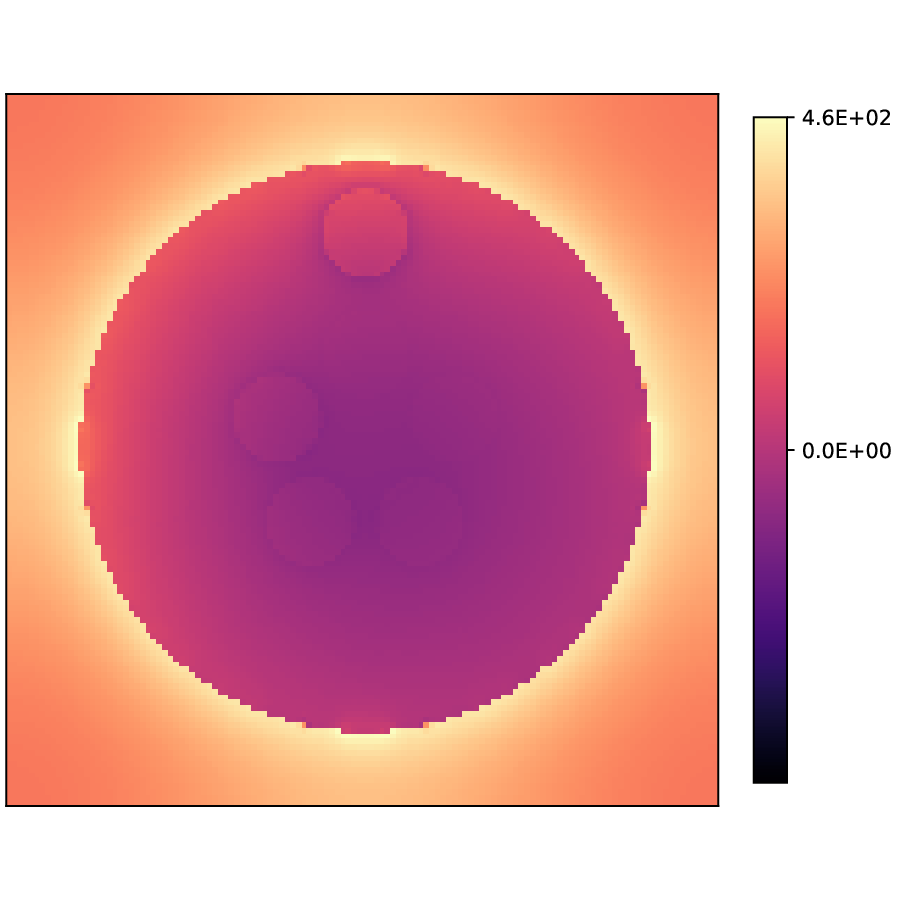}
        \caption{}
        \label{fig:silico:phTrue}
    \end{subfigure}%
    \begin{subfigure}[t]{0.19\textwidth}
        \includegraphics[width=0.99\textwidth]{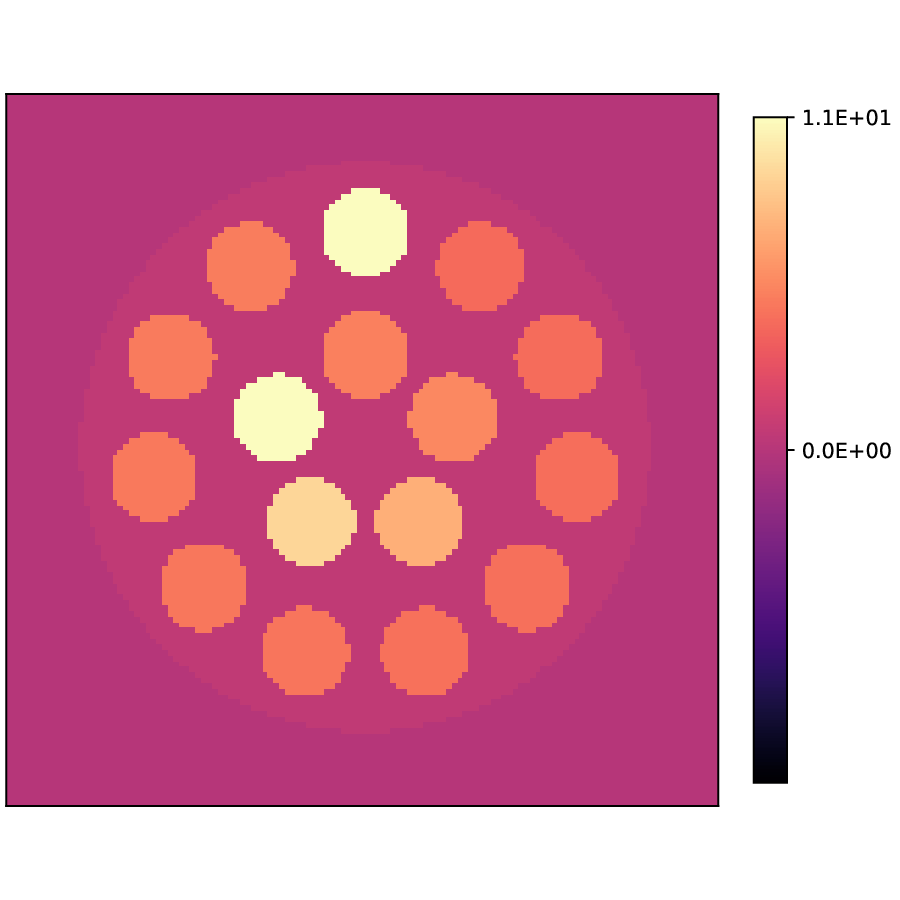}
        \caption{}
        \label{fig:silico:rTrue}
    \end{subfigure}\\
    \vspace{3pt}
    \begin{subfigure}[t]{0.19\textwidth}
        \includegraphics[width=0.99\textwidth]{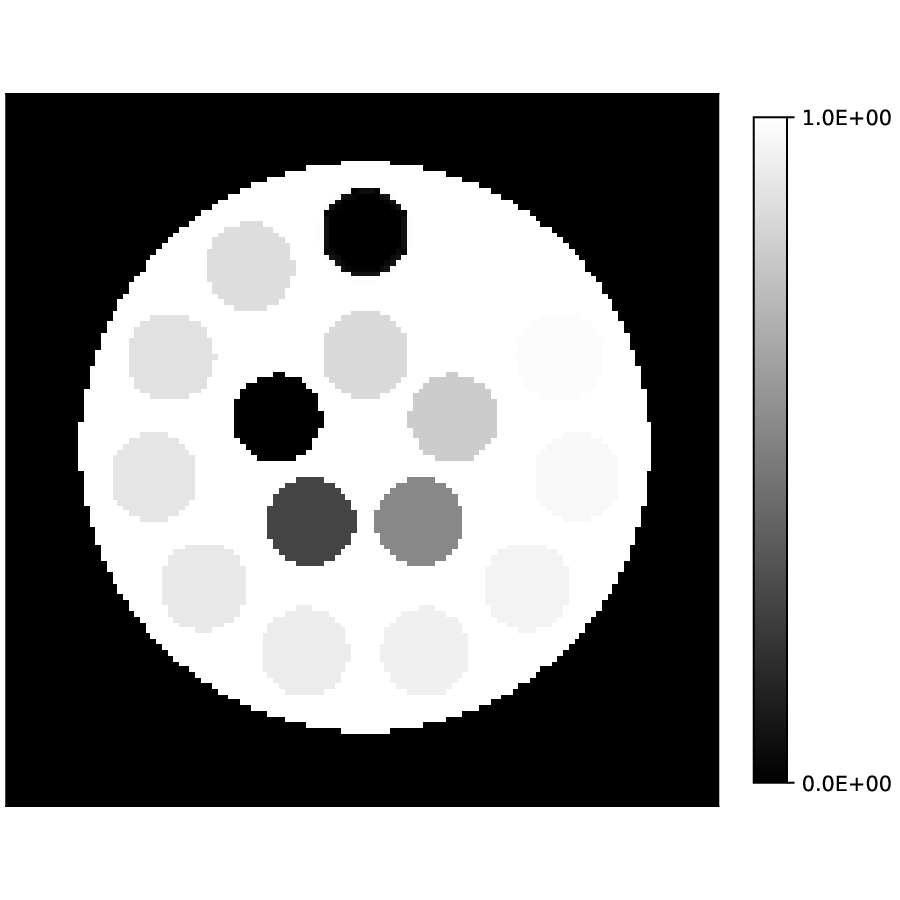}
        \caption{}
        \label{fig:silico:wRec}
    \end{subfigure}%
    \begin{subfigure}[t]{0.19\textwidth}
        \includegraphics[width=0.99\textwidth]{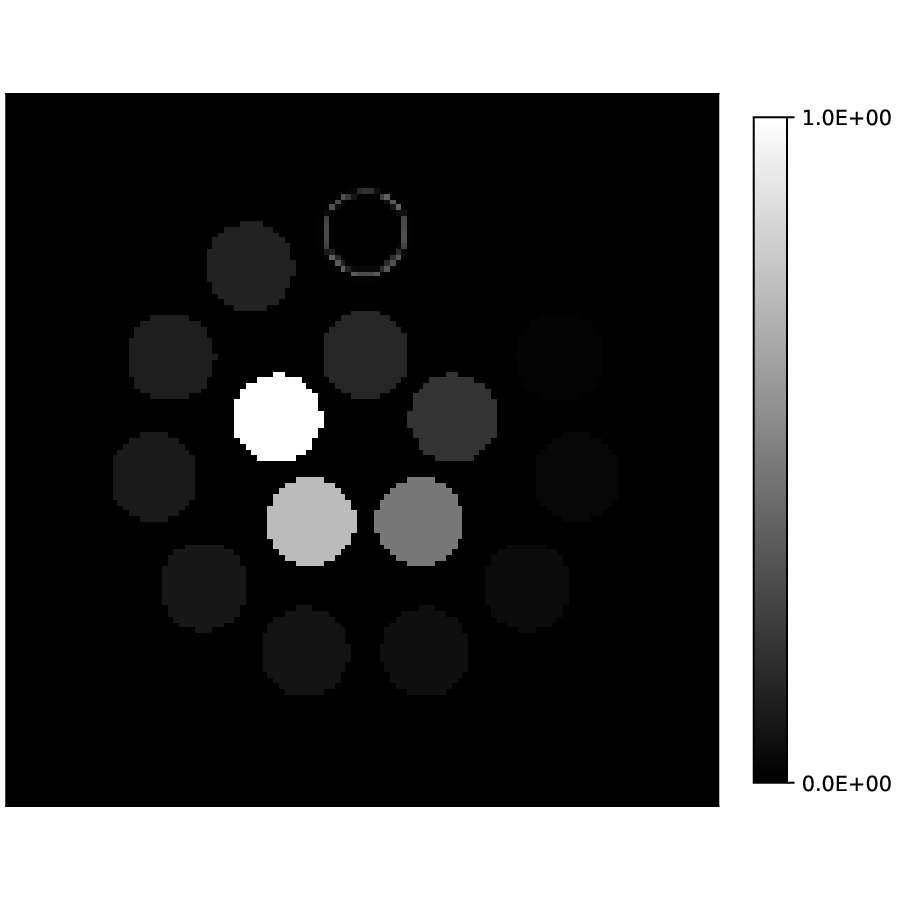}
        \caption{}
        \label{fig:silico:fRec}
    \end{subfigure}%
    \begin{subfigure}[t]{0.19\textwidth}
        \includegraphics[width=0.99\textwidth]{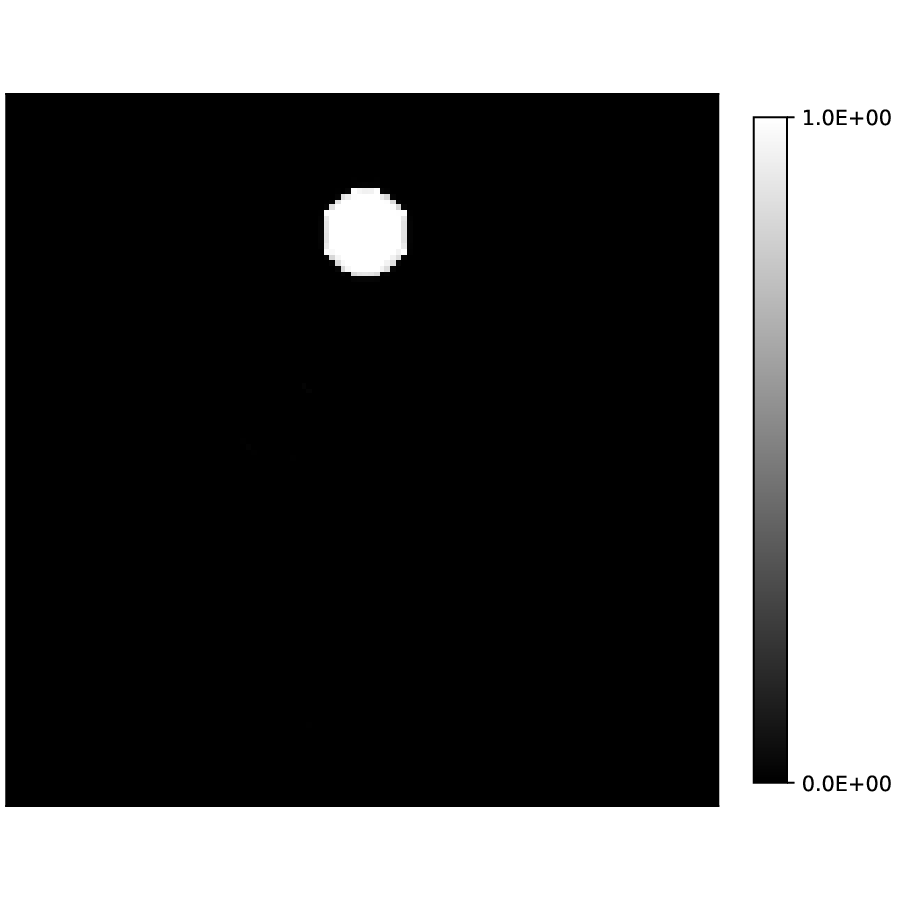}
        \caption{}
        \label{fig:silico:sRec}
    \end{subfigure}%
    \begin{subfigure}[t]{0.19\textwidth}
        \includegraphics[width=0.99\textwidth]{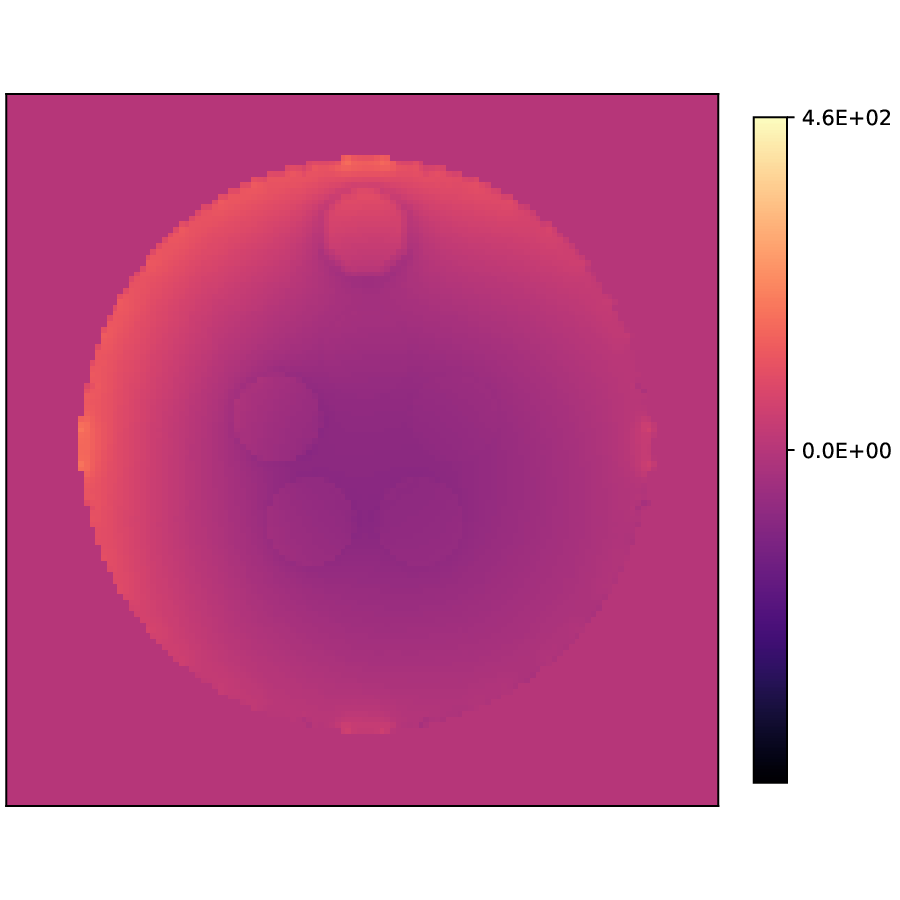}
        \caption{}
        \label{fig:silico:phRec}
    \end{subfigure}%
    \begin{subfigure}[t]{0.19\textwidth}
        \includegraphics[width=0.99\textwidth]{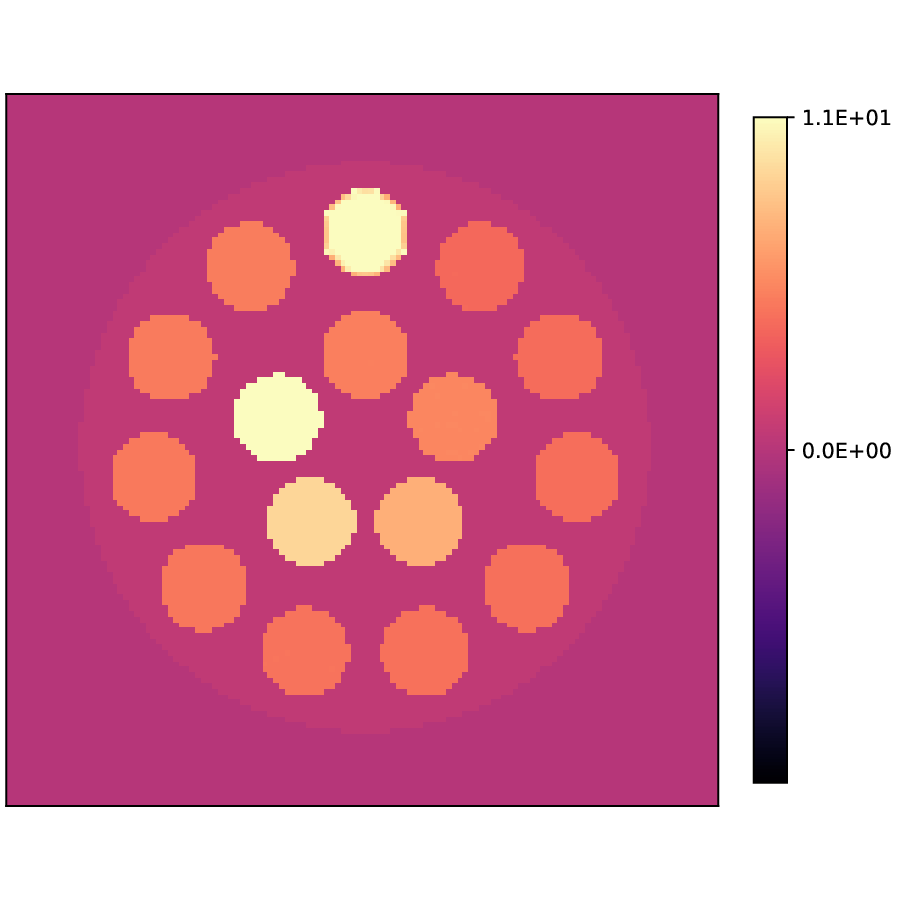}
        \caption{}
        \label{fig:silico:rRec}
    \end{subfigure}\\
    \begin{subfigure}[t]{0.19\textwidth}
        \includegraphics[width=0.99\textwidth]{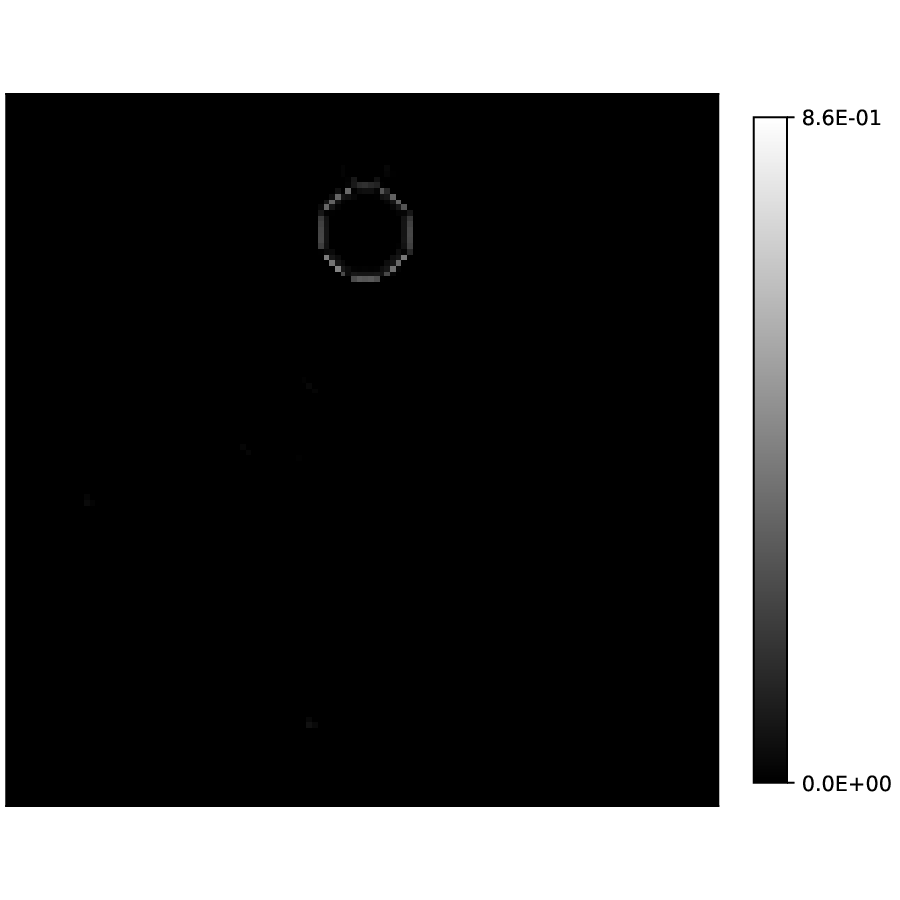}
        \caption{}
        \label{fig:silico:wErr}
    \end{subfigure}%
    \begin{subfigure}[t]{0.19\textwidth}
        \includegraphics[width=0.99\textwidth]{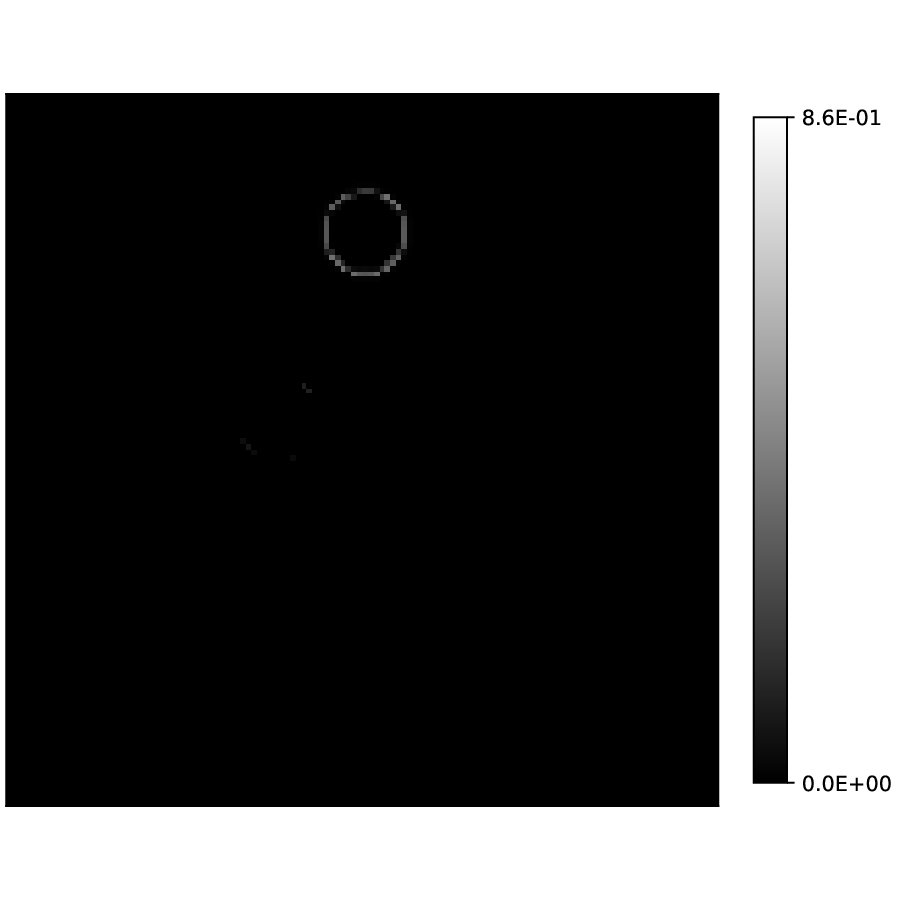}
        \caption{}
        \label{fig:silico:fErr}
    \end{subfigure}%
    \begin{subfigure}[t]{0.19\textwidth}
        \includegraphics[width=0.99\textwidth]{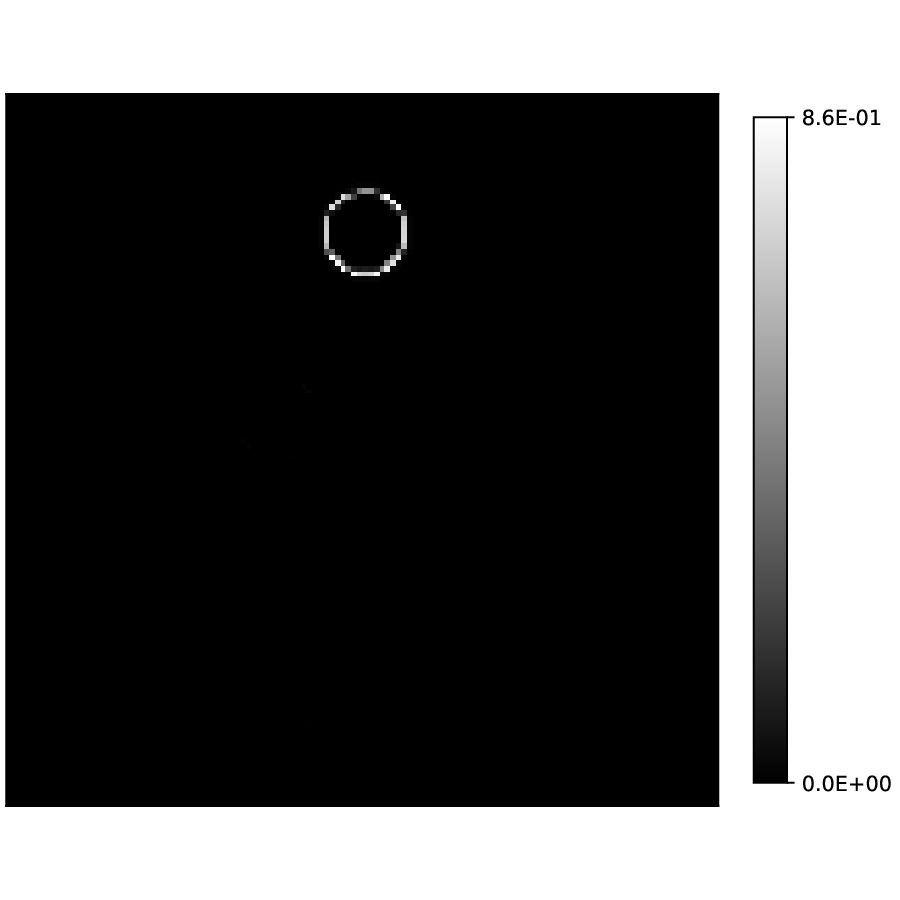}
        \caption{}
        \label{fig:silico:sErr}
    \end{subfigure}%
    \begin{subfigure}[t]{0.19\textwidth}
        \includegraphics[width=0.99\textwidth]{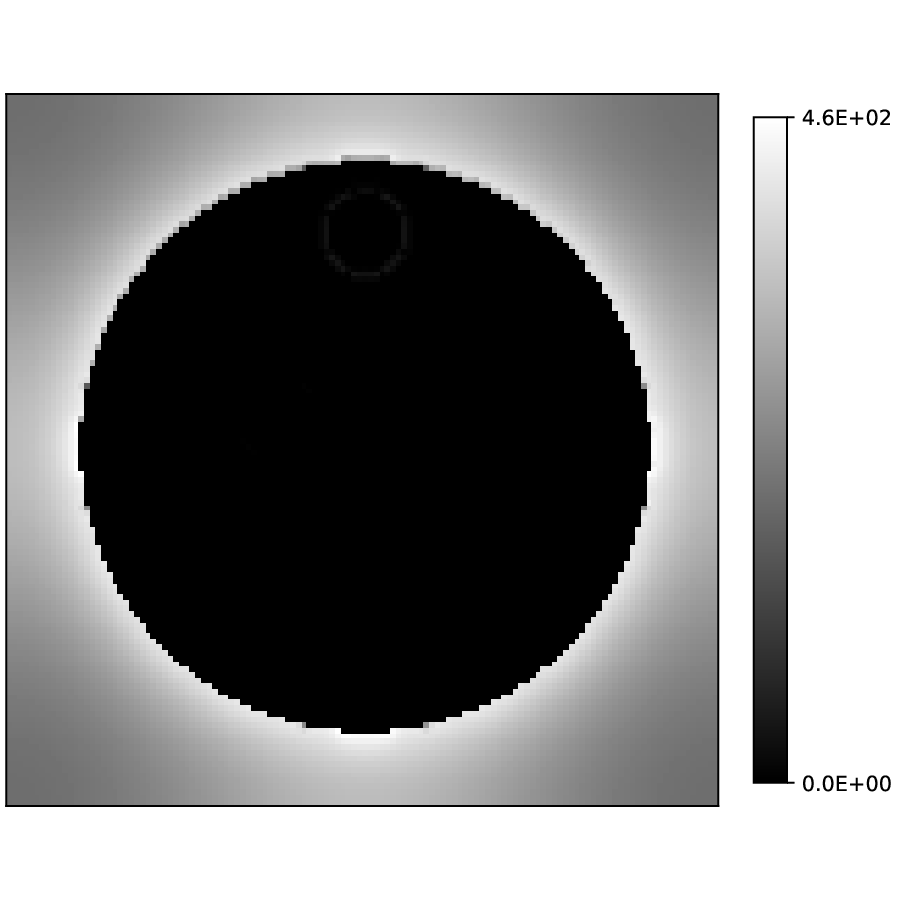}
        \caption{}
        \label{fig:silico:phErr}
    \end{subfigure}%
    \begin{subfigure}[t]{0.19\textwidth}
        \includegraphics[width=0.99\textwidth]{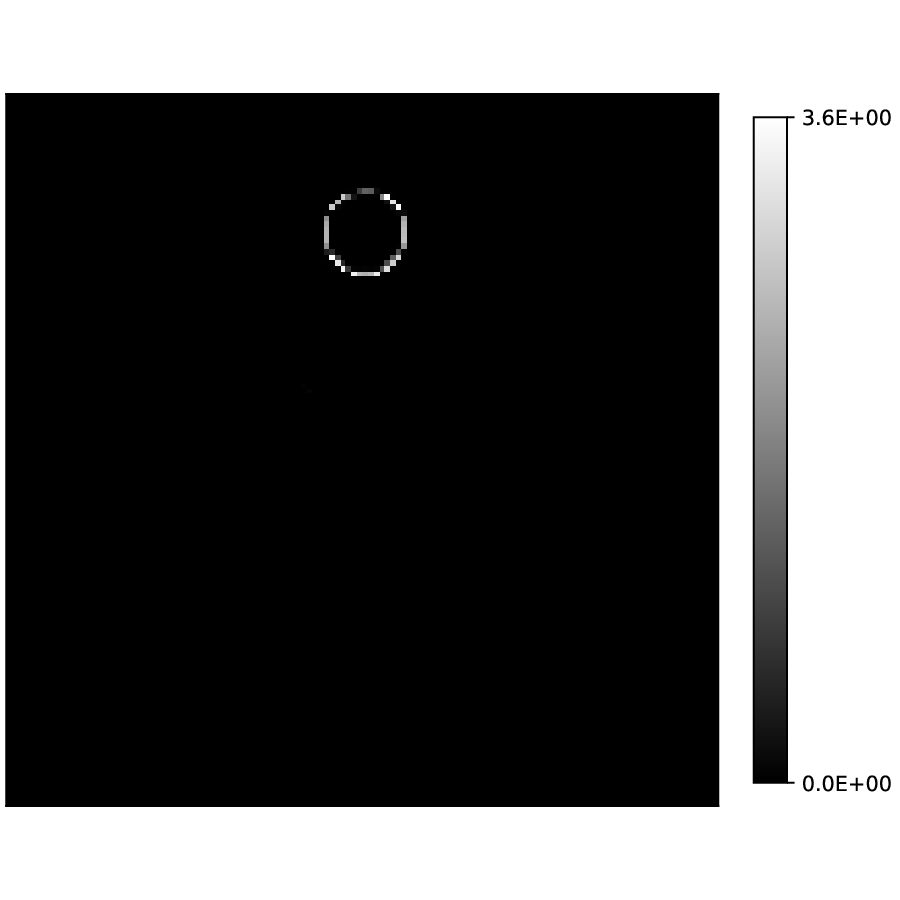}
        \caption{}
        \label{fig:silico:rErr}
    \end{subfigure}
    
    \caption{Recovery experiment in an {\em in silico} phantom. (a-e) True concentrations of (a) water, (b) fat and (c) silicone, (d) fieldmap and (e) \(r_2^*\). (f-j) Recovered concentrations of (f) water, (g) fat and (h) silicone, (i) fieldmap and (j) \(r_2^*\). (k-o) Differences between true and recovered concentrations of  (k) water, (l) fat and (m) silicone, and the (n) fieldmap and (o) \(r_2^*\) map.
    }
    \label{fig:exp:insilico}
\end{figure*}

\subsection{Experiments in an {\em in vitro} phantom}\label{sec:experiments:invivo}

To perform experiments using data from a {\em in vitro} water, fat and silicone phantom, we used a publicly available phantom introduced by~\cite{stelter_hierarchical_2022}. It consists of 15 vials with different mixtures of peanut oil and water, and a silicone structure placed the phantom. Following~\cite{stelter_hierarchical_2022} we use a 10-peak peanut oil model for fat, and a single peak at \(4.9\)ppm for silicone with a temperature correction of \(0.14\)ppm. The phantom was scanned at 3T with a gradient echo multi-echo acquisition with first echo at \(1.238\)ms and an echo spacing of \(0.986\)ms. 

To reconstruct the concentrations and the parameters, we used a bound on the magnitude of the gradient of \(20\)Hz at voxels with signal magnitude above 10\% of the maximum value of the signal accross the entire field of view.

Figs.~\ref{fig:invitro:wRec},~\ref{fig:invitro:fRec} and~\ref{fig:invitro:sRec} show the recovered concentrations of water, fat and silicone, and Figs.~\ref{fig:invitro:phRec} and ~\ref{fig:invitro:rRec} show the recovered fieldmap and \(r_2^*\). Fig.~\ref{fig:exp:invitro:error} shows the error of the recovered Proton Density Fat Fraction (PDFF) compared to the reference values obtained using Magnetic Resonance Spectroscopy (MRS-PDFF). Table~\ref{table:vitro} details the PDFF measurements in each vial.

\begin{table}[!h]
    \centering
    \small
    \begin{tabular}{|c||cc||}\hline
                    Vial & Recovered PDFF (\%) & MRS-PDFF (\%) \\ \hline\hline
        1       & 3.27 \(\pm\)  0.60 & 0 \\
        2       & 5.29 \(\pm\)  0.45 & 2.59 \\
        3       & 7.90 \(\pm\)  0.44 & 6.26 \\
        4       & 10.18 \(\pm\) 0.39 & 8.58 \\
        5       & 12.91 \(\pm\) 0.63 & 11.50 \\
        6       & 18.65 \(\pm\) 0.43 & 16.77 \\
        7       & 23.54 \(\pm\) 0.51 & 21.93 \\
        8       & 33.34 \(\pm\) 0.44 & 33.80 \\
        9       & 43.33 \(\pm\) 0.46 & 43.77 \\
        10      & 53.12 \(\pm\) 0.88 & 54.08 \\
        11      & 63.69 \(\pm\) 0.92 & 64.12 \\
        12      & 73.17 \(\pm\) 1.18 & 73.83 \\
        13      & 81.84 \(\pm\) 1.32 & 82.36 \\
        14      & 89.86 \(\pm\) 1.66 & 91.17 \\
        15      & 98.20 \(\pm\) 0.67 & 100.00 \\ \hline
    \end{tabular}\\
    \vspace{3pt}
    \caption{Error metrics for the {\em in vitro} phantom.}
    \label{table:vitro}
\end{table}

\begin{figure*}[!t]
    \centering
    \begin{subfigure}[t]{0.32\textwidth}
        \includegraphics[width=0.99\textwidth]{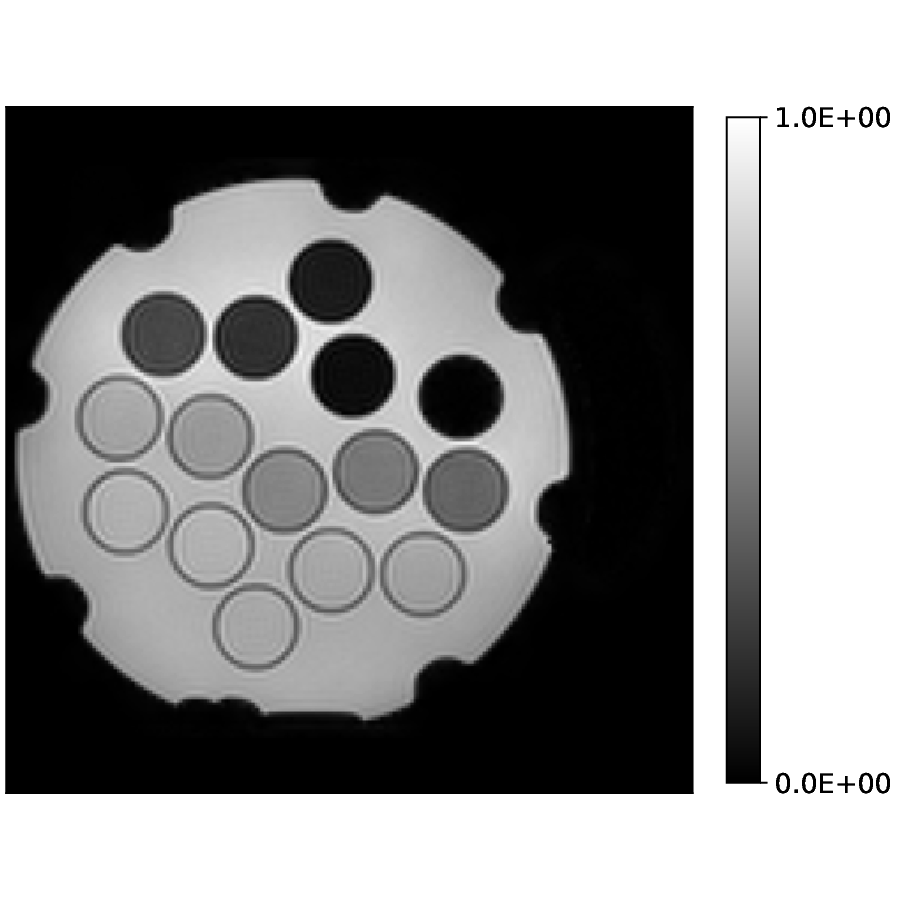}
        \caption{}
        \label{fig:invitro:wRec}
    \end{subfigure}%
    \begin{subfigure}[t]{0.32\textwidth}
        \includegraphics[width=0.99\textwidth]{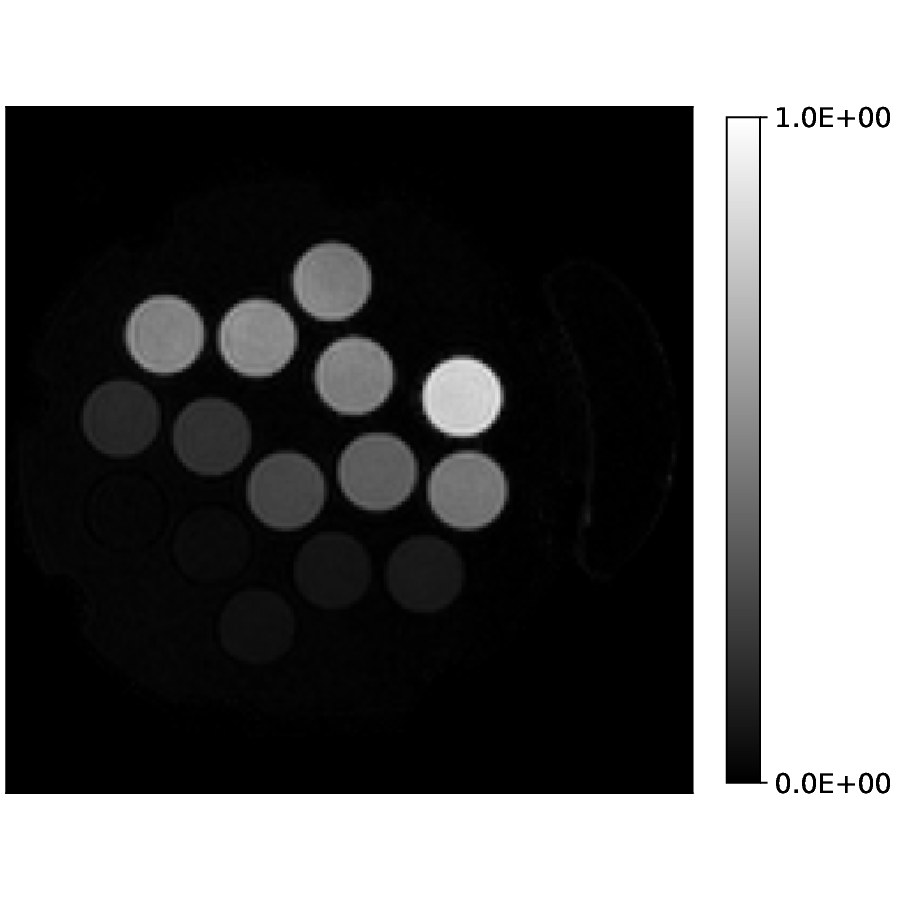}
        \caption{}
        \label{fig:invitro:fRec}
    \end{subfigure}%
    \begin{subfigure}[t]{0.32\textwidth}
        \includegraphics[width=0.99\textwidth]{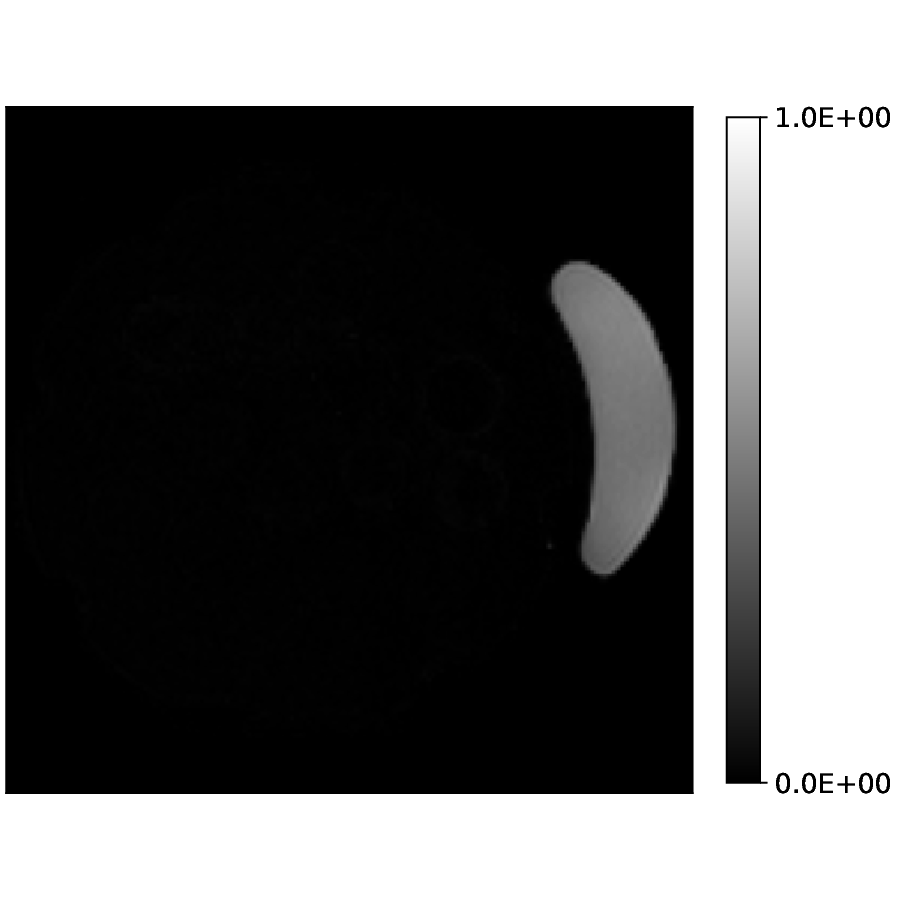}
        \caption{}
        \label{fig:invitro:sRec}
    \end{subfigure}
    \vspace{3pt}
    \begin{subfigure}[t]{0.32\textwidth}
        \includegraphics[width=0.99\textwidth]{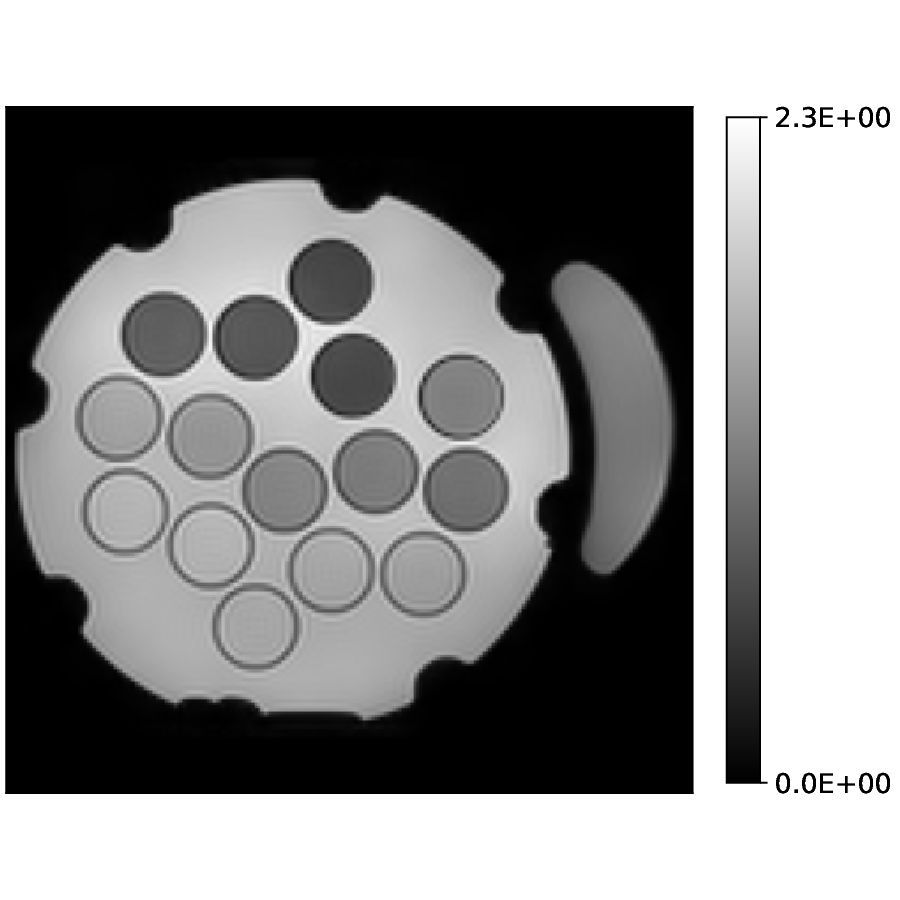}
        \caption{}
        \label{fig:invitro:signalNorm}
    \end{subfigure}%
    \begin{subfigure}[t]{0.32\textwidth}
        \includegraphics[width=0.99\textwidth]{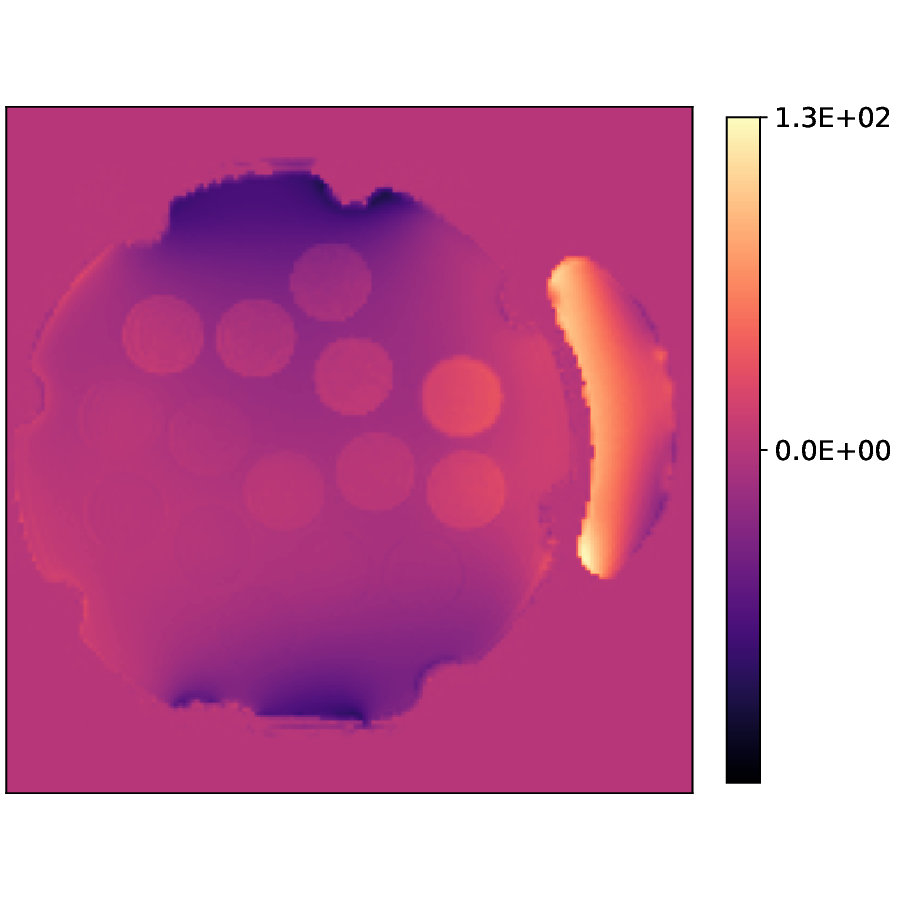}
        \caption{}
        \label{fig:invitro:phRec}
    \end{subfigure}%
    \begin{subfigure}[t]{0.32\textwidth}
        \includegraphics[width=0.99\textwidth]{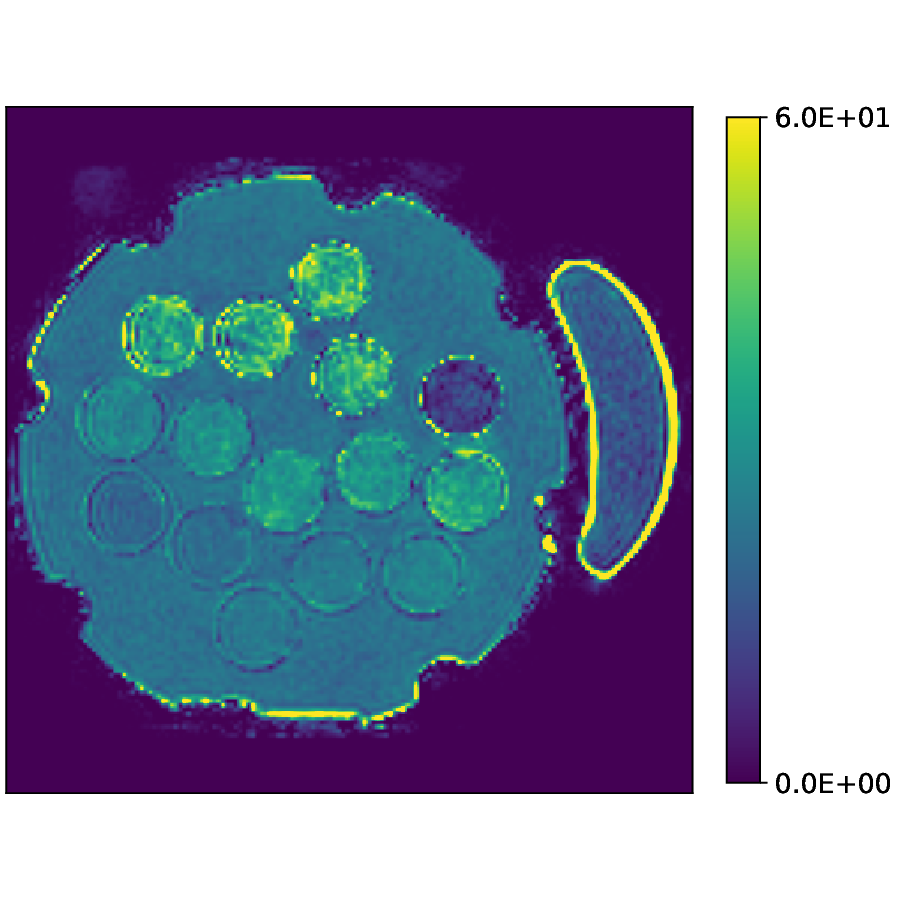}
        \caption{}
        \label{fig:invitro:rRec}
    \end{subfigure}%
    
    \caption{Recovery experiment in an {\em in vitro} water, peanut oil and silicone phantom. (a-c) Recovered concentrations of (a) water, (b) peanut oil and (c) silicone. (d) True signal norm. (e) Recovered fieldmap. (d) Recovered \(r_2^*\) map.
    }
    \label{fig:exp:invitro}
\end{figure*}

\begin{figure*}[!t]
    \centering
    \includegraphics[width=0.40\textwidth]{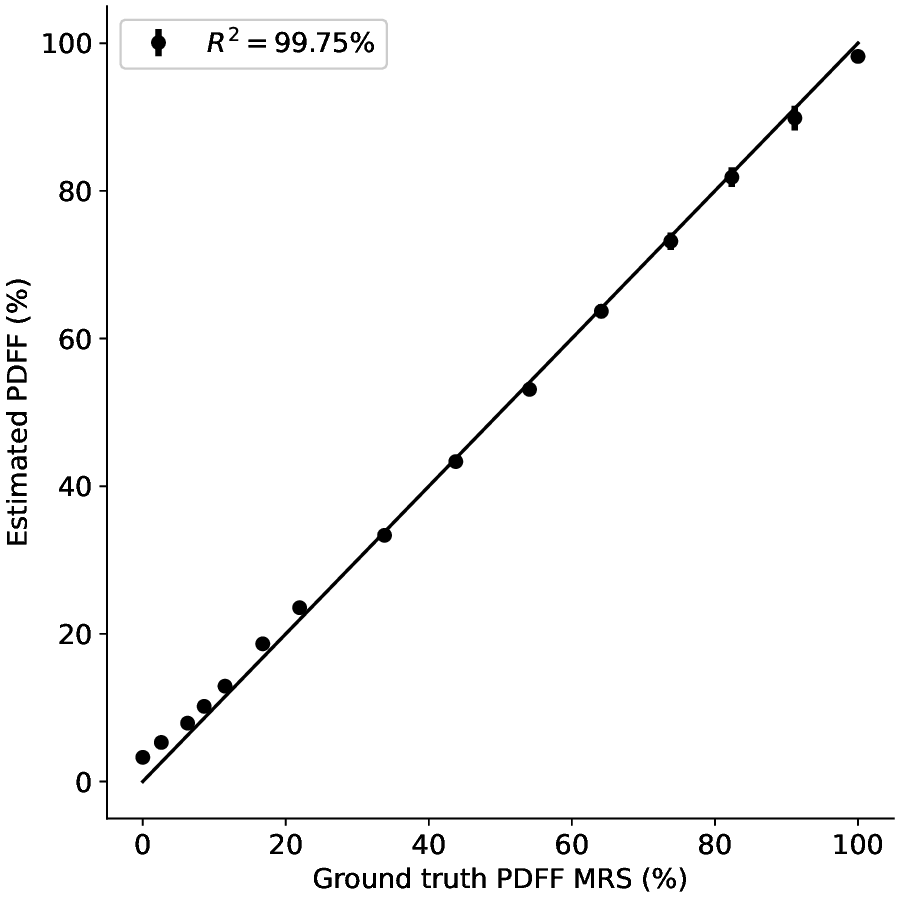}
    \caption{Mean and standard deviation for the error between the recovered PDFF and the reference PDFF obtained with MRS.}
    \label{fig:exp:invitro:error}
\end{figure*}

\section{Discussion}\label{sec:discussion}

Our theoretical results show how the choice of echo times impacts the structure of the solution set. More importantly, this structure is largely independent of the true parameter \(\xio\) and the true concentrations \(\vco\) and can be computed beforehand to determine which concentrations can be recovered exactly or may experience swap artifacts. Furthermore, our results show that the behavior of the residual around the true parameter \(\xio\) can vary significantly. Although the residual has positive curvature near the true parameter, the size of this neighborhood can change substantially depending on the acquisition parameters and the chemical species in the sample. Our results in an {\em in silico} phantom also confirm one of our main findings, namely, that the concentrations and \(r_2^*\) can be recovered accurately even if the fieldmap is not accurate. Our results in an {\em in vitro} phantom shows that our method achieves an accurate quantification of the MRI-PDFF and a correct separation of the three chemical species in the phantom. In this case, the bound of \(20\)Hz on the norm of the gradient was critical as otherwise swap artifacts appeared in the recovered silicone concentration, particularly in regions where the fieldmap is large. To analyze the accuracy and precision of the \(r_2^*\) estimations, a comparison with other algorithms is needed. Although this is part of future experiments, our results already show that accounting for \(r_2^*\) significantly reduces the estimation error and achieves a better recovery of the fieldmap.

\section{Conclusion}\label{sec:conclusion}

In this work we present an analysis of the signal model used in Chemical Shift Imaging. We identify suitable conditions under which exact recovery of the concentrations and \(R_2^*\) is possible, even when the fieldmap cannot be exactly recovered at every voxel. For the imaging problem, we propose a recovery method based on smooth non-convex optimization under convex constraints, and we provide recovery guarantees for this method. As future work, we shall explore how our theoretical findings can assist in the development of novel and robust signal acquisition strategies.

\section*{Acknowledgements}

C.A. was supported by the grant ANID – Fondecyt de Iniciación en Investigación - 11241250. C.SL. was partially funded by grants ANID - Fondecyt - 1211643 and CENIA - FB210017 - Basal - ANID, and an Open Seed Fund between Pontificia Universidad Católica de Chile and the University of Notre Dame. C.SL. thanks the Department for Applied and Computational Mathematics and Statistics at the University of Notre Dame for hosting him during the preparation of this manuscript. 

\printbibliography[title=References]

@article{bouza_classification_2021,
	title = {Classification of sums of complex exponentials},
	rights = {{arXiv}.org perpetual, non-exclusive license},
	url = {https://arxiv.org/abs/2111.05081},
	doi = {10.48550/ARXIV.2111.05081},
	abstract = {Numerous signals in relevant signal processing applications can be modeled as a sum of complex exponentials. Each exponential term entails a particular property of the modeled physical system, and it is possible to define families of signals that are associated with the complex exponentials. In this paper, we formulate a classification problem for this guiding principle and we propose a data processing strategy. In particular, we exploit the information obtained from the analytical model by combining it with data-driven learning techniques. As a result, we obtain a classification strategy that is robust under modeling uncertainties and experimental perturbations. To assess the performance of the new scheme, we test it with experimental data obtained from the scattering response of targets illuminated with an impulse radio ultra-wideband radar.},
	author = {Bouza, Magdalena and Altieri, Andres and Galarza, Cecilia G.},
	urldate = {2023-09-06},
	date = {2021},
	note = {Publisher: {arXiv}
Version Number: 1},
	keywords = {{FOS}: Computer and information sciences, {FOS}: Electrical engineering, electronic engineering, information engineering, Information Theory (cs.{IT}), Signal Processing (eess.{SP})},
}

@book{krantz_primer_2002,
	location = {Boston, {MA}},
	title = {A Primer of Real Analytic Functions},
	url = {http://link.springer.com/10.1007/978-0-8176-8134-0},
	publisher = {Birkhäuser Boston},
	author = {Krantz, Steven G. and Parks, Harold R.},
	urldate = {2024-03-20},
	date = {2002},
	langid = {english},
	doi = {10.1007/978-0-8176-8134-0},
	file = {Full Text:/Users/csl/Zotero/storage/Q6LHKDTX/Krantz and Parks - 2002 - A Primer of Real Analytic Functions.pdf:application/pdf},
}

@article{bredies_perfect_2019,
	title = {A perfect reconstruction property for {PDE}-constrained total-variation minimization with application in Quantitative Susceptibility Mapping},
	volume = {25},
	rights = {https://www.edpsciences.org/en/authors/copyright-and-licensing},
	issn = {1292-8119, 1262-3377},
	url = {https://www.esaim-cocv.org/10.1051/cocv/2018009},
	doi = {10.1051/cocv/2018009},
	abstract = {We study the recovery of piecewise constant functions of finite bounded variation ({BV}) from their image under a linear partial differential operator with unknown boundary conditions. It is shown that minimizing the total variation ({TV}) semi-norm subject to the associated {PDE}-constraints yields perfect reconstruction up to a global constant under a mild geometric assumption on the jump set of the function to reconstruct. The proof bases on establishing a structural result about the jump set associated with {BV}-solutions of the homogeneous {PDE}. Furthermore, we show that the geometric assumption is satisfied up to a negligible set of orthonormal transformations. The results are then applied to Quantitative Susceptibility Mapping ({QSM}) which can be formulated as solving a two-dimensional wave equation with unknown boundary conditions. This yields in particular that total variation regularization is able to reconstruct piecewise constant susceptibility distributions, explaining the high-quality results obtained with {TV}-based techniques for {QSM}.},
	pages = {83},
	journaltitle = {{ESAIM}: Control, Optimisation and Calculus of Variations},
	shortjournal = {{ESAIM}: {COCV}},
	author = {Bredies, Kristian and Vicente, David},
	urldate = {2024-06-13},
	date = {2019},
	file = {Submitted Version:/Users/csl/Zotero/storage/IF6WW2GB/Bredies and Vicente - 2019 - A perfect reconstruction property for PDE-constrai.pdf:application/pdf},
}

@article{yu_field_2005,
	title = {Field map estimation with a region growing scheme for iterative 3‐point water‐fat decomposition},
	volume = {54},
	issn = {0740-3194, 1522-2594},
	url = {https://onlinelibrary.wiley.com/doi/10.1002/mrm.20654},
	doi = {10.1002/mrm.20654},
	abstract = {Abstract
            
              Robust fat suppression techniques are required for many clinical applications. Multi‐echo water‐fat separation methods are relatively insensitive to B
              0
              field inhomogeneity compared to the fat saturation method. Estimation of this field inhomogeneity, or field map, is an essential and important step, which is well known to have ambiguity. For an iterative water‐fat decomposition method recently proposed, ambiguities still exist, but are more complex in nature. They were studied by analytical expressions and simulations. To avoid convergence to incorrect field map solutions, an initial guess closer to the true field map is necessary. This can be achieved using a region growing process, which correlates the estimation among neighboring pixels. Further improvement in stability is achieved using a low‐resolution reconstruction to guide the selection of the starting pixels for the region growing. The proposed method was implemented and shown to significantly improve the algorithm's immunity to field inhomogeneity. Magn Reson Med, 2005. © 2005 Wiley‐Liss, Inc.},
	pages = {1032--1039},
	number = {4},
	journaltitle = {Magnetic Resonance in Medicine},
	shortjournal = {Magnetic Resonance in Med},
	author = {Yu, Huanzhou and Reeder, Scott B. and Shimakawa, Ann and Brittain, Jean H. and Pelc, Norbert J.},
	urldate = {2024-06-25},
	date = {2005-10},
	langid = {english},
}

@article{posse_mr_2013,
	title = {{MR} spectroscopic imaging: Principles and recent advances},
	volume = {37},
	rights = {http://onlinelibrary.wiley.com/{termsAndConditions}\#vor},
	issn = {1053-1807, 1522-2586},
	url = {https://onlinelibrary.wiley.com/doi/10.1002/jmri.23945},
	doi = {10.1002/jmri.23945},
	shorttitle = {{MR} spectroscopic imaging},
	abstract = {Abstract
            
              {MR} spectroscopic imaging ({MRSI}) has become a valuable tool for quantifying metabolic abnormalities in human brain, prostate, breast and other organs. It is used in routine clinical imaging, particularly for cancer assessment, and in clinical research applications. This article describes basic principles of commonly used {MRSI} data acquisition and analysis methods and their impact on clinical applications. It also highlights technical advances, such as parallel imaging and newer high‐speed {MRSI} approaches that are becoming viable alternatives to conventional {MRSI} methods. Although the main focus is on
              1
              H‐{MRSI}, the principles described are applicable to other {MR}‐compatible nuclei. This review of the state‐of‐the‐art in {MRSI} methodology provides a framework for critically assessing the clinical utility of {MRSI} and for defining future technical development that is expected to lead to increased clinical use of {MRSI}. Future technical development will likely focus on ultra‐high field {MRI} scanners, novel hyperpolarized contrast agents using metabolically active compounds, and ultra‐fast {MRSI} techniques because these technologies offer unprecedented sensitivity and specificity for probing tissue metabolic status and dynamics. J. Magn. Reson. Imaging 2013;37:1301–1325. © 2012 Wiley Periodicals, Inc.},
	pages = {1301--1325},
	number = {6},
	journaltitle = {Journal of Magnetic Resonance Imaging},
	shortjournal = {Magnetic Resonance Imaging},
	author = {Posse, Stefan and Otazo, Ricardo and Dager, Stephen R. and Alger, Jeffry},
	urldate = {2024-06-25},
	date = {2013-06},
	langid = {english},
}

@article{yang_super-resolution_2016,
	title = {Super-Resolution of Complex Exponentials From Modulations With Unknown Waveforms},
	volume = {62},
	rights = {https://ieeexplore.ieee.org/Xplorehelp/downloads/license-information/{IEEE}.html},
	issn = {0018-9448, 1557-9654},
	url = {http://ieeexplore.ieee.org/document/7508934/},
	doi = {10.1109/TIT.2016.2586083},
	pages = {5809--5830},
	number = {10},
	journaltitle = {{IEEE} Transactions on Information Theory},
	shortjournal = {{IEEE} Trans. Inform. Theory},
	author = {Yang, Dehui and Tang, Gongguo and Wakin, Michael B.},
	urldate = {2024-06-25},
	date = {2016-10},
	file = {Submitted Version:/Users/csl/Zotero/storage/2BIZHYTJ/Yang et al. - 2016 - Super-Resolution of Complex Exponentials From Modu.pdf:application/pdf},
}

@article{ying_vandermonde_2018,
	title = {Vandermonde Factorization of Hankel Matrix for Complex Exponential Signal Recovery—Application in Fast {NMR} Spectroscopy},
	volume = {66},
	rights = {https://ieeexplore.ieee.org/Xplorehelp/downloads/license-information/{IEEE}.html},
	issn = {1053-587X, 1941-0476},
	url = {https://ieeexplore.ieee.org/document/8457300/},
	doi = {10.1109/TSP.2018.2869122},
	pages = {5520--5533},
	number = {21},
	journaltitle = {{IEEE} Transactions on Signal Processing},
	shortjournal = {{IEEE} Trans. Signal Process.},
	author = {Ying, Jiaxi and Cai, Jian-Feng and Guo, Di and Tang, Gongguo and Chen, Zhong and Qu, Xiaobo},
	urldate = {2024-06-25},
	date = {2018-11-01},
	file = {Submitted Version:/Users/csl/Zotero/storage/5GVQKBIX/Ying et al. - 2018 - Vandermonde Factorization of Hankel Matrix for Com.pdf:application/pdf},
}

@article{guo_improved_2018,
	title = {Improved Reconstruction of Low Intensity Magnetic Resonance Spectroscopy With Weighted Low Rank Hankel Matrix Completion},
	volume = {6},
	rights = {https://ieeexplore.ieee.org/Xplorehelp/downloads/license-information/{OAPA}.html},
	issn = {2169-3536},
	url = {http://ieeexplore.ieee.org/document/8262652/},
	doi = {10.1109/ACCESS.2018.2794478},
	pages = {4933--4940},
	journaltitle = {{IEEE} Access},
	shortjournal = {{IEEE} Access},
	author = {Guo, Di and Qu, Xiaobo},
	urldate = {2024-06-25},
	date = {2018},
	file = {Full Text:/Users/csl/Zotero/storage/PMA9VB9B/Guo and Qu - 2018 - Improved Reconstruction of Low Intensity Magnetic .pdf:application/pdf},
}

@article{guo_fast_2017,
	title = {A Fast Low Rank Hankel Matrix Factorization Reconstruction Method for Non-Uniformly Sampled Magnetic Resonance Spectroscopy},
	volume = {5},
	rights = {https://ieeexplore.ieee.org/Xplorehelp/downloads/license-information/{OAPA}.html},
	issn = {2169-3536},
	url = {http://ieeexplore.ieee.org/document/7990501/},
	doi = {10.1109/ACCESS.2017.2731860},
	pages = {16033--16039},
	journaltitle = {{IEEE} Access},
	shortjournal = {{IEEE} Access},
	author = {Guo, Di and Lu, Hengfa and Qu, Xiaobo},
	urldate = {2024-06-25},
	date = {2017},
	file = {Full Text:/Users/csl/Zotero/storage/Z73LWWS8/Guo et al. - 2017 - A Fast Low Rank Hankel Matrix Factorization Recons.pdf:application/pdf},
}

@article{simchick_fat_2018,
	title = {Fat spectral modeling on triglyceride composition quantification using chemical shift encoded magnetic resonance imaging},
	volume = {52},
	issn = {0730725X},
	url = {https://linkinghub.elsevier.com/retrieve/pii/S0730725X18302315},
	doi = {10.1016/j.mri.2018.06.012},
	pages = {84--93},
	journaltitle = {Magnetic Resonance Imaging},
	shortjournal = {Magnetic Resonance Imaging},
	author = {Simchick, Gregory and Yin, Amelia and Yin, Hang and Zhao, Qun},
	urldate = {2024-06-25},
	date = {2018-10},
	langid = {english},
	file = {Accepted Version:/Users/csl/Zotero/storage/G75YRQBZ/Simchick et al. - 2018 - Fat spectral modeling on triglyceride composition .pdf:application/pdf},
}

@article{bao_whole_2019,
	title = {Whole Brain Susceptibility Mapping Using Harmonic Incompatibility Removal},
	volume = {12},
	issn = {1936-4954},
	url = {https://epubs.siam.org/doi/10.1137/18M1191452},
	doi = {10.1137/18M1191452},
	pages = {492--520},
	number = {1},
	journaltitle = {{SIAM} Journal on Imaging Sciences},
	shortjournal = {{SIAM} J. Imaging Sci.},
	author = {Bao, Chenglong and Choi, Jae Kyu and Dong, Bin},
	urldate = {2024-06-25},
	date = {2019-01},
	langid = {english},
	file = {Submitted Version:/Users/csl/Zotero/storage/NAFQYV3I/Bao et al. - 2019 - Whole Brain Susceptibility Mapping Using Harmonic .pdf:application/pdf},
}

@article{hong_mri_2018,
	title = {{MRI} proton density fat fraction is robust across the biologically plausible range of triglyceride spectra in adults with nonalcoholic steatohepatitis},
	volume = {47},
	rights = {http://onlinelibrary.wiley.com/{termsAndConditions}\#vor},
	issn = {1053-1807, 1522-2586},
	url = {https://onlinelibrary.wiley.com/doi/10.1002/jmri.25845},
	doi = {10.1002/jmri.25845},
	abstract = {Background
              Proton density fat fraction ({PDFF}) estimation requires spectral modeling of the hepatic triglyceride ({TG}) signal. Deviations in the {TG} spectrum may occur, leading to bias in {PDFF} quantification.
            
            
              Purpose
              To investigate the effects of varying six‐peak {TG} spectral models on {PDFF} estimation bias.
            
            
              Study Type
              Retrospective secondary analysis of prospectively acquired clinical research data.
            
            
              Population
              Forty‐four adults with biopsy‐confirmed nonalcoholic steatohepatitis.
            
            
              Field Strength/Sequence
              Confounder‐corrected chemical‐shift‐encoded 3T {MRI} (using a 2D multiecho gradient‐recalled echo technique with magnitude reconstruction) and {MR} spectroscopy.
            
            
              Assessment
              In each patient, 61 pairs of colocalized {MRI}‐{PDFF} and {MRS}‐{PDFF} values were estimated: one pair used the standard six‐peak spectral model, the other 60 were six‐peak variants calculated by adjusting spectral model parameters over their biologically plausible ranges. {MRI}‐{PDFF} values calculated using each variant model and the standard model were compared, and the agreement between {MRI}‐{PDFF} and {MRS}‐{PDFF} was assessed.
            
            
              Statistical Tests
              {MRS}‐{PDFF} and {MRI}‐{PDFF} were summarized descriptively. Bland–Altman ({BA}) analyses were performed between {PDFF} values calculated using each variant model and the standard model. Linear regressions were performed between {BA} biases and mean {PDFF} values for each variant model, and between {MRI}‐{PDFF} and {MRS}‐{PDFF}.
            
            
              Results
              
                Using the standard model, mean {MRS}‐{PDFF} of the study population was 17.9 ± 8.0\% (range: 4.1–34.3\%). The difference between the highest and lowest mean variant {MRI}‐{PDFF} values was 1.5\%. Relative to the standard model, the model with the greatest absolute {BA} bias overestimated {PDFF} by 1.2\%. Bias increased with increasing {PDFF} (
                P
                {\textless} 0.0001 for 59 of the 60 variant models). {MRI}‐{PDFF} and {MRS}‐{PDFF} agreed closely for all variant models (R
                2
                 = 0.980,
                P
                {\textless} 0.0001).
              
            
            
              Data Conclusion
              Over a wide range of hepatic fat content, {PDFF} estimation is robust across the biologically plausible range of {TG} spectra. Although absolute estimation bias increased with higher {PDFF}, its magnitude was small and unlikely to be clinically meaningful.
              
                Level of Evidence:
                3
              
              
                Technical Efficacy:
                Stage 2
              
              J. Magn. Reson. Imaging 2018;47:995–1002.},
	pages = {995--1002},
	number = {4},
	journaltitle = {Journal of Magnetic Resonance Imaging},
	shortjournal = {Magnetic Resonance Imaging},
	author = {Hong, Cheng William and Mamidipalli, Adrija and Hooker, Jonathan C. and Hamilton, Gavin and Wolfson, Tanya and Chen, Dennis H. and Fazeli Dehkordy, Soudabeh and Middleton, Michael S. and Reeder, Scott B. and Loomba, Rohit and Sirlin, Claude B.},
	urldate = {2024-06-25},
	date = {2018-04},
	langid = {english},
	file = {Accepted Version:/Users/csl/Zotero/storage/IB4VSI29/Hong et al. - 2018 - MRI proton density fat fraction is robust across t.pdf:application/pdf},
}

@article{hu_ismrm_2012,
	title = {{ISMRM} workshop on fat–water separation: Insights, applications and progress in {MRI}},
	volume = {68},
	issn = {0740-3194, 1522-2594},
	url = {https://onlinelibrary.wiley.com/doi/10.1002/mrm.24369},
	doi = {10.1002/mrm.24369},
	shorttitle = {{ISMRM} workshop on fat–water separation},
	abstract = {Abstract
            Approximately 130 attendees convened on February 19–22, 2012 for the first {ISMRM}‐sponsored workshop on water–fat imaging. The motivation to host this meeting was driven by the increasing number of research publications on this topic over the past decade. The scientific program included an historical perspective and a discussion of the clinical relevance of water–fat {MRI}, a technical description of multiecho pulse sequences, a review of data acquisition and reconstruction algorithms, a summary of the confounding factors that influence quantitative fat measurements and the importance of {MRI}‐based biomarkers, a description of applications in the heart, liver, pancreas, abdomen, spine, pelvis, and muscles, an overview of the implications of fat in diabetes and obesity, a discussion on {MR} spectroscopy, a review of childhood obesity, the efficacy of lifestyle interventional studies, and the role of brown adipose tissue, and an outlook on federal funding opportunities from the National Institutes of Health. Magn Reson Med, 2012. © 2012 Wiley Periodicals, Inc.},
	pages = {378--388},
	number = {2},
	journaltitle = {Magnetic Resonance in Medicine},
	shortjournal = {Magnetic Resonance in Med},
	author = {Hu, Houchun Harry and Börnert, Peter and Hernando, Diego and Kellman, Peter and Ma, Jingfei and Reeder, Scott and Sirlin, Claude},
	urldate = {2024-06-25},
	date = {2012-08},
	langid = {english},
	file = {Accepted Version:/Users/csl/Zotero/storage/B3MHP73D/Hu et al. - 2012 - ISMRM workshop on fat–water separation Insights, .pdf:application/pdf},
}

@article{yu_robust_2012,
	title = {Robust multipoint water‐fat separation using fat likelihood analysis},
	volume = {67},
	issn = {0740-3194, 1522-2594},
	url = {https://onlinelibrary.wiley.com/doi/10.1002/mrm.23087},
	doi = {10.1002/mrm.23087},
	abstract = {Abstract
            Fat suppression is an essential part of routine {MRI} scanning. Multiecho chemical‐shift based water‐fat separation methods estimate and correct for Bo field inhomogeneity. However, they must contend with the intrinsic challenge of water‐fat ambiguity that can result in water‐fat swapping. This problem arises because the signals from two chemical species, when both are modeled as a single discrete spectral peak, may appear indistinguishable in the presence of Bo off‐resonance. In conventional methods, the water‐fat ambiguity is typically removed by enforcing field map smoothness using region growing based algorithms. In reality, the fat spectrum has multiple spectral peaks. Using this spectral complexity, we introduce a novel concept that identifies water and fat for multiecho acquisitions by exploiting the spectral differences between water and fat. A fat likelihood map is produced to indicate if a pixel is likely to be water‐dominant or fat‐dominant by comparing the fitting residuals of two different signal models. The fat likelihood analysis and field map smoothness provide complementary information, and we designed an algorithm (Fat Likelihood Analysis for Multiecho Signals) to exploit both mechanisms. It is demonstrated in a wide variety of data that the Fat Likelihood Analysis for Multiecho Signals algorithm offers highly robust water‐fat separation for 6‐echo acquisitions, particularly in some previously challenging applications. Magn Reson Med, 2011. © 2011 Wiley‐Liss, Inc.},
	pages = {1065--1076},
	number = {4},
	journaltitle = {Magnetic Resonance in Medicine},
	shortjournal = {Magnetic Resonance in Med},
	author = {Yu, Huanzhou and Reeder, Scott B. and Shimakawa, Ann and {McKenzie}, Charles A. and Brittain, Jean H.},
	urldate = {2024-06-25},
	date = {2012-04},
	langid = {english},
	file = {Accepted Version:/Users/csl/Zotero/storage/6R5KAAIX/Yu et al. - 2012 - Robust multipoint water‐fat separation using fat l.pdf:application/pdf},
}

@article{wenmiao_lu_jigsaw_2011,
	title = {{JIGSAW}: Joint Inhomogeneity Estimation via Global Segment Assembly for Water–Fat Separation},
	volume = {30},
	rights = {https://ieeexplore.ieee.org/Xplorehelp/downloads/license-information/{IEEE}.html},
	issn = {0278-0062, 1558-254X},
	url = {http://ieeexplore.ieee.org/document/5723753/},
	doi = {10.1109/TMI.2011.2122342},
	shorttitle = {{JIGSAW}},
	pages = {1417--1426},
	number = {7},
	journaltitle = {{IEEE} Transactions on Medical Imaging},
	shortjournal = {{IEEE} Trans. Med. Imaging},
	author = {{Wenmiao Lu} and {Yi Lu}},
	urldate = {2024-06-25},
	date = {2011-07},
}

@article{liu_method_2015,
	title = {Method for B0 off‐resonance mapping by non‐iterative correction of phase‐errors (B0‐{NICE})},
	volume = {74},
	issn = {0740-3194, 1522-2594},
	url = {https://onlinelibrary.wiley.com/doi/10.1002/mrm.25497},
	doi = {10.1002/mrm.25497},
	abstract = {Purpose
              To develop and evaluate a multiecho phase‐unwrapping‐based B0 mapping method.
            
            
              Methods
              The proposed method estimates a B0 map by Non‐Iterative Correction of phase‐Errors (B0‐{NICE}). The B0‐{NICE} method generates an initial B0 map from a “pseudo in‐phase” data set by introducing a bias frequency shift to the multipeak fat model, followed by correcting the phase errors using both phase and magnitude information. The performance of the B0‐{NICE} method was evaluated with all data cases from the 2012 {ISMRM} Challenge.
            
            
              Results
              The B0 field estimates from B0‐{NICE} were compared with those generated by {GlObally} Optimal Surface Estimation ({GOOSE}). In the presence of large B0 inhomogeneity, the B0‐{NICE} method was able to generate more realistic B0 maps from multiecho data, compared with {GOOSE}. Accurate estimation of fat‐fraction ({FF}) map was also achieved using the proposed algorithm.
            
            
              Conclusion
              The primary finding of the present study is that accurate {FF} and B0 maps are achievable if magnitude data is processed independently and used to correct phase errors existing in B0 maps generated by phase unwrapping. The B0‐{NICE} software is freely available to the scientific community. Magn Reson Med 74:1177–1188, 2015. © 2014 Wiley Periodicals, Inc.},
	pages = {1177--1188},
	number = {4},
	journaltitle = {Magnetic Resonance in Medicine},
	shortjournal = {Magnetic Resonance in Med},
	author = {Liu, Junmin and Drangova, Maria},
	urldate = {2024-06-25},
	date = {2015-10},
	langid = {english},
}

@article{soliman_maxideal_2014,
	title = {Max‐{IDEAL}: A max‐flow based approach for {IDEAL} water/fat separation},
	volume = {72},
	issn = {0740-3194, 1522-2594},
	url = {https://onlinelibrary.wiley.com/doi/10.1002/mrm.24923},
	doi = {10.1002/mrm.24923},
	shorttitle = {Max‐{IDEAL}},
	abstract = {Purpose
              To propose a novel approach to water/fat separation using a unique smoothness constraint.
            
            
              Theory and Methods
              
                Chemical‐shift based water/fat separation is an established noninvasive imaging tool for the visualization of body fat in various anatomies. Nevertheless, B0 magnetic field inhomogeneities can hamper the water/fat separation process. In this work, B0 variations are mapped using a convex‐relaxed labeling model which produces a coarse estimate of the field map, while considering T
                2
                * decay during the labeling process. Fat and water components are subsequently resolved using T
                2
                *‐{IDEAL}. An adaptive spatial filtering ({ASF}) was introduced to improve the robustness of the estimate. The method was tested on cardiac and abdominal datasets from healthy volunteers and nonalcoholic fatty liver disease ({NAFLD}) patients.
              
            
            
              Results
              Out of 168 cardiac and abdominal images, only 1 case has shown water/fat swaps that can hinder the clinical interpretation of the underlying anatomy.
            
            
              Conclusion
              
                This work demonstrates a new water/fat separation approach that prevents the occurrence of water/fat swaps, by means of a unique smoothness constraint. Incorporating T
                2
                * effect in the labeling procedure and including the {ASF} processing enhance the robustness of the proposed approach and permit the procedure to handle abrupt B0 variations within the field of view. Magn Reson Med 72:510–521, 2014. © 2013 Wiley Periodicals, Inc.},
	pages = {510--521},
	number = {2},
	journaltitle = {Magnetic Resonance in Medicine},
	shortjournal = {Magnetic Resonance in Med},
	author = {Soliman, Abraam S. and Yuan, Jing and Vigen, Karl K. and White, James A. and Peters, Terry M. and {McKenzie}, Charles A.},
	urldate = {2024-06-25},
	date = {2014-08},
	langid = {english},
}

@article{tsao_hierarchical_2013,
	title = {Hierarchical {IDEAL}: Fast, robust, and multiresolution separation of multiple chemical species from multiple echo times},
	volume = {70},
	issn = {0740-3194, 1522-2594},
	url = {https://onlinelibrary.wiley.com/doi/10.1002/mrm.24441},
	doi = {10.1002/mrm.24441},
	shorttitle = {Hierarchical {IDEAL}},
	abstract = {Abstract
            
              We describe a generalized version of hierarchical {IDEAL} that can flexibly handle arbitrary chemical species at arbitrary echo times. The proposed work is fast and robust, and it has three key features: (
              1
              ) multiresolution approach, which allows the method to handle images with disjoint regions, makes it less susceptible to local optima, and reduces the ambiguity of the separation; (
              2
              ) direct phase estimation, which bypasses the phase wrapping issue, and (
              3
              ) efficient algebraic formulation, which enables fast calculation and insensitivity to spatially varying phase across the image, from sources such as partial echo acquisition, receiver coils, motion, and flow. Representative results at 1.5 T and 3 T are presented from human ankle, wrist, and a water/oil/silicone phantom. Magn Reson Med, 2013. © 2012 Wiley Periodicals, Inc.},
	pages = {155--159},
	number = {1},
	journaltitle = {Magnetic Resonance in Medicine},
	shortjournal = {Magnetic Resonance in Med},
	author = {Tsao, Jeffrey and Jiang, Yun},
	urldate = {2024-06-25},
	date = {2013-07},
	langid = {english},
}

@article{andersson_water-fat_2018,
	title = {Water-fat separation incorporating spatial smoothing is robust to noise},
	volume = {50},
	issn = {0730725X},
	url = {https://linkinghub.elsevier.com/retrieve/pii/S0730725X18300407},
	doi = {10.1016/j.mri.2018.03.015},
	pages = {78--83},
	journaltitle = {Magnetic Resonance Imaging},
	shortjournal = {Magnetic Resonance Imaging},
	author = {Andersson, Jonathan and Ahlström, Håkan and Kullberg, Joel},
	urldate = {2024-06-25},
	date = {2018-07},
	langid = {english},
	file = {Full Text:/Users/csl/Zotero/storage/GSPQNRJP/Andersson et al. - 2018 - Water-fat separation incorporating spatial smoothi.pdf:application/pdf},
}

@article{cui_rapid_2018,
	title = {A rapid 3D fat–water decomposition method using globally optimal surface estimation (R‐{GOOSE})},
	volume = {79},
	issn = {0740-3194, 1522-2594},
	url = {https://onlinelibrary.wiley.com/doi/10.1002/mrm.26843},
	doi = {10.1002/mrm.26843},
	abstract = {Purpose
              To improve the graph model of our previous work {GOOSE} for fat‐water decomposition with higher computational efficiency and quantitative accuracy.
            
            
              Methods
              A modification of the {GOOSE} fat water decomposition algorithm is introduced while the global convergence guarantees of {GOOSE} are still inherited to minimize fat‐water swaps and phase wraps. In this paper, two non‐equidistant graph optimization frameworks are proposed as a single‐step framework termed as rapid {GOOSE} (R‐{GOOSE}), and a multi‐step framework termed as multi‐scale R‐{GOOSE} ({mR}‐{GOOSE}). Both frameworks contain considerably less graph connectivity than {GOOSE}, resulting in a great computation reduction thus making it readily applicable to multidimensional fat water applications. The quantitative accuracy and computational time of the novel frameworks are compared with {GOOSE} on the 2012 {ISMRM} Challenge datasets to demonstrate the improvement in performance.
            
            
              Results
              Both frameworks accomplish the same level of high accuracy as {GOOSE} among all datasets. Compared to 100 layers in {GOOSE}, only 8 layers were used in the new graph model. Computational time is lowered by an order of magnitude to around 5 s for each dataset in ({mR}‐{GOOSE}), R‐{GOOSE} achieves an average run‐time of 8 s.
            
            
              Conclusion
              The proposed method provides fat–water decomposition results with a lower run‐time and higher accuracy compared to the previously proposed {GOOSE} algorithm. Magn Reson Med 79:2401–2407, 2018. © 2017 International Society for Magnetic Resonance in Medicine.},
	pages = {2401--2407},
	number = {4},
	journaltitle = {Magnetic Resonance in Medicine},
	shortjournal = {Magnetic Resonance in Med},
	author = {Cui, Chen and Shah, Abhay and Wu, Xiaodong and Jacob, Mathews},
	urldate = {2024-06-25},
	date = {2018-04},
	langid = {english},
	file = {Accepted Version:/Users/csl/Zotero/storage/ZT58ZXTN/Cui et al. - 2018 - A rapid 3D fat–water decomposition method using gl.pdf:application/pdf},
}

@article{berglund_multiscale_2017,
	title = {Multi‐scale graph‐cut algorithm for efficient water‐fat separation},
	volume = {78},
	issn = {0740-3194, 1522-2594},
	url = {https://onlinelibrary.wiley.com/doi/10.1002/mrm.26479},
	doi = {10.1002/mrm.26479},
	abstract = {Purpose
              
                To improve the accuracy and robustness to noise in water‐fat separation by unifying the multiscale and graph cut based approaches to
                B
                0
                ‐correction.
              
            
            
              Methods
              
                A previously proposed water‐fat separation algorithm that corrects for
                B
                0
                field inhomogeneity in 3D by a single quadratic pseudo‐Boolean optimization ({QPBO}) graph cut was incorporated into a multi‐scale framework, where field map solutions are propagated from coarse to fine scales for voxels that are not resolved by the graph cut. The accuracy of the single‐scale and multi‐scale {QPBO} algorithms was evaluated against benchmark reference datasets. The robustness to noise was evaluated by adding noise to the input data prior to water‐fat separation.
              
            
            
              Results
              Both algorithms achieved the highest accuracy when compared with seven previously published methods, while computation times were acceptable for implementation in clinical routine. The multi‐scale algorithm was more robust to noise than the single‐scale algorithm, while causing only a small increase (+10\%) of the reconstruction time.
            
            
              Conclusion
              The proposed 3D multi‐scale {QPBO} algorithm offers accurate water‐fat separation, robustness to noise, and fast reconstruction. The software implementation is freely available to the research community. Magn Reson Med 78:941–949, 2017. © 2016 International Society for Magnetic Resonance in Medicine},
	pages = {941--949},
	number = {3},
	journaltitle = {Magnetic Resonance in Medicine},
	shortjournal = {Magnetic Resonance in Med},
	author = {Berglund, Johan and Skorpil, Mikael},
	urldate = {2024-06-25},
	date = {2017-09},
	langid = {english},
}

@article{hernando_robust_2010,
	title = {Robust water/fat separation in the presence of large field inhomogeneities using a graph cut algorithm},
	volume = {63},
	issn = {0740-3194, 1522-2594},
	url = {https://onlinelibrary.wiley.com/doi/10.1002/mrm.22177},
	doi = {10.1002/mrm.22177},
	abstract = {Abstract
            Water/fat separation is a classical problem for in vivo proton {MRI}. Although many methods have been proposed to address this problem, robust water/fat separation remains a challenge, especially in the presence of large amplitude of static field inhomogeneities. This problem is challenging because of the nonuniqueness of the solution for an isolated voxel. This paper tackles the problem using a statistically motivated formulation that jointly estimates the complete field map and the entire water/fat images. This formulation results in a difficult optimization problem that is solved effectively using a novel graph cut algorithm, based on an iterative process where all voxels are updated simultaneously. The proposed method has good theoretical properties, as well as an efficient implementation. Simulations and in vivo results are shown to highlight the properties of the proposed method and compare it to previous approaches. Twenty‐five cardiac datasets acquired on a short, wide‐bore scanner with different slice orientations were used to test the proposed method, which produced robust water/fat separation for these challenging datasets. This paper also shows example applications of the proposed method, such as the characterization of intramyocardial fat. Magn Reson Med, 2010. © 2009 Wiley‐Liss, Inc.},
	pages = {79--90},
	number = {1},
	journaltitle = {Magnetic Resonance in Medicine},
	shortjournal = {Magnetic Resonance in Med},
	author = {Hernando, Diego and Kellman, P. and Haldar, J. P. and Liang, Z.‐P.},
	urldate = {2024-06-25},
	date = {2010-01},
	langid = {english},
	file = {Accepted Version:/Users/csl/Zotero/storage/CWDGTNNB/Hernando et al. - 2010 - Robust waterfat separation in the presence of lar.pdf:application/pdf},
}

@inproceedings{gudmundson_esprit-based_2012,
	location = {Ann Arbor, {MI}, {USA}},
	title = {An esprit-based parameter estimator for spectroscopic data},
	url = {http://ieeexplore.ieee.org/document/6319820/},
	doi = {10.1109/SSP.2012.6319820},
	eventtitle = {2012 {IEEE} Statistical Signal Processing Workshop ({SSP})},
	pages = {77--80},
	booktitle = {2012 {IEEE} Statistical Signal Processing Workshop ({SSP})},
	publisher = {{IEEE}},
	author = {Gudmundson, Erik and Wirfalt, Petter and Jakobsson, Andreas and Jansson, Magnus},
	urldate = {2024-06-27},
	date = {2012-08},
	file = {Full Text:/Users/csl/Zotero/storage/XTPHMQKW/Gudmundson et al. - 2012 - An esprit-based parameter estimator for spectrosco.pdf:application/pdf},
}

@article{golub_separable_2003,
	title = {Separable nonlinear least squares: the variable projection method and its applications},
	volume = {19},
	issn = {0266-5611, 1361-6420},
	url = {https://iopscience.iop.org/article/10.1088/0266-5611/19/2/201},
	doi = {10.1088/0266-5611/19/2/201},
	shorttitle = {Separable nonlinear least squares},
	pages = {R1--R26},
	number = {2},
	journaltitle = {Inverse Problems},
	shortjournal = {Inverse Problems},
	author = {Golub, Gene and Pereyra, Victor},
	urldate = {2024-06-27},
	date = {2003-04-01},
}

@article{golub_differentiation_1973,
	title = {The Differentiation of Pseudo-Inverses and Nonlinear Least Squares Problems Whose Variables Separate},
	volume = {10},
	issn = {0036-1429, 1095-7170},
	url = {http://epubs.siam.org/doi/10.1137/0710036},
	doi = {10.1137/0710036},
	pages = {413--432},
	number = {2},
	journaltitle = {{SIAM} Journal on Numerical Analysis},
	shortjournal = {{SIAM} J. Numer. Anal.},
	author = {Golub, G. H. and Pereyra, V.},
	urldate = {2024-06-27},
	date = {1973-04},
	langid = {english},
}

@book{range_holomorphic_1986,
	location = {New York, {NY}},
	title = {Holomorphic Functions and Integral Representations in Several Complex Variables},
	volume = {108},
	rights = {http://www.springer.com/tdm},
	url = {http://link.springer.com/10.1007/978-1-4757-1918-5},
	series = {Graduate Texts in Mathematics},
	publisher = {Springer New York},
	author = {Range, R. Michael},
	urldate = {2024-12-18},
	date = {1986},
	doi = {10.1007/978-1-4757-1918-5},
}

@book{nesterov_lectures_2018,
	location = {Cham},
	title = {Lectures on Convex Optimization},
	volume = {137},
	rights = {http://www.springer.com/tdm},
	url = {http://link.springer.com/10.1007/978-3-319-91578-4},
	series = {Springer Optimization and Its Applications},
	publisher = {Springer International Publishing},
	author = {Nesterov, Yurii},
	urldate = {2025-03-24},
	date = {2018},
	doi = {10.1007/978-3-319-91578-4},
}

@book{remmert_theory_1991,
	location = {New York, {NY}},
	title = {Theory of Complex Functions},
	volume = {122},
	rights = {http://www.springer.com/tdm},
	url = {http://link.springer.com/10.1007/978-1-4612-0939-3},
	series = {Graduate in Texts Mathematics},
	publisher = {Springer New York},
	author = {Remmert, Reinhold},
	urldate = {2025-03-24},
	date = {1991},
	doi = {10.1007/978-1-4612-0939-3},
}

@article{mityagin_zero_2020,
	title = {The Zero Set of a Real Analytic Function},
	volume = {107},
	issn = {0001-4346, 1573-8876},
	url = {https://link.springer.com/10.1134/S0001434620030189},
	doi = {10.1134/S0001434620030189},
	pages = {529--530},
	number = {3},
	journaltitle = {Mathematical Notes},
	shortjournal = {Math Notes},
	author = {Mityagin, B. S.},
	urldate = {2025-03-24},
	date = {2020-03},
	langid = {english},
	file = {Submitted Version:/Users/csl/Zotero/storage/AHTFQW5Q/Mityagin - 2020 - The Zero Set of a Real Analytic Function.pdf:application/pdf},
}

@book{stein_elementary_2009,
	location = {New York, {NY}},
	title = {Elementary Number Theory: Primes, Congruences, and Secrets: A Computational Approach},
	rights = {https://www.springernature.com/gp/researchers/text-and-data-mining},
	url = {https://link.springer.com/10.1007/b13279},
	series = {Undergraduate Texts in Mathematics},
	shorttitle = {Elementary Number Theory},
	publisher = {Springer New York},
	author = {Stein, William},
	urldate = {2025-03-24},
	date = {2009},
	langid = {english},
	doi = {10.1007/b13279},
}

@book{axler_linear_2024,
	location = {Cham},
	title = {Linear Algebra Done Right},
	rights = {https://creativecommons.org/licenses/by-nc/4.0},
	url = {https://link.springer.com/10.1007/978-3-031-41026-0},
	series = {Undergraduate Texts in Mathematics},
	publisher = {Springer International Publishing},
	author = {Axler, Sheldon},
	urldate = {2025-03-24},
	date = {2024},
	langid = {english},
	doi = {10.1007/978-3-031-41026-0},
	file = {Full Text PDF:/Users/csl/Zotero/storage/LT5XPQ22/Axler - 2024 - Linear Algebra Done Right.pdf:application/pdf},
}

@article{candes_phase_2015,
	title = {Phase Retrieval via Wirtinger Flow: Theory and Algorithms},
	volume = {61},
	rights = {https://ieeexplore.ieee.org/Xplorehelp/downloads/license-information/{IEEE}.html},
	issn = {0018-9448, 1557-9654},
	url = {https://ieeexplore.ieee.org/document/7029630/},
	doi = {10.1109/TIT.2015.2399924},
	shorttitle = {Phase Retrieval via Wirtinger Flow},
	pages = {1985--2007},
	number = {4},
	journaltitle = {{IEEE} Transactions on Information Theory},
	shortjournal = {{IEEE} Trans. Inform. Theory},
	author = {Candès, Emmanuel J. and Li, Xiaodong and Soltanolkotabi, Mahdi},
	urldate = {2025-03-25},
	date = {2015-04},
	file = {Submitted Version:/Users/csl/Zotero/storage/4GCGRP9D/Candès et al. - 2015 - Phase Retrieval via Wirtinger Flow Theory and Algorithms.pdf:application/pdf},
}

@article{yuan_phase_2019,
	title = {Phase retrieval via Sparse Wirtinger Flow},
	volume = {355},
	issn = {03770427},
	url = {https://linkinghub.elsevier.com/retrieve/pii/S0377042719300184},
	doi = {10.1016/j.cam.2019.01.009},
	pages = {162--173},
	journaltitle = {Journal of Computational and Applied Mathematics},
	shortjournal = {Journal of Computational and Applied Mathematics},
	author = {Yuan, Ziyang and Wang, Hongxia and Wang, Qi},
	urldate = {2025-03-25},
	date = {2019-08},
	langid = {english},
	file = {Submitted Version:/Users/csl/Zotero/storage/34FZWPEC/Yuan et al. - 2019 - Phase retrieval via Sparse Wirtinger Flow.pdf:application/pdf},
}

@book{mezo_lambert_2022,
	location = {Boca Raton},
	edition = {1},
	title = {The Lambert W Function: Its Generalizations and Applications},
	url = {https://www.taylorfrancis.com/books/9781003168102},
	shorttitle = {The Lambert W Function},
	publisher = {Chapman and Hall/{CRC}},
	author = {Mező, István},
	urldate = {2025-03-25},
	date = {2022-03-15},
	langid = {english},
	doi = {10.1201/9781003168102},
}

@article{zhou_background_2014,
	title = {Background field removal by solving the Laplacian boundary value problem},
	volume = {27},
	rights = {http://doi.wiley.com/10.1002/tdm\_license\_1.1},
	issn = {09523480},
	url = {https://onlinelibrary.wiley.com/doi/10.1002/nbm.3064},
	doi = {10.1002/nbm.3064},
	shorttitle = {Background field removal by solving the Laplacian boundary value problem},
	pages = {312--319},
	number = {3},
	journaltitle = {{NMR} in Biomedicine},
	shortjournal = {{NMR} Biomed.},
	author = {Zhou, Dong and Liu, Tian and Spincemaille, Pascal and Wang, Yi},
	urldate = {2025-03-25},
	date = {2014-03},
	langid = {english},
}

@article{daude_comparative_2024,
	title = {Comparative review of algorithms and methods for chemical‐shift‐encoded quantitative fat‐water imaging},
	volume = {91},
	issn = {0740-3194, 1522-2594},
	url = {https://onlinelibrary.wiley.com/doi/10.1002/mrm.29860},
	doi = {10.1002/mrm.29860},
	abstract = {Abstract
            
              Purpose
              To propose a standardized comparison between state‐of‐the‐art open‐source fat‐water separation algorithms for proton density fat fraction ({PDFF}) and  quantification using an open‐source multi‐language toolbox.
            
            
              Methods
              
                Eight recent open‐source fat‐water separation algorithms were compared in silico, in vitro, and in vivo. Multi‐echo data were synthesized with varying fat‐fractions, B
                0
                off‐resonance, {SNR} and {TEs}. Experimental evaluation was conducted using calibrated fat‐water phantoms acquired at 3T and multi‐site open‐source phantoms data. Algorithms' performances were observed on challenging in vivo datasets at 3T. Finally, reconstruction algorithms were investigated with different fat spectra to evaluate the importance of the fat model.
              
            
            
              Results
              
                In silico and in vitro results proved most algorithms to be not sensitive to fat‐water swaps and  offsets with five or more echoes. However, two methods remained inaccurate even with seven echoes and {SNR} = 50, and two other algorithms' precision depended on the echo spacing scheme (
                p
                 {\textless} 0.05). The remaining four algorithms provided reliable performances with limits of agreement under 2\% for {PDFF} and 6 s
                −1
                for . The choice of fat spectrum model influenced quantification of {PDFF} mildly ({\textless}2\% bias) and of  more severely, with errors up to 20 s
                −1
                .
              
            
            
              Conclusion
              In promoting standardized comparisons of {MRI}‐based fat and iron quantification using chemical‐shift encoded multi‐echo methods, this benchmark work has revealed some discrepancies between recent approaches for {PDFF} and  mapping. Explicit choices and parameterization of the fat‐water algorithm appear necessary for reproducibility. This open‐source toolbox further enables the user to optimize acquisition parameters by predicting algorithms' margins of errors.},
	pages = {741--759},
	number = {2},
	journaltitle = {Magnetic Resonance in Medicine},
	shortjournal = {Magnetic Resonance in Med},
	author = {Daudé, Pierre and Roussel, Tangi and Troalen, Thomas and Viout, Patrick and Hernando, Diego and Guye, Maxime and Kober, Frank and Confort Gouny, Sylviane and Bernard, Monique and Rapacchi, Stanislas},
	urldate = {2025-03-25},
	date = {2024-02},
	langid = {english},
	file = {Full Text:/Users/csl/Zotero/storage/XR6B55WS/Daudé et al. - 2024 - Comparative review of algorithms and methods for chemical‐shift‐encoded quantitative fat‐water imagi.pdf:application/pdf},
}

@article{trinh_vivo_2020,
	title = {In vivo comparison of {MRI}‐based and {MRS}‐based quantification of adipose tissue fatty acid composition against gas chromatography},
	volume = {84},
	issn = {0740-3194, 1522-2594},
	url = {https://onlinelibrary.wiley.com/doi/10.1002/mrm.28300},
	doi = {10.1002/mrm.28300},
	abstract = {Purpose
              To compare {MR}‐based fatty acid composition ({FAC}) quantification methods against the gold standard technique, gas chromatography ({GC}), with comparison of a free and a constrained signal model. The {FAC} was measured in the healthy and edematous legs of lymphedema patients.
            
            
              Methods
              
                In vivo {MRS} and {MRI} data were acquired from 19 patients at 3 T. Biopsies were collected from subcutaneous adipose tissue of both thighs during liposuction. The saturated, monounsaturated, and polyunsaturated fatty acid fractions (
                
                  f
                  {SFA}
                
                ,
                
                  f
                  {MUFA}
                
                and
                
                  f
                  {PUFA}
                
                , respectively) were estimated with the {MR}‐based methods using two signal models: free and constrained (number of methylene‐interrupted double bonds expressed in number of double bonds, based on {GC} data). Linear regression, Bland–Altman plots, and correlation coefficients were used to evaluate the {MR} methods against the {GC} of the biopsies. Paired
                t
                ‐test was used to compare the {FAC} difference between edematous and healthy legs.
              
            
            
              Results
              
                The estimated parameters correlated well with the {GC} data (
                
                  r
                  {SFA}
                
                ,
                
                  r
                  {MUFA}
                
                , and
                
                  r
                  {PUFA}
                
                = 0.82, 0.81 and 0.89, respectively) using the free model {MRI}‐based approach. In comparison, the {MRS}‐based method resulted in weaker correlations and larger biases compared with {MRI}. In both cases, correct estimation of
                
                  f
                  {MUFA}
                
                and
                
                  f
                  {PUFA}
                
                fractions were not possible using the constrained model. The difference in {FAC} of healthy and edematous legs were estimated to 0.008 (
                P
                = .01), −0.009 (
                P
                = .005), and 0.002 (
                P
                = .03) for
                
                  f
                  {SFA}
                
                ,
                
                  f
                  {MUFA}
                
                , and
                
                  f
                  {PUFA}
                
                .
              
            
            
              Conclusion
              In this study, {MRI}‐based {FAC} quantification was highly correlated, although slightly biased, compared with {GC}, whereas the {MRS}‐based approach resulted in weaker correlations. Small but significant differences could be found between the healthy and edematous legs of lymphedema patients using {GC} analysis.},
	pages = {2484--2494},
	number = {5},
	journaltitle = {Magnetic Resonance in Medicine},
	shortjournal = {Magnetic Resonance in Med},
	author = {Trinh, Lena and Peterson, Pernilla and Leander, Peter and Brorson, Håkan and Månsson, Sven},
	urldate = {2025-03-27},
	date = {2020-11},
	langid = {english},
	file = {Full Text:/Users/csl/Zotero/storage/K9XS3WFG/Trinh et al. - 2020 - In vivo comparison of MRI‐based and MRS‐based quantification of adipose tissue fatty acid compositio.pdf:application/pdf},
}

@article{poorman_multiecho_2019,
	title = {Multi‐echo {MR} thermometry using iterative separation of baseline water and fat images},
	volume = {81},
	issn = {0740-3194, 1522-2594},
	url = {https://onlinelibrary.wiley.com/doi/10.1002/mrm.27567},
	doi = {10.1002/mrm.27567},
	abstract = {Purpose
              
              To perform multi‐echo water/fat separated proton resonance frequency ({PRF})‐shift temperature mapping.
            
            
              
                Methods
              
              
                State‐of‐the‐art, iterative multi‐echo water/fat separation algorithms produce high‐quality water and fat images in the absence of heating but are not suitable for real‐time imaging due to their long compute times and potential errors in heated regions. Existing fat‐referenced {PRF}‐shift temperature reconstruction methods partially address these limitations but do not address motion or large time‐varying and spatially inhomogeneous B
                0
                shifts. We describe a model‐based temperature reconstruction method that overcomes these limitations by fitting a library of separated water and fat images measured before heating directly to multi‐echo data measured during heating, while accounting for the {PRF} shift with temperature.
              
            
            
              
                Results
              
              
                Simulations in a mixed water/fat phantom with focal heating showed that the proposed algorithm reconstructed more accurate temperature maps in mixed tissues compared to a fat‐referenced thermometry method. In a porcine phantom experiment with focused ultrasound heating at 1.5 Tesla, temperature maps were accurate to within 1
                ∘
                C of fiber optic probe temperature measurements and were calculated in 0.47 s per time point. Free‐breathing breast and liver imaging experiments demonstrated motion and off‐resonance compensation. The algorithm can also accurately reconstruct water/fat separated temperature maps from a single echo during heating.
              
            
            
              
                Conclusions
              
              The proposed model‐based water/fat separated algorithm produces accurate {PRF}‐shift temperature maps in mixed water and fat tissues in the presence of spatiotemporally varying off‐resonance and motion.},
	pages = {2385--2398},
	number = {4},
	journaltitle = {Magnetic Resonance in Medicine},
	shortjournal = {Magnetic Resonance in Med},
	author = {Poorman, Megan E. and Braškutė, Ieva and Bartels, Lambertus W. and Grissom, William A.},
	urldate = {2025-03-27},
	date = {2019-04},
	langid = {english},
	file = {Accepted Version:/Users/csl/Zotero/storage/8VCZ94Z6/Poorman et al. - 2019 - Multi‐echo MR thermometry using iterative separation of baseline water and fat images.pdf:application/pdf},
}

@article{brown_nmr_1982,
	title = {{NMR} chemical shift imaging in three dimensions.},
	volume = {79},
	issn = {0027-8424, 1091-6490},
	url = {https://pnas.org/doi/full/10.1073/pnas.79.11.3523},
	doi = {10.1073/pnas.79.11.3523},
	abstract = {A method for obtaining the three-dimensional distribution of chemical shifts in a spatially inhomogeneous sample using Fourier transform {NMR} is presented. The method uses a sequence of pulsed field gradients to measure the Fourier transform of the desired distribution on a rectangular grid in (k,t) space. Simple Fourier inversion then recovers the original distribution. An estimated signal/noise ratio of 20 in 10 min is obtained for an "image" of the distribution of a 10 {mM} phosphorylated metabolite in the human head at a field of 20 {kG} with 2-cm resolution.},
	pages = {3523--3526},
	number = {11},
	journaltitle = {Proceedings of the National Academy of Sciences},
	shortjournal = {Proc. Natl. Acad. Sci. U.S.A.},
	author = {Brown, T R and Kincaid, B M and Ugurbil, K},
	urldate = {2025-03-27},
	date = {1982-06},
	langid = {english},
	file = {Full Text:/Users/csl/Zotero/storage/BWUGJT2I/Brown et al. - 1982 - NMR chemical shift imaging in three dimensions..pdf:application/pdf},
}

@article{mansfield_spatial_1984,
	title = {Spatial mapping of the chemical shift in {NMR}},
	volume = {1},
	issn = {0740-3194, 1522-2594},
	url = {https://onlinelibrary.wiley.com/doi/10.1002/mrm.1910010308},
	doi = {10.1002/mrm.1910010308},
	abstract = {Abstract
            
              In many biological systems, nuclear magnetic resonance chemical‐shift studies of
              31
              P,
              13
              C, and
              1
              H can be of great value as a means of studying metabolic processes in vivo. This information is likely to be of great clinical value in the diagnosis of disease and in the study and monitoring of the effect of drug treatment. This paper discusses two new techniques, based on the echo‐planar imaging ({EPI}) method. One is the echo‐planar shift mapping ({EPSM}) technique and the other is a hybrid projection reconstruction‐echo‐planar ({PREP}) method. Both methods may be used to obtain complete chemical‐shift spectra at each point of a specified grid in a chosen plane through the subject. A major attraction of the {EPSM} method is the very high speed at which this information is obtained. The major attraction of the {PREP} hybrid method is its simplicity. Both techniques have the advantage of speed and efficiency over both the point‐by‐point topical magnetic resonance ({TMR}) and 3D Fourier transform approaches. The new mapping techniques may also be used to measure magnetic field spatial inhomogeneities, caused by either the presence of magnetic material or simply nonalignment of the magnet.},
	pages = {370--386},
	number = {3},
	journaltitle = {Magnetic Resonance in Medicine},
	shortjournal = {Magnetic Resonance in Med},
	author = {Mansfield, P.},
	urldate = {2025-03-27},
	date = {1984-09},
	langid = {english},
}

@article{shen_vivo_1999,
	title = {In vivo chemical shift imaging of ?-aminobutyric acid in the human brain},
	volume = {41},
	rights = {http://doi.wiley.com/10.1002/tdm\_license\_1.1},
	issn = {0740-3194, 1522-2594},
	url = {https://onlinelibrary.wiley.com/doi/10.1002/(SICI)1522-2594(199901)41:1<35::AID-MRM7>3.0.CO;2-C},
	doi = {10.1002/(SICI)1522-2594(199901)41:1<35::AID-MRM7>3.0.CO;2-C},
	shorttitle = {In vivo chemical shift imaging of ?},
	pages = {35--42},
	number = {1},
	journaltitle = {Magnetic Resonance in Medicine},
	shortjournal = {Magn. Reson. Med.},
	author = {Shen, Jun and Shungu, Dikoma C. and Rothman, Douglas L.},
	urldate = {2025-03-27},
	date = {1999-01},
	langid = {english},
}

@article{reeder_waterfat_2007,
	title = {Water–fat separation with {IDEAL} gradient‐echo imaging},
	volume = {25},
	rights = {http://onlinelibrary.wiley.com/{termsAndConditions}\#vor},
	issn = {1053-1807, 1522-2586},
	url = {https://onlinelibrary.wiley.com/doi/10.1002/jmri.20831},
	doi = {10.1002/jmri.20831},
	abstract = {Abstract
            
              Purpose
              
                To combine gradient‐echo ({GRE}) imaging with a multipoint water–fat separation method known as “iterative decomposition of water and fat with echo asymmetry and least squares estimation” ({IDEAL}) for uniform water–fat separation. Robust fat suppression is necessary for many {GRE} imaging applications; unfortunately, uniform fat suppression is challenging in the presence of B
                0
                inhomogeneities. These challenges are addressed with the {IDEAL} technique.
              
            
            
              Materials and Methods
              Echo shifts for three‐point {IDEAL} were chosen to optimize noise performance of the water–fat estimation, which is dependent on the relative proportion of water and fat within a voxel. Phantom experiments were performed to validate theoretical {SNR} predictions. Theoretical echo combinations that maximize noise performance are discussed, and examples of clinical applications at 1.5T and 3.0T are shown.
            
            
              Results
              The measured {SNR} performance validated theoretical predictions and demonstrated improved image quality compared to unoptimized echo combinations. Clinical examples of the liver, breast, heart, knee, and ankle are shown, including the combination of {IDEAL} with parallel imaging. Excellent water–fat separation was achieved in all cases. The utility of recombining water and fat images into “in‐phase,” “out‐of‐phase,” and “fat signal fraction” images is also discussed.
            
            
              Conclusion
              {IDEAL}‐{SPGR} provides robust water–fat separation with optimized {SNR} performance at both 1.5T and 3.0T with multicoil acquisitions and parallel imaging in multiple regions of the body. J. Magn. Reson. Imaging 2007;25:644–652. © 2007 Wiley‐Liss, Inc.},
	pages = {644--652},
	number = {3},
	journaltitle = {Journal of Magnetic Resonance Imaging},
	shortjournal = {Magnetic Resonance Imaging},
	author = {Reeder, Scott B. and {McKenzie}, Charles A. and Pineda, Angel R. and Yu, Huanzhou and Shimakawa, Ann and Brau, Anja C. and Hargreaves, Brian A. and Gold, Garry E. and Brittain, Jean H.},
	urldate = {2025-03-27},
	date = {2007-03},
	langid = {english},
	file = {Full Text:/Users/csl/Zotero/storage/HIQPGUNS/Reeder et al. - 2007 - Water–fat separation with IDEAL gradient‐echo imaging.pdf:application/pdf},
}

@article{reeder_multicoil_2004,
	title = {Multicoil Dixon chemical species separation with an iterative least‐squares estimation method},
	volume = {51},
	issn = {0740-3194, 1522-2594},
	url = {https://onlinelibrary.wiley.com/doi/10.1002/mrm.10675},
	doi = {10.1002/mrm.10675},
	abstract = {Abstract
            This work describes a new approach to multipoint Dixon fat–water separation that is amenable to pulse sequences that require short echo time ({TE}) increments, such as steady‐state free precession ({SSFP}) and fast spin‐echo ({FSE}) imaging. Using an iterative linear least‐squares method that decomposes water and fat images from source images acquired at short {TE} increments, images with a high signal‐to‐noise ratio ({SNR}) and uniform separation of water and fat are obtained. This algorithm extends to multicoil reconstruction with minimal additional complexity. Examples of single‐ and multicoil fat–water decompositions are shown from source images acquired at both 1.5T and 3.0T. Examples in the knee, ankle, pelvis, abdomen, and heart are shown, using {FSE}, {SSFP}, and spoiled gradient‐echo ({SPGR}) pulse sequences. The algorithm was applied to systems with multiple chemical species, and an example of water–fat–silicone separation is shown. An analysis of the noise performance of this method is described, and methods to improve noise performance through multicoil acquisition and field map smoothing are discussed. Magn Reson Med 51:35–45, 2004. © 2003 Wiley‐Liss, Inc.},
	pages = {35--45},
	number = {1},
	journaltitle = {Magnetic Resonance in Medicine},
	shortjournal = {Magnetic Resonance in Med},
	author = {Reeder, Scott B. and Wen, Zhifei and Yu, Huanzhou and Pineda, Angel R. and Gold, Garry E. and Markl, Michael and Pelc, Norbert J.},
	urldate = {2025-03-27},
	date = {2004-01},
	langid = {english},
	file = {Full Text:/Users/csl/Zotero/storage/BPHV9JP2/Reeder et al. - 2004 - Multicoil Dixon chemical species separation with an iterative least‐squares estimation method.pdf:application/pdf},
}

@article{hamilton_vivo_2011,
	title = {\textit{In vivo} characterization of the liver fat$^{\textrm{1}}$ H {MR} spectrum},
	volume = {24},
	rights = {http://onlinelibrary.wiley.com/{termsAndConditions}\#vor},
	issn = {0952-3480, 1099-1492},
	url = {https://analyticalsciencejournals.onlinelibrary.wiley.com/doi/10.1002/nbm.1622},
	doi = {10.1002/nbm.1622},
	abstract = {Abstract
            
              A theoretical triglyceride model was developed for
              in vivo
              human liver fat
              1
              H {MRS} characterization, using the number of double bonds ({CH}{CH}), number of methylene‐interrupted double bonds ({CH}{CH}{CH}
              2
              {CH}{CH}) and average fatty acid chain length. Five 3 T, single‐voxel, stimulated echo acquisition mode spectra ({STEAM}) were acquired consecutively at progressively longer {TEs} in a fat–water emulsion phantom and in 121 human subjects with known or suspected nonalcoholic fatty liver disease.
              T
              2
              ‐corrected peak areas were calculated. Phantom data were used to validate the model. Human data were used in the model to determine the complete liver fat spectrum. In the fat–water emulsion phantom, the spectrum predicted by the model (based on known fatty acid chain distribution) agreed closely with spectroscopic measurement. In human subjects, areas of {CH}
              2
              peaks at 2.1 and 1.3 ppm were linearly correlated (slope, 0.172;
              r
               = 0.991), as were the 0.9 ppm {CH}
              3
              and 1.3 ppm {CH}
              2
              peaks (slope, 0.125;
              r
               = 0.989). The 2.75 ppm {CH}
              2
              peak represented 0.6\% of the total fat signal in high‐liver‐fat subjects. These values predict that 8.6\% of the total fat signal overlies the water peak. The triglyceride model can characterize human liver fat spectra. This allows more accurate determination of liver fat fraction from {MRI} and {MRS}. Copyright © 2010 John Wiley \& Sons, Ltd.},
	pages = {784--790},
	number = {7},
	journaltitle = {{NMR} in Biomedicine},
	shortjournal = {{NMR} in Biomedicine},
	author = {Hamilton, Gavin and Yokoo, Takeshi and Bydder, Mark and Cruite, Irene and Schroeder, Michael E. and Sirlin, Claude B. and Middleton, Michael S.},
	urldate = {2025-03-27},
	date = {2011-08},
	langid = {english},
	file = {Accepted Version:/Users/csl/Zotero/storage/5DNSMBXH/Hamilton et al. - 2011 - In vivo characterization of the liver fat1 H MR spectrum.pdf:application/pdf},
}

@article{yu_multiecho_2008,
	title = {Multiecho water‐fat separation and simultaneous \textit{R} estimation with multifrequency fat spectrum modeling},
	volume = {60},
	issn = {0740-3194, 1522-2594},
	url = {https://onlinelibrary.wiley.com/doi/10.1002/mrm.21737},
	doi = {10.1002/mrm.21737},
	abstract = {Abstract
            
              Multiecho chemical shift–based water‐fat separation methods are seeing increasing clinical use due to their ability to estimate and correct for field inhomogeneities. Previous chemical shift‐based water‐fat separation methods used a relatively simple signal model that assumes both water and fat have a single resonant frequency. However, it is well known that fat has several spectral peaks. This inaccuracy in the signal model results in two undesired effects. First, water and fat are incompletely separated. Second, methods designed to estimate
              T
              
              in the presence of fat incorrectly estimate the
              T
              
              decay in tissues containing fat. In this work, a more accurate multifrequency model of fat is included in the iterative decomposition of water and fat with echo asymmetry and least‐squares estimation ({IDEAL}) water‐fat separation and simultaneous
              T
              
              estimation techniques. The fat spectrum can be assumed to be constant in all subjects and measured a priori using {MR} spectroscopy. Alternatively, the fat spectrum can be estimated directly from the data using novel spectrum self‐calibration algorithms. The improvement in water‐fat separation and
              T
              
              estimation is demonstrated in a variety of in vivo applications, including knee, ankle, spine, breast, and abdominal scans. Magn Reson Med 60:1122–1134, 2008. © 2008 Wiley‐Liss, Inc.},
	pages = {1122--1134},
	number = {5},
	journaltitle = {Magnetic Resonance in Medicine},
	shortjournal = {Magnetic Resonance in Med},
	author = {Yu, Huanzhou and Shimakawa, Ann and {McKenzie}, Charles A. and Brodsky, Ethan and Brittain, Jean H. and Reeder, Scott B.},
	urldate = {2025-03-27},
	date = {2008-11},
	langid = {english},
	file = {Full Text:/Users/csl/Zotero/storage/LKLLFUCV/Yu et al. - 2008 - Multiecho water‐fat separation and simultaneous R estimation with multifrequency fat spectrum.pdf:application/pdf},
}

@article{yu_multiecho_2007,
	title = {Multiecho reconstruction for simultaneous water‐fat decomposition and T2* estimation},
	volume = {26},
	rights = {http://onlinelibrary.wiley.com/{termsAndConditions}\#vor},
	issn = {1053-1807, 1522-2586},
	url = {https://onlinelibrary.wiley.com/doi/10.1002/jmri.21090},
	doi = {10.1002/jmri.21090},
	abstract = {Abstract
            
              Purpose
              To describe and demonstrate the feasibility of a novel multiecho reconstruction technique that achieves simultaneous water‐fat decomposition and T2* estimation. The method removes interference of water‐fat separation with iron‐induced T2* effects and therefore has potential for the simultaneous characterization of hepatic steatosis (fatty infiltration) and iron overload.
            
            
              Materials and Methods
              
                The algorithm called “T2*‐{IDEAL}” is based on the {IDEAL} water‐fat decomposition method. A novel “complex field map” construct is used to estimate both R2* (1/T2*) and local B
                0
                field inhomogeneities using an iterative least‐squares estimation method. Water and fat are then decomposed from source images that are corrected for both T2* and B
                0
                field inhomogeneity.
              
            
            
              Results
              It was found that a six‐echo multiecho acquisition using the shortest possible echo times achieves an excellent balance of short scan and reliable R2* measurement. Phantom experiments demonstrate the feasibility with high accuracy in R2* measurement. Promising preliminary in vivo results are also shown.
            
            
              Conclusion
              The T2*‐{IDEAL} technique has potential applications in imaging of diffuse liver disease for evaluation of both hepatic steatosis and iron overload in a single breath‐hold. J. Magn. Reson. Imaging 2007;26:1153–1161. © 2007 Wiley‐Liss, Inc.},
	pages = {1153--1161},
	number = {4},
	journaltitle = {Journal of Magnetic Resonance Imaging},
	shortjournal = {Magnetic Resonance Imaging},
	author = {Yu, Huanzhou and {McKenzie}, Charles A. and Shimakawa, Ann and Vu, Anthony T. and Brau, Anja C.S. and Beatty, Philip J. and Pineda, Angel R. and Brittain, Jean H. and Reeder, Scott B.},
	urldate = {2025-03-28},
	date = {2007-10},
	langid = {english},
	file = {Full Text:/Users/csl/Zotero/storage/2578ZUZL/Yu et al. - 2007 - Multiecho reconstruction for simultaneous water‐fat decomposition and T2 estimation.pdf:application/pdf},
}

@article{stelter_hierarchical_2022,
	title = {Hierarchical Multi-Resolution Graph-Cuts for Water-Fat-Silicone Separation in Breast {MRI}},
	volume = {41},
	rights = {https://creativecommons.org/licenses/by/4.0/legalcode},
	issn = {0278-0062, 1558-254X},
	url = {https://ieeexplore.ieee.org/document/9788478/},
	doi = {10.1109/TMI.2022.3180302},
	pages = {3253--3265},
	number = {11},
	journaltitle = {{IEEE} Transactions on Medical Imaging},
	shortjournal = {{IEEE} Trans. Med. Imaging},
	author = {Stelter, Jonathan K. and Boehm, Christof and Ruschke, Stefan and Weiss, Kilian and Diefenbach, Maximilian N. and Wu, Mingming and Borde, Tabea and Schmidt, Georg P. and Makowski, Marcus R. and Fallenberg, Eva M. and Karampinos, Dimitrios C.},
	urldate = {2025-03-28},
	date = {2022-11},
	file = {Full Text:/Users/csl/Zotero/storage/HJFZYWTE/Stelter et al. - 2022 - Hierarchical Multi-Resolution Graph-Cuts for Water-Fat-Silicone Separation in Breast MRI.pdf:application/pdf},
}

@article{wang_quantification_2017,
	title = {Quantification of {MRI}-{PDFF} by complex-based {MRI}: Phantom and rabbit study at 3.0T},
	volume = {10},
	issn = {1940-5901/{IJCEM}0051721},
	url = {https://www.ijcem.com/files/ijcem0051721.pdf},
	pages = {13070--13080},
	number = {9},
	journaltitle = {International Journal of Clinical and Experimental Medicine},
	shortjournal = {Int J Clin Exp Med},
	author = {Wang, Xiaomin and Zhang, Xiaojing and Ma, Lin and Li, Shengli},
	date = {2017},
}

@article{tang_nonalcoholic_2013,
	title = {Nonalcoholic Fatty Liver Disease: {MR} Imaging of Liver Proton Density Fat Fraction to Assess Hepatic Steatosis},
	volume = {267},
	issn = {0033-8419, 1527-1315},
	url = {http://pubs.rsna.org/doi/10.1148/radiol.12120896},
	doi = {10.1148/radiol.12120896},
	shorttitle = {Nonalcoholic Fatty Liver Disease},
	pages = {422--431},
	number = {2},
	journaltitle = {Radiology},
	shortjournal = {Radiology},
	author = {Tang, An and Tan, Justin and Sun, Mark and Hamilton, Gavin and Bydder, Mark and Wolfson, Tanya and Gamst, Anthony C. and Middleton, Michael and Brunt, Elizabeth M. and Loomba, Rohit and Lavine, Joel E. and Schwimmer, Jeffrey B. and Sirlin, Claude B.},
	urldate = {2025-03-29},
	date = {2013-05},
	langid = {english},
	file = {Full Text:/Users/csl/Zotero/storage/CYFEELYR/Tang et al. - 2013 - Nonalcoholic Fatty Liver Disease MR Imaging of Liver Proton Density Fat Fraction to Assess Hepatic.pdf:application/pdf},
}

@article{wibulpolprasert_correlation_2024,
	title = {Correlation between magnetic resonance imaging proton density fat fraction ({MRI}-{PDFF}) and liver biopsy to assess hepatic steatosis in obesity},
	volume = {14},
	issn = {2045-2322},
	url = {https://www.nature.com/articles/s41598-024-57324-3},
	doi = {10.1038/s41598-024-57324-3},
	abstract = {Abstract
            
              Obesity is highly associated with Non-alcoholic fatty liver disease ({NAFLD}) and increased risk of liver cirrhosis and liver cancer-related death. We determined the diagnostic performance of the complex-based chemical shift technique {MRI}-{PDFF} for quantifying liver fat and its correlation with histopathologic findings in an obese population within 24 h before bariatric surgery. This was a prospective, cross-sectional, Institutional Review Board-approved study of {PDFF}-{MRI} of the liver and {MRI}-{DIXON} image volume before bariatric surgery. Liver tissues were obtained during bariatric surgery. The prevalence of {NAFLD} in the investigated cohort was as high as 94\%. Histologic hepatic steatosis grades 0, 1, 2, and 3 were observed in 3 (6\%), 25 (50\%), 14 (28\%), and 8 (16\%) of 50 obese patients, respectively. The mean percentages of {MRI}-{PDFF} from the anterior and posterior right hepatic lobe and left lobe vs. isolate left hepatic lobe were 15.6\% (standard deviation [{SD}], 9.28\%) vs. 16.29\% ({SD}, 9.25\%). There was a strong correlation between the percentage of steatotic hepatocytes and {MRI}-{PDFF} in the left hepatic lobe (r = 0.82,
              p
               {\textless} 0.001) and the mean value (r = 0.78,
              p
               {\textless} 0.001). There was a strong correlation between {MRI}-derived subcutaneous adipose tissue volume and total body fat mass by dual-energy X-ray absorptiometry, especially at the L2–3 and L4 level (r = 0.85,
              p
               {\textless} 0.001). {MRI}-{PDFF} showed good performance in assessing hepatic steatosis and was an excellent noninvasive technique for monitoring hepatic steatosis in an obese population.},
	pages = {6895},
	number = {1},
	journaltitle = {Scientific Reports},
	shortjournal = {Sci Rep},
	author = {Wibulpolprasert, Pornphan and Subpinyo, Benya and Chirnaksorn, Supphamat and Shantavasinkul, Prapimporn Chattranukulchai and Putadechakum, Supanee and Phongkitkarun, Sith and Sritara, Chanika and Angkathunyakul, Napat and Sumritpradit, Preeda},
	urldate = {2025-03-29},
	date = {2024-03-22},
	langid = {english},
	file = {Full Text:/Users/csl/Zotero/storage/9DM7G38N/Wibulpolprasert et al. - 2024 - Correlation between magnetic resonance imaging proton density fat fraction (MRI-PDFF) and liver biop.pdf:application/pdf},
}

@article{park_differences_2024,
	title = {Differences in the prevalence of {NAFLD}, {MAFLD}, and {MASLD} according to changes in the nomenclature in a health check-up using {MRI}-derived proton density fat fraction},
	volume = {49},
	issn = {2366-0058},
	url = {https://link.springer.com/10.1007/s00261-024-04285-w},
	doi = {10.1007/s00261-024-04285-w},
	pages = {3036--3044},
	number = {9},
	journaltitle = {Abdominal Radiology},
	shortjournal = {Abdom Radiol},
	author = {Park, Hee Jun and Lee, Sunyoung and Lee, Jae Seung},
	urldate = {2025-03-29},
	date = {2024-04-08},
	langid = {english},
}

@article{byrne_time_2016,
	title = {Time to Replace Assessment of Liver Histology With {MR}-Based Imaging Tests to Assess Efficacy of Interventions for Nonalcoholic Fatty Liver Disease},
	volume = {150},
	issn = {00165085},
	url = {https://linkinghub.elsevier.com/retrieve/pii/S0016508515016637},
	doi = {10.1053/j.gastro.2015.11.016},
	pages = {7--10},
	number = {1},
	journaltitle = {Gastroenterology},
	shortjournal = {Gastroenterology},
	author = {Byrne, Christopher D. and Targher, Giovanni},
	urldate = {2025-03-29},
	date = {2016-01},
	langid = {english},
	file = {Full Text:/Users/csl/Zotero/storage/CESQN44X/Byrne and Targher - 2016 - Time to Replace Assessment of Liver Histology With MR-Based Imaging Tests to Assess Efficacy of Inte.pdf:application/pdf},
}

@article{heba_accuracy_2016,
	title = {Accuracy and the effect of possible subject‐based confounders of magnitude‐based {MRI} for estimating hepatic proton density fat fraction in adults, using {MR} spectroscopy as reference},
	volume = {43},
	rights = {http://onlinelibrary.wiley.com/{termsAndConditions}\#vor},
	issn = {1053-1807, 1522-2586},
	url = {https://onlinelibrary.wiley.com/doi/10.1002/jmri.25006},
	doi = {10.1002/jmri.25006},
	abstract = {Purpose
              To determine the accuracy and the effect of possible subject‐based confounders of magnitude‐based magnetic resonance imaging ({MRI}) for estimating hepatic proton density fat fraction ({PDFF}) for different numbers of echoes in adults with known or suspected nonalcoholic fatty liver disease, using {MR} spectroscopy ({MRS}) as a reference.
            
            
              Materials and Methods
              
                In this retrospective analysis of 506 adults, hepatic {PDFF} was estimated by unenhanced 3.0T {MRI}, using right‐lobe {MRS} as reference. Regions of interest placed on source images and on six‐echo parametric {PDFF} maps were colocalized to {MRS} voxel location. Accuracy using different numbers of echoes was assessed by regression and Bland–Altman analysis; slope, intercept, average bias, and R
                2
                were calculated. The effect of age, sex, and body mass index ({BMI}) on hepatic {PDFF} accuracy was investigated using multivariate linear regression analyses.
              
            
            
              Results
              
                {MRI} closely agreed with {MRS} for all tested methods. For three‐ to six‐echo methods, slope, regression intercept, average bias, and R
                2
                were 1.01–0.99, 0.11–0.62\%, 0.24–0.56\%, and 0.981–0.982, respectively. Slope was closest to unity for the five‐echo method. The two‐echo method was least accurate, underestimating {PDFF} by an average of 2.93\%, compared to an average of 0.23–0.69\% for the other methods. Statistically significant but clinically nonmeaningful effects on {PDFF} error were found for subject {BMI} (
                P
                range: 0.0016 to 0.0783), male sex (
                P
                range: 0.015 to 0.037), and no statistically significant effect was found for subject age (
                P
                range: 0.18‐0.24).
              
            
            
              Conclusion
              Hepatic magnitude‐based {MRI} {PDFF} estimates using three, four, five, and six echoes, and six‐echo parametric maps are accurate compared to reference {MRS} values, and that accuracy is not meaningfully confounded by age, sex, or {BMI}. J. Magn. Reson. Imaging 2016;43:398–406.},
	pages = {398--406},
	number = {2},
	journaltitle = {Journal of Magnetic Resonance Imaging},
	shortjournal = {Magnetic Resonance Imaging},
	author = {Heba, Elhamy R. and Desai, Ajinkya and Zand, Kevin A. and Hamilton, Gavin and Wolfson, Tanya and Schlein, Alexandra N. and Gamst, Anthony and Loomba, Rohit and Sirlin, Claude B. and Middleton, Michael S.},
	urldate = {2025-03-29},
	date = {2016-02},
	langid = {english},
	file = {Accepted Version:/Users/csl/Zotero/storage/ZAFTEDD9/Heba et al. - 2016 - Accuracy and the effect of possible subject‐based confounders of magnitude‐based MRI for estimating.pdf:application/pdf},
}

@article{henninger_practical_2020,
	title = {Practical guide to quantification of hepatic iron with {MRI}},
	volume = {30},
	issn = {0938-7994, 1432-1084},
	url = {http://link.springer.com/10.1007/s00330-019-06380-9},
	doi = {10.1007/s00330-019-06380-9},
	pages = {383--393},
	number = {1},
	journaltitle = {European Radiology},
	shortjournal = {Eur Radiol},
	author = {Henninger, Benjamin and Alustiza, Jose and Garbowski, Maciej and Gandon, Yves},
	urldate = {2025-03-29},
	date = {2020-01},
	langid = {english},
	file = {Full Text:/Users/csl/Zotero/storage/L4TYTZG9/Henninger et al. - 2020 - Practical guide to quantification of hepatic iron with MRI.pdf:application/pdf},
}

\appendix

\section{Proof of main results}\label{apx:proofs}

\subsection{Preliminaries}

Let \(m\in \bset{n}\). A {\em selection} is an injective function \(S:\bset{m}\to \bset{n}\). The selection \(S\) thus represents the set \(\set{S(1),\ldots, S(m)}\) and the numbering of its elements. We let \(\sel(m, n)\) the set of selection \(S:\bset{m}\to \bset{n}\). We often manipulate vectors \(\vz\in \C^{2n_s}\). In this case, we use the decomposition
\begin{equation}\label{eq:vectorDecomposition}
    \vz = \bmtx{\vz_1 \\ \vz_2}\quad\mbox{where}\quad \vz_1,\vz_2 \in \C^{n_s}.
\end{equation}
If \(U\subset \C\) is open we say that \(f:U\to \C\) is {\em holomorphic} if it is complex differentiable at every point of \(U\)~\cite[Ch.~1, Sec.~3]{remmert_theory_1991}.

Our results rely on the following two lemmas. The first characterizes the set of zeros of a holomorphic matrix-valued map or a matrix-valued map with real analytic real and imaginary parts.

\begin{lemma}\label{lem:aux:matrixNonSingularSet}
    The following assertions are true. 
    \begin{enumerate}[leftmargin=16pt, label=\roman*.]
        \item{Let \(U\subset \C^n\) be open and connected and let the components of \(\vA : U \to \C^{m\times m}\) be holomorphic. Let \(\mZ_A \subset U\) be the set of \(\vz\) such that \(\vA(\vz)\) is singular. Then either \(\mZ_A = U\) or \(\mZ_A\) is nowhere dense and it has no accumulation points if \(n = 1\).
        }
        \item{Let \(U\subset \R^n\) be open and connected and let the real and imaginary parts of the components of \(\vA : U \to \C^{m\times m}\) be real-analytic. Let \(\mZ_A \subset U\) be the set of \(\vx\) such that \(\vA(\vx)\) is singular. Then \(\mZ_A = U\) or \(\mZ_A\) is nowhere dense.
        }
    \end{enumerate}
\end{lemma}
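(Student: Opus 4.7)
\textbf{Proof plan for Lemma~\ref{lem:aux:matrixNonSingularSet}.}

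The key observation is that $\vA(\vz)$ is singular if and only if $\det\vA(\vz) = 0$, so $\mZ_A$ is the zero set of the scalar function $\det\vA$. Because the determinant is a polynomial in the entries of $\vA$, the regularity of $\vA$ transfers to $\det\vA$: in case~(i), a polynomial in holomorphic functions is holomorphic on $U$; in case~(ii), writing $A_{ij} = P_{ij} + i Q_{ij}$ with $P_{ij}, Q_{ij}$ real-analytic, we can expand $\det\vA = P + iQ$ where $P$ and $Q$ are polynomials in $\{P_{ij}, Q_{ij}\}$ with real coefficients, hence real-analytic. The plan is then to reduce each claim to a standard identity theorem.

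For case~(i), I would invoke the identity theorem for holomorphic functions on a connected open subset of $\C^n$: if a holomorphic function vanishes on a nonempty open subset, then it vanishes identically. If $\det\vA \equiv 0$, we are in the case $\mZ_A = U$. Otherwise, the contrapositive gives that $\mZ_A$ has empty interior. Since $\mZ_A$ is closed (as the preimage of $\{0\}$ under a continuous function), empty interior is equivalent to being nowhere dense, which is what we want. When $n = 1$, we additionally use the stronger one-variable identity theorem: the zeros of a nonzero holomorphic function on a connected open subset of $\C$ are isolated, i.e., $\mZ_A$ has no accumulation points in $U$.

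For case~(ii), the cleanest move is to pass to the real-valued function $|\det\vA|^2 = P^2 + Q^2$, which is real-analytic on the connected open set $U \subset \R^n$ and has $\mZ_A$ as its zero set. I would then apply the identity theorem for real-analytic functions: a real-analytic function on a connected open set that vanishes on a nonempty open subset vanishes identically. Either $|\det\vA|^2 \equiv 0$ on $U$, giving $\mZ_A = U$, or $\mZ_A$ has empty interior, and again being closed this forces $\mZ_A$ to be nowhere dense.

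There is no substantial obstacle here; the main care is in the bookkeeping. The one subtle point worth flagging is that in case~(ii) one cannot directly apply an ``identity theorem'' to the complex-valued $\det\vA$ (real-analytic functions of several real variables do not have isolated zeros in general), which is exactly why the reduction to the real-analytic nonnegative function $|\det\vA|^2$ is the right move. A second subtle point is the distinction between ``empty interior'' and ``nowhere dense'': these coincide for closed sets, so the closedness of $\mZ_A$ (from continuity of $\det\vA$) seals the conclusion in both parts.
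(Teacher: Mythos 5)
Your proposal is correct and takes essentially the same route as the paper: reduce \(\mZ_A\) to the zero set of \(\det(\vA)\), which is a polynomial in the entries, then invoke the identity theorem for holomorphic functions (with the one-variable identity principle giving the no-accumulation-point claim for \(n=1\)) and the real-analytic identity theorem in case (ii). The only cosmetic difference is that you pass to the single real-analytic function \(|\det(\vA)|^2\) where the paper applies the identity theorem to the real and imaginary parts of \(\det(\vA)\) separately; both are valid, and your explicit observation that closedness of \(\mZ_A\) upgrades empty interior to nowhere dense is a detail the paper leaves implicit (the paper also records, beyond the statement, that the zero set in case (ii) has measure zero, which it uses downstream).
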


\begin{proof}[Proof of Lemma~\ref{lem:aux:matrixNonSingularSet}]
    It is apparent in both cases that \(\mZ_A\) is the zero-set of \(\det(\vA)\). The determinant is a polynomial of the components of \(\vA\). Therefore, the first assertion follows from the fact that \(\det(\vA)\) is holomorphic, and thus it cannot vanish on an open set unless it is identically zero~\cite[Thm.~1.19]{range_holomorphic_1986}. Therefore, \(\mZ_A = U\) or it has empty interior. The identity principle yields the claim for \(n=1\)~\cite[Ch.~8, Thm.~1]{remmert_theory_1991}. The second assertion follows from the fact that \(\det(\vA)\) has real-analytic real and imaginary parts and thus they cannot vanish on an open set unless they are identically zero~\cite[Sec.~4.1]{krantz_primer_2002}. Therefore \(\mZ_A\) has empty interior. Furthermore, it has measure zero~\cite{mityagin_zero_2020}.
\end{proof}

The second characterizes the solution set to a system of equations involving complex exponentials that can be reduced to a system of modular equations. Although related to the Chinese remainder theorem, e.g.,~\cite[Thm.~2.2.2]{stein_elementary_2009}, here we are concerned on characterizing the structure of the set of solutions. 

\begin{lemma}\label{lem:aux:modularSystemOfEquations}
    Let \(n\in \N\) and let \(s_1,\ldots, s_n > 0\). Let \(z\in \C\) be such that
    \begin{equation}\label{eq:modularSystemOfEquations}
        k\in \bset{n}:\,\, e^{2\pi i z s_k} = 1.
    \end{equation}
    The following assertions are true.
    \begin{enumerate}[leftmargin=16pt, label=\roman*.]
        \item{If there exists \(k,\ell\in\bset{n}\) such that \(s_k/s_\ell\) is irrational then \(z = 0\) is the only solution to~\eqref{eq:modularSystemOfEquations}.
        }
        \item{If for every \(k,\ell\in \bset{n}\) the quotient \(s_k/s_\ell\) is rational, then 
        \[
            z \in \frac{1}{s_n} \frac{q}{p} \Z
        \]
        where for \(k,\ell\in\bset{n}\) we have that \(s_k / s_n = p_k/q_k\) with \(p_k,q_k\in \Z\) and
        \[
            p = \LCM(\set{p_1,\ldots, p_n})\,\,\mbox{and}\,\, q = \LCM(\set{p q_1 / p_1,\ldots, pq_n/p_n}).
        \]
        }
    \end{enumerate}
\end{lemma}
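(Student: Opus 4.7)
The plan is to first reduce~\eqref{eq:modularSystemOfEquations} to a purely real, integer-valued condition on $z$, and then handle the two cases separately. Writing $z = a + bi$, the modulus $|e^{2\pi i z s_k}| = e^{-2\pi b s_k}$ can equal one only if $b s_k = 0$; since each $s_k > 0$ this forces $b = 0$, so $z \in \R$ and the equations reduce to the integrality conditions $z s_k \in \Z$ for every $k \in \bset{n}$.

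For assertion~(i), I would argue by contradiction: if $z \neq 0$ and some quotient $s_k/s_\ell$ is irrational, then $m_k := z s_k$ and $m_\ell := z s_\ell$ are nonzero integers, so $s_k/s_\ell = m_k/m_\ell \in \mathbb{Q}$, a contradiction.

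For assertion~(ii), I would introduce $u := z s_n$, turning the system into $u\, p_k/q_k \in \Z$ for every $k$. The goal becomes showing that the set of real $u$ satisfying these equations equals $(q/p)\Z$. Sufficiency is essentially a definitional check: writing $u = m\, q/p$ with $m \in \Z$ and using the fact that each $p q_j/p_j$ divides $q$, one verifies $u\, p_k/q_k = m\, q/(p q_k/p_k) \cdot 1 \in \Z$. Observe that $p q_j/p_j$ is indeed an integer since $p = \LCM(p_1,\ldots, p_n)$ is a multiple of each $p_j$. For necessity, I plan to substitute $v := u p/q$ so that the condition becomes $v\, \alpha_k \in \Z$ for all $k$, with $\alpha_k := q/(p q_k/p_k) \in \Z$, and then conclude $v \in \Z$.

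The main obstacle I anticipate is the auxiliary arithmetic fact that $\gcd(\alpha_1, \ldots, \alpha_n) = 1$ whenever the $\alpha_k = q/b_k$ come from $q = \LCM(b_1, \ldots, b_n)$. This follows by a short minimality argument: if a prime $r$ divided every $\alpha_k$, then $q/r$ would still be an integer common multiple of the $b_k$ and strictly smaller than $q$, contradicting minimality. With this in hand, Bezout's identity yields $\beta_k \in \Z$ with $\sum_k \beta_k \alpha_k = 1$, giving $v = \sum_k \beta_k (v \alpha_k) \in \Z$. Translating back produces $u \in (q/p)\Z$ and hence $z \in \frac{1}{s_n} \frac{q}{p} \Z$, as claimed. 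A minor subtlety worth double-checking is that the lemma does not assume $(p_k, q_k)$ are coprime; my argument only uses that $p_k \mid p$ and that $p q_k/p_k$ is among the integers whose LCM defines $q$, both of which hold regardless of the chosen representatives.
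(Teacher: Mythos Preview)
Your proposal is correct and follows essentially the same route as the paper: reduce to \(z\in\R\) via the modulus, handle assertion~(i) by the rationality contradiction, and for assertion~(ii) reduce the integrality conditions \(z s_n p_k/q_k\in\Z\) to \(z p s_n \in (p q_k/p_k)\Z\) and combine via the \(\LCM\). The only cosmetic difference is that the paper invokes the standard identity \(\bigcap_k m_k\Z = \LCM(m_k)\Z\) directly, whereas you re-derive its content through the \(\gcd(\alpha_1,\ldots,\alpha_n)=1\) plus Bezout argument (and you additionally verify sufficiency, which the lemma does not require).
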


\begin{proof}[Proof of Lemma~\ref{lem:aux:modularSystemOfEquations}]
    The system~\eqref{eq:modularSystemOfEquations} readily implies that \(z\) must be real. To prove the first assertion, if \(z\) is a solution then there are \(m_k, m_\ell\in \Z\) such that \(z s_k = m_k\) and \(z s_\ell = m_\ell\) whence \(m_k/s_k = m_\ell/s_\ell\). However, this implies that either \(s_k/s_\ell\) or \(s_\ell/s_k\) is rational unless \(m_k = m_\ell = 0\). This implies that \(z = 0\) proving the first assertion. To prove the second assertion, we write \(s_k / s_{n} = p_k / q_k\) for \(p_k,q_k \in \Z\) and \(k\in\bset{m}\). Let \(\bar{p} = \LCM(p_1,\ldots p_{n})\) and define \(\tilde{q}_k = q_k \bar{p}/p_k\). Then \(z \bar{p} t_{n} \in \tilde{q}_k \Z\) for every \(k\in\bset{n}\). We conclude that
    \[
        z \bar{p} s_{n} \in \bigcaps_{k\in\bset{n_e}} \tilde{q}_k \Z = \LCM(\tilde{q}_1,\ldots, \tilde{q}_{n_e}) \Z
    \]
    from where the second assertion follows.
\end{proof}

\subsection{Proof of Lemma~\ref{lem:modelMatrixSubmatrices}}\label{apx:modelMatrixSubmatrices}

Let \(U = I^{n_e}\) and let \(\vPhi: U\to \C^{n_e\times n_s}\) be as in~\eqref{eq:modelMatrix}. For any selection \(S\in \sel(n_s, n_e)\) let \(\vPhi_S: U \to \C^{n_s\times n_s}\) be the map obtained from \(n_s\) rows of \(\vPhi\) with indices \(S(1), \ldots, S(n_s)\), and let \(\mZ_{\Phi_S}\) be as in Lemma~\ref{lem:aux:matrixNonSingularSet}. If \(\mZ_{\Phi} = U\) write \(t_1' = t_{S(1)}, t'_2 = t_{S(2)}, \ldots, t'_{n_s} = t_{S(n_s)}\). Using the Leibniz formula for determinants~\cite[Thm.~9.46]{axler_linear_2024} we have that
\begin{align*}
    0 &= \det(\vPhi_{S}(t',\ldots, t_{n_s})) \\
    &= \vphi_1(t_1')\sums_{\sigma:\, \sigma(1) = 1} \sign(\sigma)\prods_{k=2}^{n_s} \vphi_{k}(t'_{\sigma(k)}) + \sums_{\sigma:\, \sigma(1) \neq 1} \sign(\sigma)\vphi_1(t'_{\sigma(1)})\prods_{k=2}^{n_s} \vphi_{k}(t'_{\sigma(k)}).
\end{align*}
By varying \(t_1'\) and fixing \(t_2',\ldots, t'_{n_s}\) we may represent \(\vphi_1\) as a linear combination of \(\vphi_2,\ldots, \vphi_{n_s}\) contradicting their linear independence. If the factor multiplying \(\vphi_1(t_1')\) is zero, we decompose the remaining sum into the bijections such that \(\sigma(2) = 2\) and \(\sigma(2) \neq 2\). By varying \(t_2'\) and fixing \(t_3',\ldots, t'_{n_s}\) we can \(\vphi_2\) as a linear combination of \(\vphi_3,\ldots, \vphi_{n_s}\). We proceed until this process is exhausted, concluding that \(\mZ_{\Phi}\neq U\). Then \(\mZ_{\Phi_S}\) has measure zero and it follows that 
\[
    \mZ_{\Phi} := \bigcups_{S\in \sel(n_s,n_e)} \mZ_{\Phi_S}
\]
is the countable union of sets of measure zero whence it has measure zero. 

\subsection{Proof of Lemma~\ref{lem:periodicitySetOfW}}\label{apx:periodicitySetOfW}

Let \(n = |\supp(\tvso)|\). By writing \(\set{t_k:\, k\in\supp(\tvso)} = \set{s_1,\ldots, s_n}\) in such a way that \(s_n = \max(\set{t_k:\, k\in\supp(\tvso)})\) and then applying Lemma~\ref{lem:aux:modularSystemOfEquations} the lemma readily follows.

\subsection{Proof of Theorem~\ref{thm:localIdentifiability}}\label{apx:localIdentifiability}

It is apparent that \(\vW:\C \to \C^{n_e\times n_e}\) is an holomorphic map with \(\vW' = 2\pi i \vT \vW\). It follows that \(s: \C\times \C^{n_s}\to \C^{n_e}\) is a holomorphic map of \((\xi,\vc)\) with differential matrix at \((\xio,\vco)\) given by
\begin{equation}\label{eq:differentialMatrix}
    Ds(\xio,\vco) = \bmtx{2\pi i \vT\vW(\xio) \vPhi \vco & \vW(\xio)\vPhi }.
\end{equation}
If it is full-rank then the theorem follows from Corollary~2.6 in~\cite{range_holomorphic_1986}. Since \(n_e \geq 2n_s\) by hypothesis, it suffices to determine sufficient conditions under which its nullspace is trivial.

\begin{lemma}\label{lem:differentialMatrixFullRank}
    The differential matrix~\eqref{eq:differentialMatrix} is full-rank for {\em any} choice of \((\xio,\vco)\) if and only if the matrix \(\vJ := \bmtx{ \vT \vPhi & \vPhi }\) has trivial nullspace. 
\end{lemma}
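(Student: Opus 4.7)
The plan is to use the invertibility of the diagonal matrix $\vW(\xio)$, together with the fact that it commutes with $\vT$, to strip the $\xio$-dependence from the rank condition on $Ds(\xio,\vco)$, reducing the question to the triviality of $\ker(\vJ)$.

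First I would observe that a pair $(a,\vz)\in \C\times \C^{n_s}$ lies in $\ker(Ds(\xio,\vco))$ if and only if
\[
    2\pi i\, a\,\vT\vW(\xio)\vPhi\vco + \vW(\xio)\vPhi\vz = \vnull_{n_e}.
\]
Multiplying by $\vW(-\xio)$ and using that $\vT$ and $\vW(-\xio)$ are both diagonal (hence commute), together with $\vW(-\xio)\vW(\xio) = \vI_{n_e}$, yields the equivalent condition
\[
    \vJ\bmtx{2\pi i\, a\,\vco \\ \vz} = \vT\vPhi(2\pi i\, a\,\vco) + \vPhi\vz = \vnull_{n_e}.
\]
This equivalence is the key reduction: the rank of $Ds(\xio,\vco)$ depends on $\xio$ only through an invertible change of variables on its image.

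For the direction $(\Leftarrow)$, assume $\ker(\vJ)$ is trivial. If $(a,\vz)\in\ker(Ds(\xio,\vco))$ then the displayed equivalence forces $2\pi i\, a\,\vco = \vnull_{n_s}$ and $\vz = \vnull_{n_s}$; since $\vco\neq 0$ by our standing assumption, this forces $a=0$, so $Ds(\xio,\vco)$ is injective, hence full-rank, for every $(\xio,\vco)$. For the direction $(\Rightarrow)$ I would argue by contrapositive: suppose $\vJ$ admits a nonzero element $(\vu,\vv)\in\ker(\vJ)$, so $\vT\vPhi\vu + \vPhi\vv = \vnull_{n_e}$. If $\vu=\vnull_{n_s}$ then $\vPhi\vv=\vnull_{n_e}$; but $\vPhi$ has trivial kernel (its columns are linearly independent since every $n_s\times n_s$ submatrix of $\vPhi$ is non-singular), which forces $\vv=\vnull_{n_s}$, a contradiction. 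Hence $\vu\neq\vnull_{n_s}$, and choosing $\vco:=\vu$, $a:=1/(2\pi i)$, $\vz:=\vv$, and any $\xio\in\Hp$, produces via the reverse implication a nonzero element of $\ker(Ds(\xio,\vco))$, so this matrix fails to be full-rank.

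I do not expect any serious obstacle in this argument; it is essentially a linear-algebra bookkeeping exercise once one exploits the commutativity and invertibility of the diagonal matrices involved. The two fine points requiring care are the use of the standing hypothesis $\vco\neq 0$ to convert $a\,\vco=\vnull_{n_s}$ into $a=0$ in the forward direction, and the use of the full column rank of $\vPhi$ in the reverse direction to exclude the degenerate case $\vu=\vnull_{n_s}$.
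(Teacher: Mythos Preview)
Your proof is correct and follows essentially the same route as the paper's: strip the $\xio$-dependence via the invertible diagonal factor $\vW(\xio)$ to reduce the rank condition on $Ds(\xio,\vco)$ to that on $\vJ$, and for the converse direction choose $\vco$ to be the first block of a nonzero kernel element of $\vJ$. Your treatment is in fact slightly more careful than the paper's, as you explicitly rule out the degenerate case $\vu=\vnull_{n_s}$ using the injectivity of $\vPhi$, a point the paper leaves implicit.
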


\begin{proof}[Proof of Lemma~\ref{lem:differentialMatrixFullRank}]
    If \(Ds(\xio,\vco)\) has non-trivial nullspace it is straightforward to verify that \(\vJ\) has non-trivial nullspace. To show the converse, using the decomposition in~\eqref{eq:vectorDecomposition}, if \(\vz\in\ker(\vJ)\) then \(\vT\vPhi \vz_1 + \vPhi\vz_2 = \vnull_{n_e}\). By setting \(\vco = \vz_1\) and \(\delta \vc = 2\pi i \vz_2\) we deduce that
    \[
        2\pi i \vT \vW(\xio) \vPhi \vco + \vW(\xio) \vPhi\delta\vc = 2\pi i\vW(\xio)(\vT \vPhi\vz_1 + \vPhi \vz_2) = \vnull_{n_e}.
    \]
\end{proof}

The condition on \(\vJ\) depends only on the echo times, but not on \((\xio,\vco)\). Let \(\tilde{\vPhi}\) be defined similarly as \(\vPhi\). Then the same arguments used in the proof of Lemma~\ref{lem:modelMatrixSubmatrices} applied to \(\tilde{\vPhi}\) yield a set \(\mZ_{\tilde{\Phi}}\) of measure zero over which \(\vJ\) is not full-rank. Since the hypothesis implies that \(\vphi_1,\ldots,\vphi_{n_s}\) are linearly independenr, for \(\mZ_{\Phi}\) as in Lemma~\ref{lem:modelMatrixSubmatrices}, the set \(\mZ_{\Phi}\cap \mZ_{\tilde{\Phi}}\) also has measure zero, from where the theorem follows. 

\subsection{Proof of Proposition~\ref{prop:recoverableConcentrationSet}, Theorem~\ref{thm:solutionSetStructure} and Proposition~\ref{prop:swaps}}\label{apx:solutionSet}

We prove Proposition~\ref{prop:recoverableConcentrationSet} first. The condition implies that \(\vW(\eta) \tvso = \tvso\). Since \(n_e \geq n_s\) and each \(n_s\times n_s\) submatrix of \(\vPhi\) is non-singular, the support of \(\tvso\) has at least one element. Therefore, \(\imag(\eta) = 0\). If we let \(n = |\supp(\tvso)|\) and write \(\set{t_k:\, k\in\supp(\tvso)} = \set{s_1,\ldots, s_n}\) for \(s_n = \max(\set{t_k:\, k\in\supp(\tvso)})\) then, by applying Lemma~\ref{lem:aux:modularSystemOfEquations}, the proposition follows.

We now prove Theorem~\ref{thm:solutionSetStructure}. Let \(Z_{\Delta}\) be the set of \(\eta\) such that \(\ker(\vDi(\eta))\) is non-trivial. We first prove the following auxiliary result.

\begin{theorem}\label{thm:solutionSetRank}
    Suppose that all \(n_s\times n_s\) submatrices of \(\vPhi\) are non-singular. The following assertions are true.
    \begin{enumerate}[leftmargin=18pt, label=\roman*.]
        \item{We have that \(\dim\ker(\vDi(\eta)) \leq n_s\). 
        }
        \item{If \(n_e \geq n_s + 1\) then \(\dim\ker(\vDi(\eta)) = n_s\) if and only if \(\imag(\eta) = 0\) and there exists an orthonormal basis \(v_1,\ldots, v_{n_s}\) of \(\range(\vPhi)\) independent of \(\eta\), and complex numbers \(e^{i\theta_1},\ldots, e^{i\theta_{n_s}}\) such that \(\vW(\eta)\vv_k = e^{i\th_k} \vv_k\) for \(k\in\bset{n_s}\).
        }
        \item{If \(n_e \geq 2n_s - 1\) and \(\dim\ker(\vDi(\eta)) = n_s\) then \(e^{i\th_1} = \ldots = e^{i\th_{n_s}} = 1\).
        }
        \item{If \(n_e \geq 2n_s\) then \(Z_{\Delta}\) is discrete.
        }
    \end{enumerate}
\end{theorem}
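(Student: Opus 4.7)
The unifying observation is that $\vz=(\vz_1,\vz_2)\in\ker(\vDi(\eta))$ iff $\vW(\eta)\vPhi\vz_1=-\vPhi\vz_2$, and the coordinate projection $\vz\mapsto\vz_1$ is injective on $\ker(\vDi(\eta))$ because $\vPhi$ is injective under the submatrix hypothesis (if $\vz_1=\vnull$ then $\vPhi\vz_2=\vnull$ forces $\vz_2=\vnull$). This yields (i) immediately, and it shows that $\dim\ker(\vDi(\eta))=n_s$ is equivalent to $\vW(\eta)$ leaving $\range(\vPhi)$ invariant. For the sufficient direction of (ii), from an eigenbasis $\vv_k=\vPhi\vc{u}_k$ of $\range(\vPhi)$ one constructs the $n_s$ independent kernel vectors $(\vc{u}_k,-e^{i\th_k}\vc{u}_k)$. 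For necessity, to rule out $\imag(\eta)>0$ I would note that the entries $e^{2\pi i\eta t_k}$ of $\vW(\eta)$ then have pairwise distinct moduli $e^{-2\pi\imag(\eta)t_k}$ (the echo times are distinct and positive), so each eigenspace is a single coordinate line $\C\vc{e}_k$; invariance of $\range(\vPhi)$ would force it to be coordinate-axis-aligned, contradicting the fact that under $n_e\geq n_s+1$ and the submatrix hypothesis no $\vc{e}_k$ lies in $\range(\vPhi)$ (removing row $k$ leaves a rank-$n_s$ matrix, so $\vPhi\vz=\vc{e}_k$ has no solution). Hence $\imag(\eta)=0$, $\vW(\eta)$ is unitary-diagonal, and an orthonormal eigenbasis of $\range(\vPhi)$ consisting of $\vW(\eta)$-eigenvectors exists.

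For (iii), I would carry out a multiplicity count. Partition $\bset{n_e}$ into classes $\mathcal{E}_1,\ldots,\mathcal{E}_m$ by the common value of $e^{2\pi i\eta t_k}$, producing distinct eigenvalues $\lambda_j$ of multiplicities $\mu_j=|\mathcal{E}_j|$ summing to $n_e$. The $\lambda_j$-eigenspace $V_j$ is the coordinate subspace spanned by $\{\vc{e}_k:\,k\in\mathcal{E}_j\}$, and $V_j\cap\range(\vPhi)$ corresponds, via the isomorphism $\vPhi:\C^{n_s}\to\range(\vPhi)$, to $\ker((\vPhi)_{-\mathcal{E}_j})$. The submatrix hypothesis forces $(\vPhi)_{-\mathcal{E}_j}$ to have maximal rank $\min(n_e-\mu_j,n_s)$, yielding $n_j:=\dim(V_j\cap\range(\vPhi))=\max(\mu_j-(n_e-n_s),0)$. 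Imposing $\sum_jn_j=n_s$ and $\sum_j\mu_j=n_e$ gives
\[
    n_s+|\{j:n_j\geq 1\}|\,(n_e-n_s)\leq n_e,
\]
so under $n_e\geq 2n_s-1$ at most one active class $j_0$ survives, and it satisfies $\mu_{j_0}=n_e$. Hence $\vW(\eta)=\lambda\vI_{n_e}$ with $\lambda=\lambda_{j_0}$, and all $e^{i\th_k}$ equal this common value.

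For (iv), fix any selection $S\in\sel(2n_s,n_e)$ and analyze the minor $F(\eta):=\det[\vW_S(\eta)\vPhi_S,\vPhi_S]$. A Laplace expansion over the first $n_s$ columns gives
\[
    F(\eta)=\sum_{\sigma\in\sel(n_s,2n_s)}\pm\,e^{2\pi i\eta T_\sigma}\det((\vPhi_S)_\sigma)\det((\vPhi_S)_{\bar\sigma}),
\]
where $T_\sigma=\sum_{k\in\sigma}t_{s_k}$ and $\bar\sigma$ is its complement in $\bset{2n_s}$. Because the echo times are distinct, the selection $\sigma^\star$ picking the $n_s$ largest $t_{s_k}$ uniquely maximizes $T_\sigma$: any swap of a top index with a non-top index strictly decreases the sum. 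Its coefficient is a product of two $n_s\times n_s$ minors of $\vPhi$, both nonzero by hypothesis, so $F\not\equiv 0$. As an entire function of $\eta$, $F$ has a discrete zero set, and since $Z_\Delta\subset Z(F)$, $Z_\Delta$ is discrete. The main difficulty I anticipate is reconciling the literal statement $e^{i\th_k}=1$ in (iii) with the multiplicity count, which only delivers a \emph{common} value $\lambda$ for all the $e^{i\th_k}$; examples with incommensurable echo times allow $\vW(\eta)=\lambda\vI$ with $\lambda\neq 1$ while $\dim\ker(\vDi(\eta))=n_s$, so pinning down $\lambda=1$ appears to require either an additional normalization of the basis $\vv_k$ or restricting attention to the $\eta$'s relevant to the solution set (where the kernel must contain $(\vco,-\vco)$, forcing $(\lambda-1)\vco=\vnull$).
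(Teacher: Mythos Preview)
Your argument is correct and in several places takes a different, often cleaner, route than the paper. For (i) both proofs use the same injectivity-of-projection idea. For (ii), the paper argues via supports of eigenvectors: any eigenvector $\vv\in\range(\vPhi)$ has $|\supp(\vv)|\geq n_e-n_s+1\geq 2$, and two equations $e^{-2\pi i\eta t_k}=\lambda$ force $\imag(\eta)=0$; your distinct-moduli argument (forcing $\range(\vPhi)$ to be coordinate-aligned, which you then rule out) is an elegant alternative. You do omit the clause ``independent of $\eta$'': the paper handles this by noting that any two $\vW(\eta)$ leaving $\range(\vPhi)$ invariant commute and are therefore jointly diagonalizable. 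For (iii), the paper uses a support-overlap pigeonhole (two eigenvectors with supports of size $\geq n_e-n_s+1$ must intersect when $n_e\geq 2n_s-1$, forcing equal eigenvalues), whereas your multiplicity count is more systematic---and in fact your displayed inequality $n_s+|J|(n_e-n_s)\leq n_e$ already yields $|J|\leq 1$ under the weaker hypothesis $n_e\geq n_s+1$, so you have proved more than stated. For (iv), the paper invokes linear independence of the exponentials arising in the Leibniz expansion, which is a bit delicate since different index subsets can have equal time-sums; your identification of the unique dominant exponent $T_{\sigma^\star}$ with nonzero coefficient is tighter and avoids that issue.

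Your closing observation about (iii) is well taken: both your multiplicity count and the paper's support-overlap argument establish only that all the $e^{i\th_k}$ share a \emph{common} value $\lambda$ (equivalently $\vW(\eta)=\lambda\vI_{n_e}$), not that $\lambda=1$. The paper's proof has the same gap. In the downstream application this is harmless because the relevant kernel element is $(\vco,-\vco)$, which forces $\lambda=1$ via Proposition~\ref{prop:recoverableConcentrationSet}; but as a standalone assertion, ``$=1$'' should read ``are all equal.''
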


\begin{proof}[Proof of Theorem~\ref{thm:solutionSetRank}]
    To prove the first assertion, let \(m = \dim\ker(\vDi(\eta))\) and let \(\vz_1,\ldots, \vz_m\) be a basis for \(\ker(\vDi(\eta))\). Using the decomposition in~\eqref{eq:vectorDecomposition} note that \(\vz_{2,1},\ldots, \vz_{2,m}\) and \(\vz_{1,1},\ldots, \vz_{1,m}\) must be linearly independent in \(\C^{n_s}\). If not, there exists scalars \(\alpha_1,\ldots, \alpha_m\) such that \(\alpha_1 \vz_{2,1} + \ldots + \alpha_m \vz_{2,m} = \vnull_{n_s}\). However, this implies that
    \[
        \vnull_{n_e} = \vDi(\eta)\left(\sums_{k=1}^m\alpha_k \vz_k\right) =  \vW(\xi) \vPhi \left(\sums_{k=1}^m\alpha_k\vz_{1,k}\right)
    \]
    whence \(\alpha_1 \vz_{1,1} + \ldots + \alpha_m \vz_{1,m} = \vnull_{n_s}\). This contradicts the linear independence of \(\vz_1,\ldots, \vz_m\). The same argument shows that \(\vz_{1,1},\ldots, \vz_{1,m}\) are also linearly independent from where the assertion follows.

    To prove the second assertion, suppose that \(m = n_s\). Then \(\vz_{2,1},\ldots, \vz_{2,m}\) forms a basis for \(\C^{n_s}\) and there exists a \(n_s\times n_s\) non-singular matrix \(\vL\) such that for any \(\vc\in \C^{n_s}\)
    \[
        \vnull_{n_e} = \vW(\eta) \vPhi \vL\vco - \vPhi\vco = (\vW(\eta)\vPhi\vL - \vPhi)\vc.
    \]
    Therefore we have that \(\vW(\eta)\vPhi\vL = \vPhi\) or \(\vW(-\eta)\vPhi = \vPhi\vL\). This implies that \(\vW(-\eta) : \range(\vPhi) \to \range\vPhi)\). Since it is a normal operator it admits an orthonormal basis of eigenvectors~\cite[Thm.~7.31]{axler_linear_2024} which form an orthonormal basis for \(\range(\vPhi)\). If \(\vv\in \C^{n_e}\) is an eigenvector with eigenvalue \(\lambda\in\C\) it follows that \(e^{-2\pi i \eta t_k} = \lam\) for \(k\in \supp(\vv)\). Since every \(n_s\times n_s\) submatrix of \(\vPhi\) is non-singular, there are at least \(n_e - n_s + 1\) equations, and, since \(n_e > n_s\), there are at least two such equations. This implies that \(\imag(\eta) = 0\) and that \(|\lam| = 1\). Hence, the eigenvalues of \(\vW(-\eta)\) have the form \(e^{-i\th_1},\ldots, e^{-i\th_{n_s}}\) as we wanted to show. Conversely, if there exists an orthonormal basis as in the statement, then we can define \(\vu_1,\ldots,\vu_{n_s} \in \C^{n_s}\) through the relation \(\vv_k = \vPhi \vu_k\) for \(k\in \bset{n_s}\). It is clear that \(\vu_1,\ldots, \vu_{n_s}\) are a basis for \(\C^{n_s}\) and that the collection \(\vz_1,\ldots, \vz_{n_s}\) where \(\vz_{k,1} = e^{-i\th_k}\vu_k\) and \(\vz_{k,2} = -\vu_k\) is a basis for \(\ker(\vPhi)\). Hence, \(m = n_s\). To prove that the orthonormal basis \(\vv_1,\ldots, \vv_{n_s}\) of \(\range(\Phi)\) is uniquely determined, it suffices to see that if \(m = n_s\) for \(\eta_1,\eta_2\) the previous arguments show that \(\vW(-\eta_1),\vW(-\eta_2):\range(\vPhi)\to \range(\vPhi)\). Since they commute, they are jointly diagonalizable~\cite[Thm.~5.76]{axler_linear_2024}. 
    
    To prove the third assertion, if \(\vv_k, \vv_\ell\) are two eigenvectors of \(\vW(-\eta)\) then their supports have at least \(n_e - n_s + 1\) elements. Since \(n_e \geq 2n_s - 1\) this implies that the intersection of their supports has at least 1 element. If the corresponding eigenvalues are \(e^{-i\th_k}\) and \(e^{-i\th_\ell}\) and the common index is \(p\) then \(e^{-2\pi i \eta t_p} = e^{-i\th_k} = e^{-i\th_\ell} = e^{-2\pi i \eta t_p}\) whence all eigenvalues must be the same.

    To prove the fourth assertion, for any \(S\in \sel(2n_s, n_e)\) let \(\vDi_S: U \to \C^{2n_s\times 2n_s}\) be the map obtained from the \(2n_s\) rows of \(\vPhi\) with indices \(S(1), \ldots, S(2n_s)\). Since \(\eta\mapsto \vDi(\eta)\) is holomorphic we may apply Lemma~\ref{lem:aux:matrixNonSingularSet}. The same arguments used in the proof of Lemma~\ref{lem:modelMatrixSubmatrices} show that if \(\mZ_{\vDi_S} = U\) then by expanding the determinant we may represent an exponential of the form \(e^{2\pi i \eta s_1}\) as a linear combination of exponentials of the form \(e^{2\pi i \eta s_2},\ldots, e^{2\pi i \eta s_{2 n_s}}\) and a constant, where \(s_1, \ldots, s_{2n_s}\) are sums of subsets of the echo times. Since these are linearly independent, we must have that \(\mZ_{\vDi_S} = U\) is a discrete set with no accumulation points. Using the fact that \(Z_{\Delta}\subset \mZ_{\vDi_S}\) the assertion follows.
\end{proof}

To prove Theorem~\ref{thm:solutionSetStructure} we first
decompose \(Z_{\Delta}\) into
\[
    Z_{\Delta}^* := \set{\eta\in Z_{\Delta}:\, \dim\ker(\vDi(\eta)) = n_s}
\]
and \(Z_{\Delta}^{\circ} = Z_{\Delta} \setminus Z_{\Delta}^*\). Define
\[
    \mZ_c = \bigcups_{\eta\in Z_{\Delta}^{\circ}}\, Z_c(\eta).
\]
Since \(n_e \geq 2n_s\) we have that \(Z_{\Delta}\) is discrete by the fourth assertion in Theorem~\ref{thm:solutionSetRank}. On one hand, since each one of the subspaces in the above union has dimension strictly less than \(n_s\) we conclude that \(\mZ_c\) is the countable union of sets of measure zero and thus it has measure zero. On the other, if \(\eta \in Z_{\Delta}^{\circ}\) then \(\dim\ker(\vDi(\eta)) = n_s\). However, using the decomposition in~\eqref{eq:vectorDecomposition}, the third assertion of Theorem~\ref{thm:solutionSetRank} implies that \(\vz\in \ker(\vDi(\eta))\) yields \(\vz_{1} = -\vz_{2}\). Hence, if \(\vco\in \mZ_c^*\) we have that \(Z_{\Delta}^* = \Xi(\vco)\). This proves the theorem.

Finally, we prove Proposition~\ref{prop:swaps}. In this case, it suffices to observe that in the previous argument, if \(\vco\in \mZ_c^*\) then we can use the second assertion in Theorem~\ref{thm:solutionSetRank}. This yields the proposition.

\subsection{Proof of Theorem~\ref{thm:localExactRecovery}}\label{apx:localExactRecovery}

Before proving Theorem~\ref{thm:localExactRecovery} we provide some auxiliary results about the regularity of the residual matrix in Section~\ref{apx:residualMatrixRegularity} and the regularity of the residual in Section~\ref{apx:residualRegularity}. We then proceed to prove the theorem in Section~\ref{apx:localConvergence}. 

\subsubsection{Regularity of the residual matrix}\label{apx:residualMatrixRegularity}

We summarize the main results about the regularity of \(\vR\) in the following lemma.
\begin{lemma}\label{lem:residualMatrixDerivatives}
    Let \(n\in\N\). Then for any \(\xi\in \C\) we have that \(\vR^{(n)}(\xi)\adj = \vR^{(n)}(\xi\adj)\), that \(\vR^{(n)}(\xi) = 2\pi i \bprod{\vT}{\vR^{(n-1)}(\xi)}\) and that \(\nrmop{\vR^{(n)}(\xi)} \leq  \tne^n  e^{\tS |\imag(\xi)| /2}\)
\end{lemma}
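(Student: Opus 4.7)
The plan is to prove the three claims in the order: (i) the commutator recursion for the derivatives, (ii) the adjoint identity, and (iii) the operator norm bound. The recursion is the key structural fact, and once it is in hand the other two fall out by induction.

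First, I would compute \(\vR'(\xi)\) directly from \(\vR(\xi) = \vW(\xi)\vPp\vW(-\xi)\). Using \(\vW'(\xi) = 2\pi i\vT\vW(\xi)\) and the elementary fact that \(\vT\) is diagonal and therefore commutes with \(\vW(\pm\xi)\), the product rule gives
\[
    \vR'(\xi) = 2\pi i \vT\vW(\xi)\vPp\vW(-\xi) - 2\pi i\vW(\xi)\vPp\vW(-\xi)\vT = 2\pi i\bprod{\vT}{\vR(\xi)}.
\]
For \(n\geq 2\), since \(\vT\) is constant and differentiation is linear, applying \(\frac{d^{n-1}}{d\xi^{n-1}}\) to this identity immediately yields \(\vR^{(n)}(\xi) = 2\pi i\bprod{\vT}{\vR^{(n-1)}(\xi)}\). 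This establishes claim two.

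For the adjoint identity, I would start from \(n=0\), where \(\vW(\xi)\adj = \vW(-\xi\adj)\) (diagonal entries are scalar exponentials) and \(\vPp\adj = \vPp\) (orthogonal projector) give \(\vR(\xi)\adj = \vW(-\xi)\adj \vPp \vW(\xi)\adj = \vW(\xi\adj)\vPp\vW(-\xi\adj) = \vR(\xi\adj)\). Then I would induct using the recursion: because \(\vT = \vT\adj\) and \((2\pi i X)\adj = -2\pi i X\adj\), we have \(\bprod{\vT}{X}\adj = \bprod{X\adj}{\vT}\) and therefore \((2\pi i\bprod{\vT}{X})\adj = 2\pi i\bprod{\vT}{X\adj}\). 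Feeding this into the recursion transports the identity from \(\vR^{(n-1)}\) to \(\vR^{(n)}\).

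For the operator norm bound I would unfold the commutator recursion to obtain the closed form
\[
    \vR^{(n)}(\xi) = (2\pi i)^n \sums_{k=0}^n (-1)^k \binom{n}{k} \vT^{n-k}\vR(\xi)\vT^k,
\]
which follows from the standard iterated-adjoint expansion. Bounding each term by \(\nrmop{\vT}^n\nrmop{\vR(\xi)}\) and using \(\sum_k\binom{n}{k}=2^n\) gives \(\nrmop{\vR^{(n)}(\xi)}\leq (4\pi t_{n_e})^n\nrmop{\vR(\xi)}=\tne^n\nrmop{\vR(\xi)}\), since \(t_{n_e}\) is the largest diagonal entry of \(\vT\). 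For \(\nrmop{\vR(\xi)}\) itself, I would use \(\nrmop{\vPp}\leq 1\) and the explicit diagonal form \(\nrmop{\vW(\pm\xi)} = \max_k e^{\mp 2\pi\imag(\xi)t_k}\), which attains its maximum at \(k=1\) or \(k=n_e\) depending on the sign of \(\imag(\xi)\); multiplying the two factors yields exactly \(e^{2\pi|\imag(\xi)|(t_{n_e}-t_1)} = e^{\tS|\imag(\xi)|/2}\). Combining gives the claimed bound.

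None of the steps present a serious obstacle; the only place that requires care is keeping the constants consistent, in particular recognising that the two factors of \(2\) (from \(\nrm{\vT}\leq t_{n_e}\) combined with \(\sum_k\binom{n}{k}=2^n\)) pair with \(2\pi\) to give exactly \(\tne = 4\pi t_{n_e}\), and that the contributions of \(\vW(\xi)\) and \(\vW(-\xi)\) combine into the factor \(\tS/2 = 2\pi(t_{n_e}-t_1)\) in the exponent.
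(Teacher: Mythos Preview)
Your proposal is correct and follows essentially the same approach as the paper: compute the base cases directly (the commutator formula for \(\vR'\), the adjoint identity and norm bound for \(\vR\)) and then induct. The paper in fact only writes out the base case and leaves the induction ``to the reader''; your argument supplies those details, and your use of the iterated-commutator expansion \(\vR^{(n)}=(2\pi i)^n\sum_k(-1)^k\binom{n}{k}\vT^{n-k}\vR\vT^k\) for the norm bound is just an explicit unfolding of the one-step inductive estimate \(\nrmop{\vR^{(n)}}\leq 4\pi t_{n_e}\nrmop{\vR^{(n-1)}}\).
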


\begin{proof}[Proof of Lemma~\ref{lem:residualMatrixDerivatives}]
    The lemma follows from an induction argument on \(n\). We prove only the initial case and leave the details to the reader. Since \(\vW'(\xi) = 2\pi i \vT \vW(\xi)\) we have that
    \begin{align*}
        \vR'(\xi) &= 2\pi i \vT \vW(\xi) \vPp\vW(-\xi) - 2\pi i \vW(\xi) \vPp \vW(-\xi) \vT \\
        &= 2\pi i (\vT\vR(\xi) - \vR(\xi) \vT) \\
        &= 2\pi i \bprod{\vT}{\vR(\xi)}.
    \end{align*}
    Similarly,
    \[
        \vR(\xi)\adj = \vW(-\xi)\adj \vPp\adj\vW(\xi)\adj = \vW(\xi\adj)\vPp\vW(-\xi\adj) = \vR(\xi\adj).
    \]
    Finally, suppose that \(\imag(\xi) \geq 0\). Then
    \[
        \nrm{\vR(\xi)\vs}_2 = \nrm{\vW(\xi)\vPp\vW(-\xi)\vs}_2 \leq e^{-2\pi t_1\imag(\xi)} \nrmop{\vPp}\nrm{\vW(-\xi)\vs}_2 \leq e^{2\pi (t_{n_e}-t_1) \imag(\xi)}
    \]
    where we used the fact that \(\nrmop{\vP_{\perp}} = 1\). If \(\imag(\xi) < 0\) it follows that
    \[
        \nrm{\vR(\xi)\vs}_2 = \nrm{\vW(\xi)\vPp\vW(-\xi)\vs}_2 \leq e^{2\pi t_{n_e}|\imag(\xi)|} \nrmop{\vPp}\nrm{\vW(-\xi)\vs}_2 \leq e^{2\pi (t_{n_e}-t_1) |\imag(\xi)|}.
    \]
\end{proof}

\subsubsection{Regularity of the residual}\label{apx:residualRegularity}

To simplify the notation, for \(n\in\No\) let
\[
    \psi_n(\xi,\xi\adj) = \frac{1}{2}\iprod{\vso}{\vR^{(n)}(\xi\adj)\vR^{(n)}(\xi)\vso}.
\]
The results about their regularity are summarized below.

\begin{lemma}\label{lem:residualsDerivativeBound}
    For \(n\in \No\) we have that
    \[
        |\psi_n(\xi + \eta, \xi\adj + \eta\adj) - \psi_n(\xi,\xi\adj)| \leq \tne^{2n+1}\nrm{\vso}^2  |\eta| \beta(\tS \imag(\xi), \tS\imag(\eta))
    \]
    where
    \begin{equation}\label{eq:betaFunction}
        \beta(a, b) := \int_0^1 e^{|a + \th b|} d\th
    \end{equation}
\end{lemma}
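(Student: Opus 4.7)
The plan is to apply the fundamental theorem of calculus to the real-valued map $g(\th) = \psi_n(\xi + \th\eta,\, \xi\adj + \th\eta\adj)$ for $\th \in [0,1]$, bound $|g'(\th)|$ uniformly, and then integrate. This relies only on Lemma~\ref{lem:residualMatrixDerivatives} and the Wirtinger calculus, so no new structural results are needed.

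First I would observe that $\psi_n$ is smooth in the Wirtinger sense, since Lemma~\ref{lem:residualMatrixDerivatives} shows that $\vR$ is an entire function of its argument; thus $\vR^{(n)}(\xi)$ is holomorphic in $\xi$ (so $\dxic \vR^{(n)}(\xi) = 0$) and $\vR^{(n)}(\xi\adj)$ depends holomorphically on $\xi\adj$ (so $\dxi \vR^{(n)}(\xi\adj) = 0$). Applying the Wirtinger derivatives to $\psi_n(\xi,\xi\adj) = \tfrac{1}{2}\iprod{\vso}{\vR^{(n)}(\xi\adj)\vR^{(n)}(\xi)\vso}$ then yields
\begin{equation*}
    \dxi \psi_n(\xi,\xi\adj) = \tfrac{1}{2}\iprod{\vso}{\vR^{(n)}(\xi\adj)\vR^{(n+1)}(\xi)\vso}, \qquad \dxic \psi_n(\xi,\xi\adj) = (\dxi\psi_n(\xi,\xi\adj))\adj,
\end{equation*}
where the second identity uses that $\psi_n$ is real-valued (which follows from $\vR^{(n)}(\xi\adj)\adj = \vR^{(n)}(\xi)$, hence $\psi_n = \tfrac{1}{2}\nrm{\vR^{(n)}(\xi)\vso}_2^2$).

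Next I would apply the chain rule for the real parameter $\th$ to obtain $g'(\th) = \eta\,\dxi\psi_n + \eta\adj\,\dxic\psi_n = 2\real(\eta\,\dxi\psi_n)$, evaluated at $(\xi + \th\eta,\, \xi\adj + \th\eta\adj)$. By Cauchy--Schwarz,
\begin{equation*}
    |\dxi \psi_n(\zeta,\zeta\adj)| \leq \tfrac{1}{2}\nrm{\vso}_2^2\,\nrmop{\vR^{(n)}(\zeta\adj)}\,\nrmop{\vR^{(n+1)}(\zeta)}.
\end{equation*}
Invoking the operator bound from Lemma~\ref{lem:residualMatrixDerivatives} and using $|\imag(\zeta\adj)| = |\imag(\zeta)|$, this gives $|\dxi \psi_n(\zeta,\zeta\adj)| \leq \tfrac{1}{2}\tne^{2n+1}\nrm{\vso}_2^2\,e^{\tS|\imag(\zeta)|}$. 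Therefore $|g'(\th)| \leq 2|\eta||\dxi\psi_n| \leq \tne^{2n+1}\nrm{\vso}_2^2\,|\eta|\,e^{|\tS\imag(\xi)+\th\tS\imag(\eta)|}$, where I have used that $\imag(\xi + \th\eta) = \imag(\xi)+\th\imag(\eta)$ is real so the absolute value passes through the $\tS$ factor.

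Finally I would integrate: $|\psi_n(\xi+\eta,\xi\adj+\eta\adj) - \psi_n(\xi,\xi\adj)| = |g(1)-g(0)| \leq \int_0^1 |g'(\th)|\,d\th$, which after substituting the bound above is exactly $\tne^{2n+1}\nrm{\vso}_2^2\,|\eta|\,\beta(\tS\imag(\xi),\tS\imag(\eta))$ by the definition~\eqref{eq:betaFunction}. The main bookkeeping obstacle is the factor of $2$: one gets a $\tfrac{1}{2}$ from the definition of $\psi_n$ and a $2$ from $g'(\th) = 2\real(\eta\,\dxi\psi_n)$, and these cancel to leave the clean constant $\tne^{2n+1}\nrm{\vso}_2^2$ asserted in the lemma. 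Beyond that, the argument is routine.
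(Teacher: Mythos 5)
Your proposal is correct and follows essentially the same route as the paper's proof: the fundamental theorem of calculus along the segment \(\th \mapsto (\xi + \th\eta, \xi\adj + \th\eta\adj)\), the identity \(g'(\th) = 2\real(\eta\,\dxi\psi_n)\) with \(\dxi\psi_n\) computed from Lemma~\ref{lem:residualMatrixDerivatives}, Cauchy--Schwarz together with the operator-norm bound \(\nrmop{\vR^{(n)}(\zeta)} \leq \tne^{n} e^{\tS|\imag(\zeta)|/2}\), and integration to identify \(\beta\). Your bookkeeping of the cancelling factors \(\tfrac{1}{2}\) and \(2\), and the observation that \(\psi_n(\xi,\xi\adj) = \tfrac{1}{2}\nrm{\vR^{(n)}(\xi)\vso}_2^2\) is real-valued via \(\vR^{(n)}(\xi\adj) = \vR^{(n)}(\xi)\adj\), match the paper's argument exactly.
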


\begin{proof}[Proof of Lemma~\ref{lem:residualsDerivativeBound}]
    It is apparent that \(\psi_n\) is real-valued and Wirtinger differentiable with
    \[
        \dxi \psi_n(\xi,\xi\adj) = \frac{1}{2}\iprod{\vR^{(n)}(\xi)\vso}{\vR^{(n+1)}(\xi)\vso}.
    \]
    Using the fundamental theorem of calculus
    \[
        \psi_n(\xi + \eta, \xi\adj + \eta\adj) - \psi_n(\xi, \xi\adj) = 2\real\left(\eta \int_0^1 \dxi \psi_n(\xi+\th\eta, \xi\adj+\th\eta\adj)\, d\th\right).
    \]
    Lemma~\ref{lem:residualMatrixDerivatives} yields the bound
    \begin{align*}
        \left|\psi_n(\xi + \eta, \xi\adj + \eta\adj) - \psi_n(\xi, \xi\adj)\right| &\leq |\eta|\nrm{\vso}^2 \int_0^1 \nrmop{\vR^{(n)}(\xi+\th\eta)}\nrmop{\vR^{(n+1)}(\xi+\th\eta)} d\th\\
        &\leq |\eta|\tne^{2n+1}\nrm{\vso}^2 \int_0^1 e^{\tS |\imag(\xi) + \th\imag(\eta)|} d\th
    \end{align*}
    from where the lemma follows.
\end{proof}

The function \(\beta :\R\times \R\to\R\) is independent of the parameters of the problem. When \(a + b \geq 0\) it has the closed form \(\beta(a, b) = b^{-1}e^{a} (e^{b} - 1) = e^a e^{b/2} \sinh(b/2)\).

\subsubsection{Bounds on the radius of monotonicity}\label{apx:radiusOfMonotonicity}

Using Lemma~\ref{lem:residualMatrixDerivatives} we can compute the Hessian \(\nabla_W^2 \fo\) of the residual in closed form. The quadratic form it induces on the variable \((\eta,\eta\adj)\) is represented as
\begin{equation}\label{eq:wirtingerHessianQuadAction}
    H_W\fo(\xi,\xi\adj)(\eta,\eta\adj) = |\eta|^2 \nrm{\vR'(\xi)\vso}_2^2 + \real(\eta^2 \iprod{\vso}{\vR(\xi\adj)\vR''(\xi)\vso}).
\end{equation}
If \(\xio\) is the true parameter then~\eqref{eq:hessianAtTrueParameter} holds as \(\vR(\xio)\vso = 0\). 

\begin{lemma}\label{lem:hessianAtTrueParameterPositive}
    Suppose that Theorem~\ref{thm:localIdentifiability} holds. If \(\vco \neq 0\) then \(\vR'(\xio)\vso\neq 0\).
\end{lemma}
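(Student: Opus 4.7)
The plan is to reduce the statement to the injectivity condition on the matrix $\vJ = [\vT\vPhi,\,\vPhi]$ which underpins Theorem~\ref{thm:localIdentifiability} via Lemma~\ref{lem:differentialMatrixFullRank}. The key observation is that, because $\vR(\xio)\vso = 0$, the commutator formula $\vR'(\xi) = 2\pi i\bprod{\vT}{\vR(\xi)}$ from Lemma~\ref{lem:residualMatrixDerivatives} collapses to a single term at $\xio$.

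First I would compute
\[
    \vR'(\xio)\vso = 2\pi i\bigl(\vT\vR(\xio)\vso - \vR(\xio)\vT\vso\bigr) = -2\pi i\,\vR(\xio)\vT\vso,
\]
using $\vR(\xio)\vso = \vnull_{n_e}$. Since $\vT$ and $\vW(-\xio)$ are both diagonal they commute, and substituting $\vW(-\xio)\vso = \vPhi\vco$ gives
\[
    \vR'(\xio)\vso = -2\pi i\,\vW(\xio)\vPp\vT\vPhi\vco.
\]
Because $\vW(\xio)$ is an invertible diagonal matrix, the equation $\vR'(\xio)\vso = \vnull_{n_e}$ is equivalent to $\vPp\vT\vPhi\vco = \vnull_{n_e}$, i.e.\ $\vT\vPhi\vco\in\range(\vPhi)$.

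Second I would leverage identifiability. The hypothesis that Theorem~\ref{thm:localIdentifiability} applies means that the echo times are generic and, by the proof of that theorem (Lemma~\ref{lem:differentialMatrixFullRank}), the matrix $\vJ = [\vT\vPhi,\,\vPhi]$ has trivial nullspace. If $\vT\vPhi\vco \in \range(\vPhi)$, there exists $\vc\in\C^{n_s}$ with $\vT\vPhi\vco = \vPhi\vc$, equivalently
\[
    \vJ\bmtx{\vco \\ -\vc} = \vnull_{n_e}.
\]
Using the decomposition~\eqref{eq:vectorDecomposition} and $\ker(\vJ) = \set{\vnull}$, this forces $\vco = \vnull_{n_s}$, contradicting the hypothesis $\vco\neq 0$. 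Hence $\vR'(\xio)\vso \neq 0$.

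There is no serious obstacle: the whole argument amounts to identifying the correct bookkeeping so that the vanishing of $\vR'(\xio)\vso$ translates into an element of $\ker(\vJ)$ whose ``$\vPhi$-block'' is exactly $\vco$. The only care required is in tracking the invertibility of $\vW(\xio)$ (which is immediate since it is a diagonal matrix of unimodular complex exponentials on its real part and of decaying exponentials on its imaginary part, never zero) and in recalling that the content of Theorem~\ref{thm:localIdentifiability} is precisely that $\vJ$ is injective.
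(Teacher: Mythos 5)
Your proposal is correct and follows essentially the same route as the paper: both use the commutator identity \(\vR'(\xi) = 2\pi i\bprod{\vT}{\vR(\xi)}\) together with \(\vR(\xio)\vso = \vnull_{n_e}\) to reduce \(\vR'(\xio)\vso = 0\) to \(\vR(\xio)\vT\vso = 0\), hence to \(\vT\vPhi\vco \in \range(\vPhi)\), and then invoke the trivial nullspace of \(\vJ = \bmtx{\vT\vPhi & \vPhi}\) from Lemma~\ref{lem:differentialMatrixFullRank} to force \(\vco = \vnull_{n_s}\), contradicting \(\vco \neq 0\). Your write-up is in fact slightly more explicit than the paper's (making the invertibility of \(\vW(\xio)\) and the commutation of the diagonal matrices \(\vT\) and \(\vW(-\xio)\) visible), but the argument is the same.
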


\begin{proof}[Proof of Lemma~\ref{lem:hessianAtTrueParameterPositive}]
    If \(\vR'(\xio)\vso = 0\) then Proposition~\ref{lem:residualMatrixDerivatives} implies that \(\vR(\xio)\vT\vso = 0\). However, it follows that \(\vT\vPhi\vco = \vPhi\vc\) for some \(\vc\). Since Theorem~\ref{thm:localIdentifiability} holds, from Lemma~\ref{lem:differentialMatrixFullRank} we conclude that \(\vc = \vco = 0\). This contradiction yields the lemma.
\end{proof}

To bound the radius of the neighborhood on which the Hessian is positive definite, which we call the {\em radius of monotonicity}, we determine the radius over which the lower bound
\[
    \HWfo(\xi,\xi\adj)(\eta,\eta\adj) \geq \nrm{\vR'(\xi)\vso}_2^2 - \nrm{\vR(\xi)\vso}_2\nrm{\vR''(\xi)\vso}_2,
\]
valid for \(|\eta| = 1\), remains positive. We have the following lemma.

\begin{lemma}\label{lem:radiusOfMonotonicity}
    Let \(\rho \in (0, 1)\). Then for any \(\xi\in \Hp\) such that
    \[
        |\xi - \xio| \leq \frac{1}{\tS} W\left(\frac{\tS e^{-\tS \imag(\xio)}}{2\nrm{\vso}_2^2}\gxio^+(\rho)\right)
    \]
    we have that
    \[
        \rho \HWfo(\xio,\xio\adj)(\eta, \eta\adj) \leq \HWfo(\xio,\xio\adj)(\eta, \eta\adj)\leq (2 + \rho)\HWfo(\xio,\xio\adj)(\eta, \eta\adj)
    \]
    for \(|\eta| = 1\).
\end{lemma}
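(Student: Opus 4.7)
The plan is to reduce the stated two-sided inequality to a single uniform absolute bound
\[
\bigl|\HWfo(\xi,\xi\adj)(\eta,\eta\adj) - \HWfo(\xio,\xio\adj)(\eta,\eta\adj)\bigr| \leq (1-\rho)\nrm{\vR'(\xio)\vso}_2^2, \quad |\eta|=1.
\]
Because $\vR(\xio)\vso = 0$ makes $\HWfo(\xio,\xio\adj)(\eta,\eta\adj) = \nrm{\vR'(\xio)\vso}_2^2$ by~\eqref{eq:hessianAtTrueParameter}, such an estimate yields both inequalities of the lemma (the middle term in the display is, of course, $\HWfo(\xi,\xi\adj)(\eta,\eta\adj)$).

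Using~\eqref{eq:wirtingerHessianQuadAction} together with the identity $\vR(\xi\adj)\adj = \vR(\xi)$ from Lemma~\ref{lem:residualMatrixDerivatives}, I would decompose
\[
\HWfo(\xi,\xi\adj)(\eta,\eta\adj) - \HWfo(\xio,\xio\adj)(\eta,\eta\adj) = \bigl[\nrm{\vR'(\xi)\vso}_2^2 - \nrm{\vR'(\xio)\vso}_2^2\bigr] + \real\bigl(\eta^2\iprod{\vR(\xi)\vso}{\vR''(\xi)\vso}\bigr).
\]
The first bracket is controlled directly by Lemma~\ref{lem:residualsDerivativeBound} with $n=1$. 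For the cross term, $\vR$ is holomorphic by Lemma~\ref{lem:residualMatrixDerivatives} and $\vR(\xio)\vso = 0$, so the fundamental theorem of calculus gives $\vR(\xi)\vso = (\xi-\xio)\int_0^1 \vR'(\xio + t(\xi-\xio))\vso\, dt$, and $\nrm{\vR(\xi)\vso}_2$ is bounded by $r := |\xi-\xio|$ times an exponential factor via $\nrmop{\vR'(\cdot)} \leq \tne e^{\tS|\imag(\cdot)|/2}$. Crucially, to retain the sharp coefficient $\nrm{\vR''(\xio)\vso}_2$ that appears in $\gxio^+(\rho)$ rather than a crude operator overestimate, I would split $\vR''(\xi)\vso = \vR''(\xio)\vso + [\vR''(\xi) - \vR''(\xio)]\vso$ and bound the second summand by the fundamental theorem applied to the (also holomorphic) map $\vR''$, so that its contribution is $O(r^2)$.

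Collecting the two pieces and bounding $\beta(\tS\imag(\xio), \tS\imag(\xi-\xio)) \leq e^{\tS\imag(\xio)} e^{\tS r}$ via $\beta(a,b) \leq e^{|a|+|b|}$ (valid since $\xio \in \Hp$), the total is at most $2\nrm{\vso}_2^2\, e^{\tS\imag(\xio)}\, e^{\tS r}$ times a polynomial in $r$ whose leading coefficients, after regrouping, are $\tne^{1/2}\nrm{\vR''(\xio)\vso}_2$ on $r$ and $2\tne^3$ on $r^2$. Requiring this to be at most $(1-\rho)\nrm{\vR'(\xio)\vso}_2^2$ is equivalent to $r e^{\tS r} \leq g\, e^{-\tS\imag(\xio)}/(2\nrm{\vso}_2^2)$, where $g$ is the positive root of
\[
2\tne^3 u^2 + \tne^{1/2}\nrm{\vR''(\xio)\vso}_2\, u - (1-\rho)\nrm{\vR'(\xio)\vso}_2^2 = 0,
\]
which a direct calculation identifies with $\gxio^+(\rho)$. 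Multiplying through by $\tS$ and inverting via the defining identity $W(x)e^{W(x)} = x$ of the Lambert $W$ function produces the radius stated in the lemma.

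The main obstacle is the cross-term analysis: one must isolate $\nrm{\vR''(\xio)\vso}_2$ by a Taylor-type split of $\vR''(\xi)$ around $\xio$ instead of settling for the crude bound $\nrmop{\vR''(\xi)} \leq \tne^2 e^{\tS|\imag(\xi)|/2}$, and must uniformly absorb all exponential factors appearing from $\beta$ and from the operator bounds into the single envelope $e^{\tS(\imag(\xio)+r)}$. Only after this bookkeeping does the implicit inequality in $r$ reduce cleanly to a product of a polynomial equation (yielding $\gxio^+(\rho)$) and a transcendental piece of the form $r e^{\tS r} \leq \text{const}$ solvable by the Lambert $W$ function.
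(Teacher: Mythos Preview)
Your high-level reduction to
\[
\bigl|\HWfo(\xi,\xi\adj)(\eta,\eta\adj) - \HWfo(\xio,\xio\adj)(\eta,\eta\adj)\bigr| \leq (1-\rho)\nrm{\vR'(\xio)\vso}_2^2
\]
is exactly right, and your use of Lemma~\ref{lem:residualsDerivativeBound} with $n=1$ for the bracket $\nrm{\vR'(\xi)\vso}_2^2-\nrm{\vR'(\xio)\vso}_2^2$ coincides with the paper.  The gap is in the cross term and in the bookkeeping that is supposed to produce $\gxio^+(\rho)$.

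The constant $\gxio^+(\rho)$ is, by construction, the positive root of
\[
2\tne^3 u^2 + \tne^{1/2}\nrm{\vR''(\xio)\vso}_2\,u - (1-\rho)\nrm{\vR'(\xio)\vso}_2^2 = 0
\]
in the variable $u=\gxio(\eta)^{1/2}$, where $\gxio(\eta)=2\nrm{\vso}_2^2\,|\eta|\,\beta(\tS\imag(\xio),\tS\imag(\eta))$.  The paper obtains this by applying Lemma~\ref{lem:residualsDerivativeBound} \emph{three} times, for $n=0,1,2$, which gives $\nrm{\vR(\xi)\vso}_2^2\leq \tne\gxio$, the bracket $\leq \tne^3\gxio$, and $\nrm{\vR''(\xi)\vso}_2^2\leq \nrm{\vR''(\xio)\vso}_2^2+\tne^5\gxio$; the sharp coefficient $\nrm{\vR''(\xio)\vso}_2$ is isolated via subadditivity of the square root on that last bound.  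Every term is then homogeneous in $\gxio$ or $\gxio^{1/2}$, and the quadratic in $\gxio^{1/2}$ drops out.  Only \emph{after} solving this quadratic does one substitute the bound $\gxio\leq 2\nrm{\vso}_2^2\,r\,e^{\tS(\imag(\xio)+r)}$, which is genuinely of the form $r e^{\tS r}\leq \mathrm{const}$ and is inverted by the Lambert $W$.

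Your route replaces the $n=0$ and $n=2$ applications of Lemma~\ref{lem:residualsDerivativeBound} by FTC bounds on $\vR(\xi)\vso$ and on $\vR''(\xi)\vso-\vR''(\xio)\vso$.  These scale as $r$ (not $\gxio^{1/2}\sim r^{1/2}$), carry $\nrm{\vso}_2$ to the first power, and produce the envelope $e^{\tS(\imag(\xio)+r)/2}$ from $\nrmop{\vR^{(n)}}\leq \tne^n e^{\tS|\imag|/2}$ rather than $e^{\tS(\imag(\xio)+r)}$.  After factoring out $2\nrm{\vso}_2^2 e^{\tS(\imag(\xio)+r)}$ the linear-in-$r$ coefficient you actually get is $\tne^3+\tne\nrm{\vR''(\xio)\vso}_2/(2\nrm{\vso}_2 e^{\tS(\imag(\xio)+r)/2})$, and the quadratic coefficient is $\tne^4/2$; neither matches your asserted $\tne^{1/2}\nrm{\vR''(\xio)\vso}_2$ and $2\tne^3$, so the identification with $\gxio^+(\rho)$ fails.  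Furthermore, even accepting your polynomial, an inequality of the shape $e^{\tS r}(a r + b r^2)\leq c$ is not equivalent to $re^{\tS r}\leq g$: the exponential multiplies both the $r$ and the $r^2$ term and there is no single substitution that reduces it to a Lambert-$W$-solvable expression.  The paper avoids this precisely because its quadratic is in $\gxio^{1/2}$, not in $r$, so the transcendental piece enters only at the very last step.
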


\begin{proof}
    For simplicity, define
    \[
        \gxio(\eta) := 2\nrm{\vso}^2|\eta|\beta(\tS\imag(\xio), \tS\imag(\eta)).
    \]
    From Lemma~\ref{lem:residualsDerivativeBound} we deduce the bounds
    \begin{align*}
        \nrm{\vR'(\xio + \eta)\vso}^2 &\geq \nrm{\vR'(\xio)\vso}^2 -  \tne^{3}\gxio(\eta)\\
        \nrm{\vR(\xio + \eta)\vso}^2 &\leq  \tne\gxio(\eta)\\
        \nrm{\vR''(\xio + \eta)\vso}^2 &\leq \nrm{\vR''(\xio)\vso}^2 +  \tne^{5}\gxio(\eta)
    \end{align*}
    where we used the fact that \(\vR(\xio)\vso = 0\). Using the subaditivity of the square-root on the non-negative reals we deduce the bound
    \begin{multline}\label{eq:hessianLowerBoundProof}
        \nrm{\vR'(\xio+\eta)\vso}_2^2 - \nrm{\vR(\xio+\eta)\vso}_2\nrm{\vR''(\xio+\eta)\vso}_2 \geq \\ 
        \nrm{\vR'(\xio)\vso}_2^2- \tne^3 \gxio(\eta) - (\tne\gxio(\eta))^{1/2}\nrm{\vR''(\xio)\vso}_2 - (\tne \gxio(\eta))^{1/2}(\tne^{5}\gxio(\eta))^{1/2} = \\
        \nrm{\vR'(\xio)\vso}_2^2 - (\tne\gxio(\eta))^{1/2}\nrm{\vR''(\xio)\vso}_2 - 2\tne^3\gxio(\eta) \geq \rho \nrm{\vR'(\xio)\vso}_2^2
    \end{multline}
    where we used~\eqref{eq:hessianAtTrueParameter}. The values of \(\eta\) for which the lower bound is positive must solve the inequality
    \[
        -\frac{1-\rho}{2}\frac{\nrmop{\vR'(\xio)\vso}^2}{\tne^3} + \frac{1}{2}\frac{\nrm{\vR''(\xio)\vso}}{\tne^{5/2}}\gxio(\eta)^{1/2} + \gxio(\eta) \leq 0.
    \]
    The positive root for this equation is precisely \(\gxio^+(\rho)\). Therefore, for the lower bound to be satisfied, the inequality 
    \[
        |\eta|\beta(\tS\imag(\xio), \tS\imag(\eta)) \leq \frac{1}{2\nrm{\vso}_2^2} \gxio^+(\rho)
    \]
    must be satisfied. Since the upper bound is positive, and the left-hand side is a continuous function of \(\eta\) equal to zero when \(\eta =0\) we conclude there is a small neighborhood around the origin for which the above holds. If we further assume that \(\xio + \eta \in \Hp\) then we may use an explicit expression for \(\beta\) to deduce that 
    \[
        \beta(\tS\imag(\xio), \tS\imag(\eta)) = e^{\tS \imag(\xio)} \int_0^1 e^{\th \tS \imag(\eta) }\,d\th \leq e^{\tS \imag(\xio)} e^{\tS |\eta| }. 
    \]
    Thus,
    \[
        \tS |\eta| e^{\tS |\eta|} \leq \frac{\tS e^{-\tS\imag(\xio)}}{2\nrm{\vso}_2^2} \gxio^+(\rho)
    \]
    from where the lower bound follows by identifying the Lambert \(W\) function. The upper bound follows from the fact that~\eqref{eq:hessianLowerBoundProof} implies that
    \[
        \nrm{\vR(\xio+\eta)\vso}_2\nrm{\vR''(\xio+\eta)\vso}_2 \leq (1 + \rho) \nrm{\vR'(\xio)\vso}_2^2
    \]
    from where a straightforward argument yields the bound.
\end{proof}

The exact same arguments yield the tighter bound implicit in the inequality
\begin{equation}\label{eq:radiusOfMonotonicityBound:loose}
    |\eta|e^{\tS\imag(\eta)/2}\sinh(\tS\imag(\eta)/2) \leq \frac{e^{-\tS \imag(\xio)}}{2\nrm{\vso}_2^2} \gxio^+(\rho)
\end{equation}
for any \(\xio + \eta \in \Hp\). A refined estimate yielding an implicit, but tighter bound can be found as follows. Since \(\xio\) is the true parameter, we can use the second-order expansion
\begin{align*}
    \frac{1}{2}\nrm{\vR(\xio + \eta)\vso}^2_2 &= \frac{1}{2} \real\left(\int_0^1 (\eta,\eta\adj)\nabla_W^2 f(\xio + \theta\eta, \xio\adj + \theta\eta\adj)(\eta,\eta\adj)d\theta  \right)\\
    &\leq \frac{1}{2}|\eta|^2 \int_0^1 (\nrm{\vR'(\xio + \th\eta)\vso}_2^2 + \nrm{\vR(\xio + \th\eta)\vso}_2 \nrm{\vR''(\xio + \th\eta)\vso}_2) d\theta\\
    &\leq \tne^2|\eta|^2\nrm{\vso}_2^2\beta(\tS \imag(\xio), \tS \imag(\eta))\\
    &= \frac{1}{2}\tne^2 |\eta|\gxio(\eta).
\end{align*}
In this case, we obtain
\begin{multline}\label{eq:radiusOfMonotonicityBound:tighter}
    \nrm{\vR'(\xio+\eta)\vso}_2^2 - \nrm{\vR(\xio+\eta)\vso}_2\nrm{\vR''(\xio+\eta)\vso}_2 \geq \\ 
    \nrm{\vR'(\xio)\vso}_2^2- \tne^3 \gxio(\eta) - (\tne^2|\eta|\gxio(\eta))^{1/2}\nrm{\vR''(\xio)\vso}_2 - (\tne^2|\eta|\gxio(\eta))^{1/2}(\tne^{5}\gxio(\eta))^{1/2} = \\
    \nrm{\vR'(\xio)\vso}_2^2 - \tne|\eta|^{1/2}\gxio(\eta)^{1/2}\nrm{\vR''(\xio)\vso}_2 - \tne^3(1+ |\eta|^{1/2})\gxio(\eta) \geq \rho  \nrm{\vR'(\xio)\vso}_2^2.
\end{multline}
However, the lower bound is no longer a polynomial, and thus numerical methods must be used to estimate the radius.

\subsubsection{Local convergence}\label{apx:localConvergence}

We can now prove Theorem~\ref{thm:localExactRecovery}. Let \(\eta := \xi - \xio\). Suppose that \(|\eta| \leq r_0\). From
\[
    \nabla_W \fo(\xi,\xi\adj) - \nabla_W \fo(\xio,\xio\adj) = \int_0^1 \nabla^2_W \fo(\xio + \th\eta, \xio\adj + \th\eta\adj)(\eta, \eta\adj) d\th
\]
and the upper bound in Lemma~\ref{lem:radiusOfMonotonicity} we deduce that
\[
    \nrmW{\nabla_W \fo(\xi,\xi\adj) - \nabla_W \fo(\xio,\xio\adj)} \leq \Lo \nrmW{(\eta, \eta\adj)}
\]
for
\[
    \Lo \geq (2 + \rho)\nrm{\vR'(\xio)\vso}_2^2.
\]
Similarly,
\[
    (\nabla_W \fo(\xi,\xi\adj)\adj - \nabla_W \fo(\xio,\xio\adj)\adj)(\eta, \eta\adj) = \int_0^1 \HWfo(\xio + \th\eta, \xio\adj + \th\eta\adj)(\eta, \eta\adj) d\th \geq \muo\nrmW{(\eta,\eta\adj)}^2
\]
for
\[
    \muo \leq \rho\nrm{\vR'(\xio)\vso}_2^2.
\]
Consequently,
\begin{align*}
    \nrmW{(\xi, \xi\adj)^+ - (\xio,\xio\adj)}^2 &= \nrmW{(\xi, \xi\adj) - (\xio, \xio\adj) - \alpha (\nabla_W \fo(\xi,\xi\adj)\adj - \nabla_W \fo(\xio,\xio\adj)\adj)}^2\\
    &= \nrmW{(\xi - \xio,\xi\adj - \xio\adj)}^2 \\
    &\quad -\: 2\alpha (\nabla_W \fo(\xi,\xi\adj)\adj - \nabla_W \fo(\xio,\xio\adj)\adj)(\xi - \xio,\xi\adj-\xio\adj) \\
    &\quad +\:\alpha^2\nrmW{\nabla_W \fo(\xi,\xi\adj)\adj - \nabla_W \fo(\xio,\xio\adj)\adj}^2\\
    &\leq (1 - 2\muo \alpha + \Lo \alpha^2)\nrmW{(\xi - \xio,\xi\adj -\xio\adj)}^2
\end{align*}
where we used the fact that \(\nabla_W \fo(\xio,\xio\adj) = 0\)~(cf. the proof of Theorem~2.1.14 in~\cite{nesterov_lectures_2018}). The distance is strictly decreasing if \(1 - 2\muo \alpha + \Lo \alpha^2 < 1\) whence \(\alpha < 2\muo/\Lo\) proving the theorem. The particular bound provided is obtained by replacing the bounds for \(\muo\) and \(\Lo\) in the above.

\subsection{Proof of Theorem~\ref{thm:localRecoveryImaging}}\label{apx:localRecoveryImaging}

Suppose that there exists another \(\phi\in \Cg\) that is a global minimizer. By Theorem~\ref{thm:solutionSetStructure} we have that
\[
    v\in V:\,\, \phi(v) \in \phio(v) + \Xi(\vco).
\]
Let \(V_0\subset V\) be the set of voxels on which \(\phi\) and \(\phio\) coincide, which is non-empty by hypothesis. By definition of \(r\) we have that
\[
    v\in V\setminus V_0:\,\, |\phi(v) - \phio(v)| \geq r.
\]
Let \(v\in V\setminus V_0\) be any point with a neighbor \(\vo\in V_0\). We consider two cases. First, suppose that \(v - \vo = e_i\) where \(e_i\) is the vector with its \(i\)-th component equal to one and the remaining ones equal to zero. Then
\[
    \epsg(\vo) \geq |\phi(\vo + e_i) - \phi(\vo)| = |\phi(\vo + e_i) - \phio(\vo + e_i) + \phio(\vo + e_i) - \phio(\vo)| \geq r - \epsg(\vo)
\]
which is not possible as \(\epsg(v) < r/2\) for any \(v\in V\). Second, suppose that \(v - \vo = -e_i\). In this case we have that \(e_i = \vo - v\) and we write
\[
    \epsg(v) \geq |\phi(v + e_i) - \phi(v)| = |\phio(v + e_i) - \phio(v) + \phio(v) - \phi(v)| \geq r - \epsg(\vo)
\]
which is once again a contradiction. The claim then follows. 

\end{document}